\documentclass[12pt]{iopart}
\eqnobysec
\usepackage[mathscr]{euscript}
\expandafter\let\csname equation*\endcsname\relax
\expandafter\let\csname endequation*\endcsname\relax
\usepackage{amsmath, amsthm, amssymb}
\usepackage{enumerate}
\usepackage{setspace}
\usepackage{natbib}

\newtheorem{theorem}{Theorem}[section]

\newtheorem{proposition}[theorem]{Proposition}

\newenvironment{remark}[1][Remark]{\begin{trivlist}
\item[\hskip \labelsep {\bfseries #1}]}{\end{trivlist}}

\newcommand{\derv}[1]{\frac{\partial}{\partial #1}}

\newcommand{\deriv}[2]{\frac{\partial #1}{\partial #2}}

\newcommand{\beqn}{\begin{equation}}
\newcommand{\eeqn}{\end{equation}}
\newcommand{\beqnar}{\begin{eqnarray}}
\newcommand{\eeqnar}{\end{eqnarray}}




\newcommand{\contr}{\,\lrcorner\,}
\usepackage{hyperref}
\begin{document}

\title[Vorticity and Symplecticity in Lagrangian Gas Dynamics] 
{Vorticity  and Symplecticity in Multi-Symplectic, Lagrangian Gas Dynamics} 
\author{G.M. Webb${}^1$ and S.C. Anco${}^2$} 
\address{${}^1$ Center for Space Plasma and Aeronomic Research, The University of Alabama in Huntsville, 
 Huntsville AL 35805, USA}

\address{${}^2$Department of Mathematics, Brock University, 
St. Catherines, ON L2S 3A1 Canada}

 
\ead{gmw0002@uah.edu}


\begin{abstract}

The Lagrangian, multi-dimensional, ideal, compressible  gasdynamic equations are written 
in a multi-symplectic form, in which the Lagrangian fluid labels, $m^i$ (the Lagrangian
mass coordinates) and time $t$ are the independent variables, and in which the 
Eulerian position of the fluid element ${\bf x}={\bf x}({\bf m},t)$ 
and the entropy $S=S({\bf m},t)$ are the dependent variables. Constraints in the 
variational principle are incorporated by means of Lagrange multipliers. 
The constraints are: the entropy 
advection equation $S_t=0$, the Lagrangian map equation ${\bf x}_t={\bf u}$ 
where ${\bf u}$ is the fluid velocity, and the mass continuity equation which has the form 
$J=\tau$ where $J=\det(x_{ij})$ is the Jacobian of the Lagrangian map in which 
$x_{ij}=\partial x^i/\partial m^j$ and $\tau=1/\rho$ is the specific volume 
of the gas. 
The internal energy per unit volume of the gas $\varepsilon=\varepsilon(\rho,S)$ 
corresponds to a non-barotropic gas. 
The Lagrangian is used to define multi-momenta, 
and to develop  de-Donder Weyl Hamiltonian equations. The de Donder Weyl 
equations are cast in  a multi-symplectic form.
The pullback conservation laws and the symplecticity conservation laws  
are obtained.   
One class of symplecticity conservation laws give rise 
to vorticity and potential vorticity type conservation laws, and another class of symplecticity 
laws are related to derivatives of the Lagrangian energy conservation law with respect 
to the Lagrangian mass coordinates $m^i$. We show that the vorticity-symplecticity  
laws can be derived by a Lie dragging method, and also by using Noether's second theorem 
and a fluid relabelling symmetry  which is a divergence symmetry of the action.    
We obtain the Cartan-Poincar\'e form describing the equations  
and we discuss a set of differential forms representing the equation system.  

\end{abstract}

\pacs{02.30.JR, 02.40Yy, 47.10Df, 47.10A, 47.10.ab}
\noindent{{\it J. Phys. A., Math. and Theor.}, {\bf 49} (2016) 075501, doi:10.1088/1751-8133/49/7/075501}
\noindent{\it Keywords\/}: gas dynamics, symplecticity, multi-symplectic, conservation laws, vorticity 

\maketitle


\section{Introduction}
Multi-momentum, Hamiltonian systems were developed by de Donder (1930) and Weyl (1935). 
They obtained a generalization of Hamiltonian mechanics by using multi-momentum 
maps, in which there can be more than one generalized momentum 
corresponding to each canonical coordinate $q$. In this approach time $t$ in a fixed 
reference frame is not the only evolution variable in the equations (e.g. the 
system can also be thought of as evolving in the space variables). The de Donder Weyl 
Hamiltonian equations apply to action principles in which the Lagrangian 
$L=L({\bf x},\varphi^i,\partial\varphi^i/\partial x^\mu)$ where ${\bf x}$ are the independent 
variables and the $\varphi^k$ are the dependent variables ($1\leq k\leq m$ say, $k$ integer), 
which includes at least two independent partial derivatives $\partial\varphi^k/\partial x^s$, 
($1\leq s\leq n$, $n\geq 2$). 
Webb (2015) cast  the equations of ideal, 1D, Lagrangian gas dynamics in the 
de Donder-Weyl Hamiltonian form. In this development,  
the dependent variables are the Eulerian particle position $x=x(m,t)$ 
and gas entropy $S=S(m,t)$ where $S_t=0$. The multi-momenta for the system are 
the variables: $\pi_x^t=\partial L/\partial x_t$ and $\pi_x^m=\partial L/\partial x_m$ 
and $\pi_S^t=\partial L/\partial S_t$ where $L$ is the Lagrangian for the system. 
The de Donder-Weyl Hamiltonian equations, obtained by using a generalized Legendre transformation,
  were also cast in multi-symplectic form (see e.g. Hydon (2005) for a clear description 
of multi-symplectic systems of differential equations). 

There is an extensive literature on multi-momentum and multi-symplectic systems (e.g. 
Kanatchikov (1993,1997,1998), Forger et al. (2003), Forger and Gomes (2013), 
Forger and Romero (2005), Forger and Salles (2015), 
Gotay (1991a,b),
Gotay et al. (2004a,b), Roman Roy (2009), Marsden et al. (1986), Marsden and Shkoller (1999), 
Carenina et al. (1991),
Bridges et al. (2005,2010) and Cantrijn et al. (1999)). 

Anco and Dar (2009) carry out a Lie symmetry 
analysis and classification of conservation laws of compressible isentropic flow in $n>1$ 
spatial dimensions. Anco and Dar (2010) extended their (2009) analysis to the case 
of non-isentropic, inviscid flow in $n>1$ spatial dimensions. They give both the symmetries and conservation 
laws due to the ten Galilean point symmetries of the equations, as well as conservation laws
associated with helicity and vorticity. Anco, Dar and Tufail (2015) generalize their symmetry analysis 
to determine conserved integrals for inviscid, compressible fluid flow in Riemannian manifolds, 
for moving domains, in which Killing's equations and curl free homothetic Killing vectors 
play an important role. Cheviakov (2014) has derived new conservation laws for fluid systems involving 
vorticity and vorticity related equations (potential type systems)
including magnetohydrodynamics (MHD) and Maxwell's equations. Cheviakov and Oberlack (2014) 
derive generalized Ertel's theorems and infinite heirarchies of conserved 
quantities for the Euler and Navier Stokes equations. Kelbin et al. (2013) 
obtain new conservation laws in 
helically symmetric, plane and rotationally symmetric flows. Webb et al. (2014a,b,2015)  
and Webb and Mace (2015)  obtained advected invariant conservation laws in MHD. 

 Cotter et al. (2007) derived multi-symplectic equations for fluid type systems 
by using the momentum map associated with the constraint equations and Clebsch variables of the 
system. Cotter et al. (2007) used the Euler-Poincar\'e approach to Hamiltonian systems 
developed by Holm et al. (1998). Multi-symplectic systems admit conservation laws associated with 
the pullback of the differential forms describing the system to the base manifold, and also satisfy 
the symplecticity conservation laws associated with the conservation of phase space 
following the flow (e.g. Hydon (2005), Bridges et al. (2010)). Noether's theorem for multi-symplectic 
systems is described by Hydon (2005) and Bridges et al. (2010). 
Bridges et al. (2005) show how Ertel's theorem
for ideal, incompressible fluids arises in a multi-symplectic form of the ideal fluid equations.
 Webb et al. (2014c,2015) gave a multi-symplectic formulation of 
MHD by using Clebsch variables in an Eulerian variational principle. 
 Webb and Mace (2015) used Noether's second theorem  
 and a non-field aligned fluid relabelling symmetry to derive a 
potential vorticity type conservation law for MHD. Holm at al. (1983) derived 
 Hamiltonian fluid equations using 
Lagrangian and Eulerian Poisson bracket formulations, semi-direct product Lie algebras, 
and non-canonical Poisson brackets (e.g. Morrison and Greene (1980,1982), Morrison (1982)). 
  Mansfield (2010) and Goncalves and Mansfield 
(2012,2014)  extended the work of Fels and Olver (1998,1999) to 
develop an invariant form of Noether's theorem, using moving frames.

 Our basic variational approach uses Lagrange multipliers to impose constraints on the action. These include, the mass continuity equation, and the entropy advection equation. 
A recent account of the use of Clebsch variables to represent rotational flows, is the work of 
Fukugawa and Fujitani (2010). They obtain Clebsch expansions for the fluid velocity 
${\bf u}$ of the form:
\begin{equation}
{\bf u}=\nabla\phi- r\nabla S-\sum_{\alpha=1}^2 \beta_\alpha \nabla A_\alpha, \label{eq:1.1}
\end{equation}
where the Lagrange multipliers $\phi$ and $r$ ensure that the mass continuity equation and 
the entropy advection equation are satisfied. The vorticity of the fluid from (\ref{eq:1.1}) 
is given by:
\begin{equation}
\boldsymbol{\omega}=\nabla\times {\bf u}=-\nabla r\times\nabla S 
-\sum_{\alpha=1}^2 \nabla\beta_\alpha \times\nabla A_\alpha. \label{eq:1.2}
\end{equation}
Equation (\ref{eq:1.2}) shows in general, that the vorticity $\boldsymbol{\omega}\neq 0$
for isentropic flows, in which $S=const.$. Fukugawa and Fujitani (2010) show that the 
$\beta_\alpha\nabla A_\alpha$ terms in the Clebsch expansion (\ref{eq:1.1}) for ${\bf u}$ 
results from  requiring that the endpoints of the variational path, described by the intersection
of the three surfaces $A_i({\bf x},t)=const.$, ($1\leq i\leq 3$) is required to have zero 
variation $\delta A_i=0$ at the endpoints at the initial and final times $t=t_{init}$ 
and $t=t_{final}$. The $A_i$ are functions of the Lagrange labels, ${\bf a}$ and are advected 
with the flow. The Eulerian density $\rho({\bf x},t)=\rho_o({\bf a}) j$ where $j=\partial(A_1,A_2,A_3)/\partial (x,y,z)$ is the Jacobian of the transformation of the labels $A_i$ and the 
Eulerian position coordinates $(x,y,z)$. The sum in (\ref{eq:1.1}) is the Lin constraint 
term, associated with fluid spin in the absence of entropy gradients. 

Yoshida (2009) studied the Clebsch expansion ${\bf u}=\nabla\phi+\alpha\nabla \beta$
and the completeness of the expansion. The expansion of an arbitrary vector field is incomplete
if the field cannot be expanded in the form (\ref{eq:1.1}). He showed that the generalized 
Clebsch expansion:
\begin{equation}
{\bf u}=\nabla\phi+\sum_{j=1}^\nu \alpha_j\nabla \beta^j, \label{eq:1.3}
\end{equation}
is complete in general if $\nu=n-1$, where $n$ is the number of independent variables.
But, if it is necessary 
to control the boundary values of $\phi$, $\alpha_j$ and $\beta^j$ (e.g. in order to 
determine them uniquely), then $\nu=n$. Russo and Smereka (1999) use a Clebsch description of 
incompressible fluid dynamics using gauge transformations
for the potential $\phi$ in the equations. 

Another approach to Clebsch expansions  
(\ref{eq:1.1}) for ${\bf u}$ is to use gauge field theory (e.g. Kambe (2007,2008), see also 
Jackiw (2002) for the application of gauge field theory to fluid dynamics). In the 
latter approach (Kambe (2008)), the $\beta_\alpha\nabla A_\alpha$ terms 
 in the Clebsch expansion (\ref{eq:1.1}) 
are due to the fluid relabelling symmetries corresponding to rotations in Lagrange label space.
In this approach, the Lagrange multipliers used in the variational principle act as 
gauge potentials, and there are in general gauge transformations that leave ${\bf u}$ invariant.
A related issue for Clebsch potentials is their multi-valued nature. For example, 
for the MHD topological soliton (e.g. Kamchatnov 1982; Semenov et al. 2002) the magnetic field 
induction ${\bf B}=\nabla\times{\bf A}$ has a nontrivial topological structure, which is related 
to the Hopf fibration.  
Semenov et al. (2002) derive complicated magnetic field structures, in which the magnetic vector 
potential has the form ${\bf A}=\alpha\nabla\beta+\nabla K$, in which the potential $K$ 
is not a global potential which has jumps and singularities. Thus, one can obtain magnetic field
structures in which the field lies on a Moebius band. Similar non-trivial 
topological structures with non-global magnetic vector potentials and Clebsch 
potentials arise in the description of the magnetic monopole field (Urbantke 2003).
 
Our analysis uses Clebsch potentials in a Lagrangian variational principle. However, 
we do not need the Clebsch potential form for the fluid velocity in our analysis, 
since we stick with the Lagrangian form of the variational principle. It turns 
out that the Lin constraint terms in the variational principle are decoupled 
from the other Euler Lagrange equations in the Lagrangian variational principle. We discuss 
the connection of our Lagrangian variational principle to an equivalent 
Eulerian variational principle in Appendix A. We impose the condition $\partial x^k(a,t)/\partial t=u^k$ in our variational principle, which is reminiscent of the work of Skinner and Rusk (1983a,
1983b) on generalized Hamiltonian dynamics. The details of our variational approach are described
in Section 3 of the paper. 
  
The main aim of this paper is to extend the multi-symplectic, Lagrangian equations
for  fluid dynamics obtained by Bridges et al. (2005) 
to more general equations of state for the compressible gas dynamics case.   
 Like Bridges et al. 
(2005), we study the connection between the pullback and symplecticity 
conservation laws and vorticity and potential vorticity. 
Bridges et al. (2005)  derived potential vorticity conservation laws  
 associated with the fluid relabelling symmetry (e.g. Padhye and Morrison (1996a,b), Padhye (1998)). 

Section~2 gives the Eulerian fluid equations, 
and a physical discussion of   
 the interaction between the flow kinetic energy and internal energy of the gas. 
Section~3 provides  a 
Lagrangian action principle, in which  Lagrange multipliers are used
 to ensure that the entropy is advected with the flow, and to formally
define the fluid velocity as $u^i=\partial x^i({\bf m},t)/\partial t$. 
An external gravitational potential $\Phi({\bf x})$ 
is included in the Lagrangian to take into account external gravitational fields (e.g. 
 in stellar wind theory, $\Phi({\bf x})$ would be the gravitational potential of the star).  
Because of the non-isobaric equation of state, 
 the conservation laws can be nonlocal, 
since they involve the Lagrange multiplier $r$ used to ensure $dS/dt=0$ in the variational principle.
The variable $r$ is essentially a Clebsch potential 
(e.g. Zakharov and Kuznetsov (1997), Morrison (1998)), which is a nonlocal potential 
not usually regarded as a part of the fluid equations. 
The standard infinite dimensional Hamiltonian functional formulation  and 
Poisson bracket (e.g. Morrison (1998)) is discussed. Section~4 
introduces the de Donder Weyl  multi-momentum formulation.  The multi-symplectic, Lagrangian 
equations, and the pullback and symplecticity conservation laws are obtained in Section~5.  
The symplecticity conservation laws are used to derive Ertel's theorem.
The vorticity-symplecticity conservation laws are also obtained 
 by using a Lie dragging approach (e.g. Tur and Yanovsky (1993),Webb et al. (2014a)). 
The vorticity
flux component that is independent of entropy gradients is Lie dragged by the flow, and gives rise 
to a conservation law analogous to Faraday's law for the advection of magnetic flux in MHD. 
The vorticity-symplecticity  conservation laws  
 are also derived by using Noether's second theorem, in conjunction with  a fluid relabelling 
symmetry due  to mass conservation and by using a divergence symmetry of the 
action (gauge symmetry). 
Section~7 discusses variational principles and the 
Cartan-Poincar\'e form equations for the multi-symplectic equations of Section~5.
A class of exterior differential forms representing the equation system  
 is obtained
(see e.g. Harrison and Estabrook (1971)).  
It is  shown that the ideal of forms 
extracted from the variational principles is a closed ideal of forms that represent the 
multi-symplectic system. 

In Appendix A, we discuss the Lagrangian variational principle
on which our analysis is based. We discuss both the algebraic form of the mass continuity 
equation $J=\tau$ where $J=\det(x_{ij})$ is the determinant of the Lagrangian map, 
and $\tau=1/\rho$ is the specific volume, as well as the derivative form of the mass 
continuity equation $d/dt(J-\tau)=0$.  
 Appendix B  discusses the form of the multi-symplectic equations 
for the case of $n=2$ independent
Lagrangian mass coordinates. Appendix C gives some formulas from Webb et al. (2014c) 
used in Noether's theorem, and Appendix D discusses an application of the Eulerian 
Clebsch variational principle (Zakharov and Kuznetsov (1997)) to verify a conservation law.

Section~8 concludes with a summary 
and discussion. We note that the vorticity-symplecticity conservation law for a non-barotropic 
gas is non-local as it involves the nonlocal Clebsch variable $r$. This implies that the 
 pullback conservation law associated with ${\bf m}$-translation invariance, 
and the vorticity-symplecticity conservation laws are nonlocal. These results are 
clearly of interest in atmospheric dynamics for vortex fluid motions (e.g. tornadoes 
and Rossby waves) where baroclinicity (non-alignment of the gas pressure and density gradients)
will generate vorticity.   
  
\section{Fluid dynamics model}
The time dependent, ideal, inviscid equations of Eulerian gas dynamics, consist of the mass continuity
equation, the Euler momentum equation for the gas, and an equation of state for the gas. 
The mass continuity equation is: 
\begin{equation}
\deriv{\rho}{t}+\nabla{\bf\cdot}(\rho {\bf u})=0. \label{eq:2.1}
\end{equation}
The Euler momentum equation for the fluid can be written in the form:
\begin{equation}
\left(\deriv{\bf u}{t}+{\bf u}{\bf\cdot}\nabla {\bf u}\right)=-\frac{1}{\rho}\nabla p-\nabla\Phi({\bf x}), 
\label{eq:2.2}
\end{equation}
where $\Phi({\bf x})$ is the gravitational potential of an external gravitational field. 
The entropy $S$ is advected with the flow, i.e. 
\begin{equation}
\left(\derv{t}+{\bf u}{\bf\cdot}\nabla\right) S=0. \label{eq:2.3}
\end{equation}
Here $p$, $\rho$, ${\bf u}$ and $S$ are the pressure, density, fluid velocity and entropy 
of the gas respectively.

Equations ({\ref{eq:2.1})-(\ref{eq:2.3}) are supplemented by an equation of state for the gas 
(e.g. $p=p(\rho,S)$), which is related to first law of thermodynamics
by the equation:
\begin{equation}
TdS=dQ=de+p d\tau\quad\hbox{where}\quad \tau=\frac{1}{\rho}, \label{eq:2.4}
\end{equation}
is the specific volume for the gas. For ideal gases $TdS/dt=dQ/dt=0$ where 
$d/dt=\partial/\partial t+{\bf u}{\bf\cdot}\nabla$ is the Lagrangian time derivative 
moving with the flow. The internal energy per unit mass $e$ is related 
to the internal energy per unit volume $\varepsilon(\rho,S)$ by the equation 
$e(\tau,S)=\tau\varepsilon(\rho,S)$. 
Equation  (\ref{eq:2.4}) written in the form:
\begin{equation}
TdS=\frac{1}{\rho}\left(d\varepsilon-wd\rho\right)\quad\hbox{where}\quad w=\frac{\varepsilon+p}{\rho}, 
\label{eq:2.5}
\end{equation}
defines the gas enthalpy $w$. For $\varepsilon=\varepsilon(\rho,S)$, (\ref{eq:2.5}) gives:
\begin{equation}
\rho T=\varepsilon_S,\quad w=\varepsilon_\rho, \quad p=\rho\varepsilon_\rho-\varepsilon. \label{eq:2.6}
\end{equation}
From (\ref{eq:2.4}) we also obtain:
\begin{equation}
TdS=dw-\tau dp\quad\hbox{or}\quad -\frac{1}{\rho}\nabla p=T\nabla S-\nabla w. 
\label{eq:2.7}
\end{equation}

From (\ref{eq:2.5}), the entropy advection equation $TdS/dt=0$ can be written in the form:
\begin{equation}
\frac{d\varepsilon}{dt}-w\frac{d\rho}{dt}=0. \label{eq:2.8}
\end{equation}
Using the mass continuity equation $(1/\rho)d\rho/dt=-\nabla{\bf\cdot}{\bf u}$ 
in (\ref{eq:2.8}), the comoving energy equation (\ref{eq:2.8}) reduces to its 
Eulerian form:
\begin{equation}
\deriv{\varepsilon}{t}+\nabla{\bf\cdot}(\rho {\bf u}w)-{\bf u}{\bf\cdot}\nabla p=0. \label{eq:2.9}
\end{equation}
Taking the scalar product of the Euler momentum equation with $\rho {\bf u}$ 
and using the mass continuity equation (\ref{eq:2.1}) we obtain the kinetic energy and gravitational 
energy equation:
\begin{equation}
\derv{t}\left(\frac{1}{2}\rho u^2+\rho\Phi({\bf x})\right) 
+\nabla{\bf\cdot}\left[\rho {\bf u}\left(\frac{1}{2} u^2
+\Phi({\bf x})\right)\right]+{\bf u}{\bf\cdot}\nabla p=0. 
\label{eq:2.10}
\end{equation}
Adding (\ref{eq:2.9}) and (\ref{eq:2.10}) gives the total energy equation for the system in the form:
\begin{equation}
\derv{t}\left(\frac{1}{2}\rho u^2+\varepsilon(\rho,S)+\rho\Phi({\bf x})\right)
+\nabla{\bf\cdot}\left[\rho {\bf u}\left(\frac{1}{2} u^2+\Phi({\bf x})+w\right)\right]=0. \label{eq:2.11}
\end{equation}
Although (\ref{eq:2.11}) is expected,
both on physical grounds and also on the basis of Noether's first theorem, 
the present discussion emphasizes the intricate coupling between the internal energy 
equation (\ref{eq:2.9}) and the kinetic and gravitational energy equation via the pressure work 
terms $\pm {\bf u}{\bf\cdot}\nabla p$ in (\ref{eq:2.9})-(\ref{eq:2.10}). 

In the next section we describe the Lagrangian action principle approach to the gas dynamic equations. 

\section{Lagrangian gas dynamics}
The gas dynamic equations (\ref{eq:2.1})-(\ref{eq:2.11}) can be derived by requiring 
that the action:
\begin{equation}
{\cal A}=\int{\cal L}\ d^3x dt=\int{\cal L}_0\ d^3x_0dt\equiv \int{\cal L}_m\ d^3mdt, 
\label{eq:3.1}
\end{equation}
is stationary. In (\ref{eq:3.1}) the Lagrangian map is used in which the  
differential equations: $d{\bf x}/dt={\bf u}({\bf x},t)$ are formally integrated for a given 
fluid velocity 
${\bf u}({\bf x},t)$ to obtain the Lagrangian map equations: ${\bf x}={\bf x}({\bf x}_0,t)$ where 
${\bf x}={\bf x}_0$ at time $t=0$. The  map is assumed to be a diffeomorphism, 
i.e. it is 1-1 and invertible, with inverse 
 ${\bf x}_0={\bf x}_0({\bf x},t)$, in which ${\bf x}_0$ is advected with the background flow, 
i.e. 
\begin{equation}
\deriv{{\bf x}_0}{t}+{\bf u}{\bf\cdot}\deriv{{\bf x}_0}{\bf x}=0. \label{eq:3.2}
\end{equation}
We use Lagrange labels ${\bf m}={\bf m}({\bf x}_0)$ so that the 
mass continuity equation 
can be written in the form:
\begin{equation}
\rho d^3x=\rho_0({\bf x}_0) d^3 x_0=d^3 m, \label{eq:3.3}
\end{equation}
which implies
\begin{equation}
\rho J_0=\rho_0\quad\hbox{and}\quad \rho J=1, \label{eq:3.4}
\end{equation}
where
\begin{equation}
J_0=\det\left(\partial x^i/\partial x_0^j\right)\quad \hbox{and}
\quad J=\det\left(\partial x^i/\partial m^j\right). 
\label{eq:3.5}
\end{equation} 
The labels ${\bf m}$ are  Lagrangian mass coordinates, 
and $J_0$ and $J$ are the Jacobians of the Lagrangian maps: ${\bf x}={\bf x}({\bf x}_0,t)$ 
and ${\bf x}={\bf x}({\bf m},t)$ respectively.  The Lagrange labels ${\bf m}={\bf m}({\bf x}_0)$ 
are advected with the background fluid flow (i.e. they satisfy (\ref{eq:3.2}) but with 
${\bf x}_0\to {\bf m}$).  (\ref{eq:3.4}) is equivalent to the Lagrangian mass continuity 
equation. 

For the  model (\ref{eq:2.1})-(\ref{eq:2.11}), the Lagrangian ${\cal L}$ is given by:
\begin{equation}
{\cal L}=\frac{1}{2}\rho u^2-\varepsilon(\rho,S)-\rho\Phi({\bf x}). \label{eq:3.6}
\end{equation}
Using (\ref{eq:3.1})-(\ref{eq:3.6}) gives:
\begin{equation}
{\cal L}_m=\frac{\cal L}{\rho}=\frac{1}{2} u^2-e(\tau,S)-\Phi({\bf x})\quad\hbox{where}\quad 
e(\tau,S)=\frac{\varepsilon(\rho,S)}{\rho}, 
\label{eq:3.7}
\end{equation}
for the Lagrange density in Lagrangian mass coordinates, where 
\begin{equation}
{\bf u}=\deriv{{\bf x}({\bf m},t)}{t}\quad\hbox{and}\quad \rho=\frac{1}{J},
 \label{eq:3.8}
\end{equation}
give  ${\bf u}$ and $\rho$ in terms of the Lagrangian map.

If $x^i$ and $m^j$ are Cartesian coordinates, then the equations:
\begin{align}
&x_{ij}y_{jk}=\delta_{ik},\quad x_{ij}=\deriv{x^i}{m^j}, \quad y_{jk}=\deriv{m^j}{x^k},\nonumber\\
&y_{ij}=\frac{A_{ji}}{J},\quad A_{ij}=\hbox{cofac}(x_{ij}), \label{eq:3.9}
\end{align}
describe the derivatives of the map with respect to ${\bf x}$ and ${\bf m}$.
For the case of $n=3$ space dimensions, $A_{ij}=\rm{cofac}(x_{ij})$ is given by: 
\begin{equation}
A_{ij}=\frac{1}{2}\epsilon_{iab}\epsilon_{jpq}x_{ap}x_{bq}, \label{eq:3.10}
\end{equation}
where $\epsilon_{ijk}$ is the Levi-Civita tensor density (e.g Newcomb (1962), Webb et al. (2005)). 
Other formulae 
for $A_{ij}$ apply in other cases (e.g. for 2 space dimensions). 
These relations can be generalized to generalized coordinates $q^i=q^i({\bf m},t)$, 
but in that case the metric $g_{ij}={\bf e}_i{\bf\cdot}{\bf e}_j$ is important in describing 
the system 
where ${\bf e}_i=\partial {\bf q}/\partial x^i$ are holonomic base vectors. In the present analysis 
we use Cartesian coordinates, ($q^0=t$ and $q^i=x^i$ ($1\leq i\leq n$) are Cartesian coordinates). 
A more general formulation, would use generalized coordinates
and possibly the variational bi-complex.  

We use  the contravariant base vectors 
${e}_\alpha=\partial{\bf x}/\partial m^\alpha$ and the
 dual covariant base vectors ${\bf e}^\alpha=\partial m^\alpha/\partial {\bf x}$. 
One can show that 
\begin{equation}
{\bf e}_\alpha\times {\bf e}_\beta=J \epsilon_{\alpha\beta\gamma}  {\bf e}^\gamma, 
\quad {\bf e}^\alpha\times{\bf e}^\beta=j\epsilon_{\alpha\beta\gamma} {\bf e}_\gamma, 
\quad 1\leq \alpha,\beta\leq n, \label{eq:3.11}
\end{equation}
where $j=\det(y_{ij})=1/J$. 

For the case $n=2$ (i.e. for 2 Cartesian space dimensions), the co-factor matrix $A_{ij}$ is given by:
\begin{equation}
{\sf A}_{ij}=\left(\begin{array}{cc}
x_{22}& -x_{21}\\
-x_{12}& x_{11}
\end{array}
\right). \label{eq:3.12}
\end{equation}

The Lagrangian mass continuity equation has two aspects. From (\ref{eq:3.3})-(\ref{eq:3.4}) we require
\begin{equation}
\rho J=1\quad\hbox{or}\quad J-\tau=0,  \label{eq:3.13}
\end{equation}
We also require that 
\begin{equation}
\frac{d}{dt}(\tau-J)=\tau_t-\frac{dJ}{dt}=0. \label{eq:3.14}
\end{equation}
The latter equation, can be written in the form:
\begin{equation}
\tau_t-\deriv{J}{x_{ij}}\deriv{x_{ij}}{t}=\tau_t-A_{ij}\deriv{u^i}{m^j}=0, \label{eq:3.15}
\end{equation}
(note that $\partial x_{ij}/\partial t=\partial u^i/\partial m^j$). 
 Taking into account the constraints (\ref{eq:3.13})-(\ref{eq:3.15}) we introduce 
the constrained Lagrangian:
\begin{align}
\ell_m=&\frac{1}{2} u^2-e(\tau,S) -\Phi({\bf x}) +r \frac{dS}{dt}
+\lambda^k \frac{d\mu^k}{dt}\nonumber\\
&+\Lambda^i\left(u^i-\deriv{x^i}{t}\right)+\nu(J-\tau)
+\zeta\frac{d}{dt}\left(\tau-J\right),
 \label{eq:3.16}
\end{align}
where the $\mu^k$ correspond to the so-called Lin constraints (e.g.  
 Holm and Kupershmidt (1983a,b)). 
The continuity constraint terms in (\ref{eq:3.16}) can be written in the form:
\begin{equation}
(\nu+\zeta_t)(J-\tau)+\frac{d}{dt}\left[\zeta(\tau-J)\right], \label{eq:3.16aa}
\end{equation}
Using (\ref{eq:3.16aa}) in (\ref{eq:3.16}) it follows that the Lagrangian $\ell_m$ in 
(\ref{eq:3.16}) can be replaced by 
\begin{align}
\ell_m=&\frac{1}{2} u^2-e(\tau,S) -\Phi({\bf x}) +r \frac{dS}{dt}
+\lambda^k \frac{d\mu^k}{dt}\nonumber\\
&+\Lambda^i\left(u^i-\deriv{x^i}{t}\right)
+\left(\nu+\zeta_t\right) (J-\tau),
 \label{eq:3.16ab}
\end{align}
because two Lagrangians $L_1$ and $L_2$ which differ by 
a perfect derivative or divergence term have the same Euler Lagrange equations 
(e.g. Bluman and Kumei (1989)). 
The continuity equation constraint term in (\ref{eq:3.16ab}) can be written in the forms:
\begin{equation}
(\nu+\zeta_t)(J-\tau)\equiv\tilde{\nu}(J-\tau)\equiv\tilde{\zeta}_t(J-\tau)\quad\hbox{where}\quad \tilde{\nu}=\tilde{\zeta}_t=\nu+\zeta_t.  \label{eq:3.16ac}
\end{equation}
Thus, the continuity equation constraint  involves only one Lagrange multiplier, i.e. 
one can use either $\tilde{\nu}$ or $\tilde{\zeta}_t$ as the Lagrange multiplier. Since
\begin{equation}
\tilde{\zeta}_t\left(J-\tau\right)\equiv 
\frac{d}{dt}\left(\tilde{\zeta}(J-\tau)\right)
-\tilde{\zeta} \frac{d}{dt}(J-\tau), \label{eq:3.16ad}
\end{equation}
then 
\begin{align}
\ell_m=&\frac{1}{2} u^2-e(\tau,S) -\Phi({\bf x}) +r \frac{dS}{dt}
+\lambda^k \frac{d\mu^k}{dt}\nonumber\\
&+\Lambda^i\left(u^i-\deriv{x^i}{t}\right)
+\tilde{\zeta}\frac{d}{dt}(\tau-J), 
 \label{eq:3.16ae}
\end{align}
is another form of $\ell_m$ that gives the correct Euler Lagrange equations. 
This latter form of $\ell_m$ is useful 
in the Eulerian Clebsch potential formulation of the variational principle as developed 
by Zakharov and Kuznetsov (1997) (see Appendix A).

Below we use the Lagrangian:
\begin{equation}
\ell_{m0}=\frac{1}{2} u^2-e(\tau, S)-\Phi({\bf x})+ r\frac{dS}{dt}+\Lambda^i\left(u^i-\deriv{x^i}{t}\right)
+\tilde{\nu}(J-\tau) +\lambda^k \frac{d\mu^k}{dt}. \label{eq:3.16a}
\end{equation}
 It turns out that the Euler-Lagrange equations using the Lagrangian (\ref{eq:3.16a})
for the $\mu^k$ and $\lambda^k$ Clebsch 
potentials decouple from the other Euler-Lagrange equations. 
We show that the $\mu^k$ and $\lambda^k$ do not contribute to the pullback and symplecticity 
conservation laws in Section 5. This is a consequence of the fact that the $\mu^k$ and $\lambda^k$ 
are advected with the fluid.

The constraint terms in the Lagrangian (\ref{eq:3.16a}) are nonholonomic constraints, and are 
 examples of the Skinner-Rusk construction [Skinner and Rusk (1983a, 1983b), 
Cantrijn and Vankerschaver (2007), Llibre et al. (2014)]. 
 The nonholonomic constraint terms 
are sometimes referred to as vakonomic constraints ( e.g. Llibre et al. (2014); 
the term vakonomic stands for ``variational axiomatic kind" a term coined by Kozlov). 
In the term $\Lambda^i(u^i-\partial x^i/\partial t)$ in (\ref{eq:3.16a}) 
${\bf u}-\partial {\bf x}/\partial t$ lies in the tangent space $TQ$ 
 and $\Lambda ^i$ is the $ith$ 
component of a co-vector in the co-tangent space $T^*Q$ (i.e. the dual of the vector space 
$TQ$). The Skinner-Rusk construction does not necessarily imply a  standard Hamiltonian system, 
in which the canonical momenta are constructed by the Legendre transformation, 
since the Lagrangian may be singular.

\subsection{The Euler Lagrange equations}
In this section, we obtain the Euler Lagrange equations for the action (\ref{eq:3.1}) 
in which the Lagrangian $\ell_m$  is given by (\ref{eq:3.16a}) and 
has the functional form:
\begin{equation}
\ell_m=\ell_m({\bf x},{\bf u}, \tau,S,r,\boldsymbol{\Lambda}, x_{it},\nu,x_{ij},\mu^k,\lambda^k), 
\label{eq:el1}
\end{equation}
where $x_{ij}=\partial x^i/\partial m^j$ and $x_{it}=\partial x^i/\partial t$. 
The stationary point conditions for the action (\ref{eq:3.1}) with ${\cal{L}}_m\to\ell_m$ 
where $\ell_m$ is given by (\ref{eq:3.16a}), give the constraint equations: 
\begin{align}
\frac{\delta {\cal A}}{\delta\tau}=&- e_\tau-\tilde{\nu}=0\quad \hbox{or}
\quad \tilde{\nu}=-e_\tau=p, 
\label{eq:el2}\\
\frac{\delta{\cal A}}{\delta\tilde{\nu}}=&J-\tau=0, \label{eq:el3}\\
\frac{\delta {\cal A}}{\delta\Lambda^i}=&u^i-\deriv{x^i}{t}=0, \label{eq:el4}\\
\frac{\delta {\cal A}}{\delta u^i}=&\deriv{\ell_m}{u^i}
=u^i+\Lambda^i=0\quad\hbox{or}\quad \Lambda^i=-u^i, 
\label{eq:el5}\\
\frac{\delta {\cal A}}{\delta r}=&\deriv{\ell_m}{r}=\frac{dS}{dt}=0, \label{eq:el6}\\
\frac{\delta {\cal A}}{\delta S}=&\deriv{\ell_m}{S}-\derv{t}\left(\deriv{\ell_m}{S_t}\right)
=-e_S-r_t=-\left(r_t+T\right)=0, \label{eq:el7}\\
 \frac{\delta {\cal A}}{\delta \lambda^k}=&\frac{d\mu^k}{dt}=0, \quad 
\frac{\delta {\cal A}}{\delta \mu^k}=-\frac{d\lambda^k}{dt}=0. \label{eq:el7a}
\end{align}
where $T$ is the temperature of the gas.

The stationary point conditions for ${\cal A}$ due to variations of the $x^i$ give the 
Euler-Lagrange equations:
\begin{align}
\frac{\delta {\cal A}}{\delta x^i}&=
\deriv{\ell_m}{x^i}-\derv{t}\left(\deriv{\ell_m}{x_{it}}\right)
-\derv{m^j}\left(\deriv{\ell_m}{x_{ij}}\right)\nonumber\\
&=-\deriv{\Phi}{x^i}-\deriv{(-\Lambda^i)}{t}
-\derv{m^j}\left(\tilde{\nu} \deriv{J}{x_{ij}}\right)\nonumber\\
&\equiv -\biggl(\frac{\partial u^i({\bf m},t)}{\partial t}+\deriv{\Phi}{x^i}
+\derv{m^j}\left(pA_{ij}\right)\biggr)=0, \label{eq:el8}
\end{align}
(note  $\Lambda^i=-u^i$ from (\ref{eq:el5}) and $\tilde{\nu}=p$ from (\ref{eq:el2})). 
Equation (\ref{eq:el8}) is the Lagrangian momentum equation.  By noting that 
$\partial A_{ij}/\partial m^j=0$ (e.g. Newcomb (1962)), 
(\ref{eq:el8}) can be re-written as:
\begin{equation}
\frac{\delta {\cal A}}{\delta x^i}=-\left(\frac{du^i}{dt}
+\deriv{\Phi}{x^i} +\frac{1}{\rho}\deriv{p}{x^i}\right)=0, \label{eq:el9}
\end{equation}
which is equivalent to the Eulerian momentum equation (\ref{eq:2.2}). 
 Notice that the Euler-Lagrange (EL)
equations (\ref{eq:el7a}) are independent of the preceeding Euler Lagrange  
equations (\ref{eq:el2})-(\ref{eq:el7}) and of (\ref{eq:el9}). 
It is important to retain
the $\mu^k$ and $\lambda^k$ Clebsch potentials in the Poisson bracket. For example, if 
one wishes to transform the Poisson bracket to noncanonical coordinates, the 
$\mu^k$ and $\lambda^k$ are Casimirs: (e.g. Morrison and Greene (1980,1982), Morrison (1998)). 

Substituting $\tilde{\nu}$ and the $\Lambda^i$ from (\ref{eq:el2}) 
and (\ref{eq:el5}), in (\ref{eq:3.9}) gives:
\begin{align}
\tilde{\ell}_m=&\frac{1}{2}u^2-e(\tau,S)-\Phi({\bf x})+r \deriv{S({\bf m},t)}{t}
+u^i\left(\deriv{x^i}{t}-u^i\right)+p(J-\tau)+\lambda^k \frac{d\mu^k}{dt}\nonumber\\
=&r\deriv{S}{t}+u^i \deriv{x^i}{t}+p\det\left(x_{ij}\right)+\lambda^k \frac{d\mu^k}{dt}
-\left(\frac{1}{2}u^2+\tilde{w}(p,S)+\Phi({\bf x})\right), \label{eq:el10}
\end{align}
as an equivalent form of $\ell_m$ where $\tilde{w}(p,S)=w(\rho,S)$ is the enthalpy
of the gas. Note that the Lagrangian $\tilde{\ell}_m$ in (\ref{eq:el10}) has the functional form:
\begin{equation}
\tilde{\ell}_m=\tilde{\ell}_m\left({\bf x},{\bf u},S,p,r,S_t,x_{it},x_{ij},
\mu^k_t,\lambda^k\right), \label{eq:el11}
\end{equation}
and that
\begin{equation}
\tilde{w}_p=\tau \quad\hbox{and}\quad \tilde{w}_S=T\label{eq:el12}
\end{equation}
(see Webb (2015)). 

Using $\tilde{\ell}_m$ from (\ref{eq:el10}) to replace ${\cal L}_m$ in the action (\ref{eq:3.1}), 
we obtain the stationary point conditions for the action as:
\begin{align}
\frac{\delta {\cal A}}{\delta p}=&J-\tilde{w}_p=J-\tau=0, \label{eq:el13}\\
\frac{\delta {\cal A}}{\delta u^i}=&\deriv{\tilde{\ell}_m}{u^i}=\deriv{x^i}{t}-u^i=0, 
\label{eq:el14}\\
\frac{\delta {\cal A}}{\delta r}=&\deriv{\tilde{\ell}_m}{r}=\deriv{S}{t}=0, \label{eq:el15}\\
\frac{\delta {\cal A}}{\delta S}=&\deriv{\tilde{\ell}_m}{S}
-\derv{t}\left(\deriv{\tilde{\ell}_m}{S_t}\right)
=-\tilde{w}_S-\deriv{r}{t}=-\left(\frac{dr}{dt}+T\right)=0, \label{eq:el16}\\
\frac{\delta {\cal A}}{\delta x^i}=&\deriv{\tilde{\ell}_m}{x^i}-\derv{t}\left(\deriv{\tilde{\ell}_m}{x_{it}}\right)
-\derv{m^j}\left(\deriv{\tilde{\ell}_m}{x_{ij}}\right)\nonumber\\
=&-\deriv{\Phi}{x^i}-\deriv{u^i}{t}-\derv{m^j}\left(p A_{ij}\right)\nonumber\\
=&-\left(\frac{du^i}{dt}+\deriv{\Phi}{x^i}+\derv{m^j}\left(p A_{ij}\right)\right)=0, 
\label{eq:el17}\\
\frac{\delta {\cal A}}{\delta\lambda^k}=&\frac{d\mu^k}{dt}=0,\quad 
\frac{\delta {\cal A}}{\delta \mu^k}=-\frac{d\lambda^k}{dt}=0.
\label{eq:el17a}
\end{align}
Thus action (\ref{eq:3.1}) with action density $\tilde{\ell}_m$ from (\ref{eq:el10}) gives the 
 equations (\ref{eq:el2})-(\ref{eq:el8}) obtained previously using the 
Lagrange multipliers $\tilde{\nu}$ and $\Lambda^i$.

\subsection{Standard Hamiltonian approach}
In the standard Hamiltonian approach, in which the evolution variable is time $t$, 
one defines the canonical momenta  by the equations:
 \begin{equation}
\pi_k=\deriv{\tilde{\ell}_m}{x^k_t}=u^k, \quad \pi_S=\deriv{\tilde{\ell}_m}{S_t}=r, 
\quad \pi_{\mu^k}=\deriv{\tilde{\ell}_m}{\mu^k_t}=\lambda^k. \label{eq:ell18}
\end{equation}
Use of the Legendre transform gives:
\begin{equation}
h_c=\pi_k\deriv{x^k}{t}+\pi_S S_t+ \pi_{\mu^k} \frac{d\mu^k}{dt}-\tilde{\ell}_m
\equiv\frac{1}{2}u^2+e(\tau,S)+\Phi({\bf x}), \label{eq:ell19}
\end{equation}
for the classical Hamiltonian density $h_c$, where $\tau\equiv J$.  
The Hamiltonian functional $H_c$ is defined as:
\begin{equation}
H_c=\int h_c\ d^3m\equiv 
\int\left(\frac{1}{2}u^2+e(\tau,S)+\Phi({\bf x})\right)\ d^3 m.
\label{eq:ell20}
\end{equation}

Taking the variational derivative of $H_c$ with respect to $x^k$ gives:
\begin{align}
\frac{\delta H_c}{\delta x^k}&=\deriv{h_c}{x^k}-\derv{m^j}
\left(\deriv{h_c}{x_{kj}}\right)\nonumber\\
&=\deriv{\Phi}{x^k}+\derv{m^j}\left(p A_{kj}\right)
=\deriv{\Phi}{x^k}+\frac{1}{\rho}\deriv{p}{x^k}. 
\label{eq:ell21}
\end{align}
Thus, 
\begin{equation}
\frac{\delta H_c}{\delta x^k}=\left(\deriv{\Phi}{x^k}+\frac{1}{\rho}\deriv{p}{x^k}\right)
=-\frac{du^k}{dt}. \label{eq:ell22}
\end{equation}
In deriving (\ref{eq:ell22}) we used the facts: $J=\tau$, $\partial J/\partial x_{kj}=A_{kj}$ 
and $e_\tau=-p$. 
 Similarly,
\begin{align}
\frac{\delta H_c}{\delta u^k}&=u^k=\frac{dx^k}{dt}, \label{eq:ell23}\\
\frac{\delta H_c}{\delta S}&=T=-\frac{dr}{dt},\quad \frac{\delta H_c}{\delta r}=0=\frac{dS}{dt}, 
\label{eq:ell24}\\
\frac{d\mu^k}{dt}=&\frac{\delta H_c}{\delta\lambda^k}=0,\quad 
\frac{d\lambda^k}{dt}=-\frac{\delta H_c}{\delta \mu^k}=0. \label{eq:ell24a}
\end{align}}
Equations (\ref{eq:ell22})-(\ref{eq:ell24a}) are Hamilton's canonical equations for Lagrangian 
gas dynamics. i.e. 
 \begin{align}
\frac{dx^k}{dt}&=\frac{\delta H_c}{\delta u^k},
\quad \frac{du^k}{dt}=-\frac{\delta H_c}{\delta x^k}, \nonumber\\ 
\frac{dS}{dt}&=\frac{\delta H_c}{\delta r}=0,\quad \frac{dr}{dt}
=-\frac{\delta H_c}{\delta S}=-T, \nonumber\\
\frac{d\mu^k}{dt}=&\frac{\delta H_c}{\delta\lambda^k}=0,\quad 
\frac{d\lambda^k}{dt}=-\frac{\delta H_c}{\delta \mu^k}=0. \label{eq:ell25} 
\end{align}

The canonical Poisson bracket for the Hamiltonian system (\ref{eq:ell25}) is:
\begin{equation}
\left\{F,H\right\}=\int \left(\frac{\delta F}{\delta x^k}\frac{\delta H}{\delta u^k} 
-\frac{\delta F}{\delta u^k}\frac{\delta H}{\delta x^k} 
+\frac{\delta F}{\delta S}\frac{\delta H}{\delta r} 
-\frac{\delta F}{\delta r}\frac{\delta H}{\delta S}
+\frac{\delta F}{\delta \mu^k}\frac{\delta H}{\delta \lambda^k}
-\frac{\delta F}{\delta \lambda^k}\frac{\delta H}{\delta \mu^k}\right)\ d^3m. \label{eq:ell26}
\end{equation}
Using the Poisson bracket (\ref{eq:ell26}), Hamilton's equations for functionals $F$ 
of the canonical variables can be written in the form $\dot{F}=\left\{F,H_c\right\}$ 
which gives the time evolution of the functional $F$. Noncanonical forms of the Poisson 
bracket in terms of physical variables may  be obtained by transforming the 
variational derivatives in the Poisson bracket (\ref{eq:ell26}) to the new, noncanonical 
variables 
(e.g. Morrison and Greene (1980,1982); Holm and Kupershmidt (1983a,b); 
Holm, Kuperschmidt and Levermore (1983), Webb et al. (2014a)).

\section{de Donder Weyl multi-momentum approach}
The formal mathematical and theoretical physics approach to multi-symplectic 
systems uses the language of fiber bundles and jet bundles. In this approach physical 
fields are thought of as sections of vector bundles (sectioning means the 
imposition of the dependence of the physical variables on the independent variables 
in the system). Gotay (2004a,b) gives a physics oriented description of this approach, 
and uses the simple example of particle dynamics in Hamiltonian mechanics,  
 which can be generalized to more complex systems. 
 We investigate the effect of the Clebsch potential terms $\lambda^k d\mu^k/dt$ 
terms in the Lagrangian (\ref{eq:el10}).

To derive the de-Donder Weyl equations, 
we introduce canonical multi-momenta associated with the 
Lagrange density $\tilde{\ell}_m$ in (\ref{eq:el10}), namely,
\begin{equation}
\pi_{kt}=\deriv{\tilde{\ell}_m}{x^k_t}=u^k, \quad \pi_{kj}=\deriv{\tilde{\ell}_m}{x_{kj}}=pA_{kj},
\quad \pi_{St}=\deriv{\tilde{\ell}_m}{S_t}=r,\quad  
\pi_{\mu^k_t}=\lambda^k. \label{eq:4.1}
\end{equation}
In the de Donder-Weyl approach, both the time $t$ and the Lagrange labels $m^j$ can be thought 
of as evolution variables. Using (\ref{eq:el10}) for $\tilde{\ell}_m$ and the 
multi-momenta (\ref{eq:4.1}),  the generalized Legendre transformation:
\begin{equation}
h=\pi_{kt}\deriv{x^k}{t}+\pi_{kj} x_{kj}+\pi_{St} S_t+\pi_{\mu^k_t}\frac{d\mu^k}{dt}
-\tilde{\ell}_m, \label{eq:4.2}
\end{equation}
gives the multi-symplectic Hamiltonian density $h$.

To derive the de Donder-Weyl equations, we note that:  
\begin{align}
h&=h(x^k,S,\pi_{kt},\pi_{kj},\pi_{St}, \pi_{\mu^k_t}), \nonumber\\
\tilde{\ell}_m&=\tilde{\ell}_m(x^k,u^k,S,r,p,x^k_t,x_{kj},S_t,\lambda^k,\mu^k_t). \label{eq:4.5}
\end{align}
Using (\ref{eq:4.5})
and taking the differential of (\ref{eq:4.2}) gives:
\begin{align}
dh&=\deriv{h}{x^k} dx^k+\deriv{h}{S} dS+\frac{\partial h}{\partial \pi_{kt}} d\pi_{kt}
+\frac{\partial h}{\partial \pi_{St}}dS_t+\frac{\partial h}{\partial \pi_{kj}} d\pi_{kj}+
\frac{\partial h}{\partial\pi_{\mu^k_t}}
d\pi_{\mu^k_t}\nonumber\\
&=d\left(\pi_{St}S_t+\pi_{kt}x^k_t+\pi_{kj} x_{kj}+\pi_{\mu^k_t}\mu^k_t\right)\nonumber\\
&\ \ -\biggl(\deriv{\tilde{\ell}_m}{x^k} dx^k+ \deriv{\tilde{\ell}_m}{x^k_t} dx^k_t 
+\deriv{\tilde{\ell}_m}{S} dS +\deriv{\tilde{\ell_m}}{x_{kj}} dx_{kj}
+\deriv{\tilde{\ell}_m}{S_t} dS_t+\deriv{\tilde{\ell}_m}{r} dr
+\deriv{\tilde{\ell}_m}{u^k} du^k\nonumber\\
&\ \ +\deriv{\tilde{\ell}_m}{\lambda^k} d\lambda^k
+\frac{\partial\tilde{\ell}_m}{\partial\mu^k_t} d\mu^k_t\biggr). \label{eq:4.6}
\end{align}
Using the Euler Lagrange equations (\ref{eq:el6})-(\ref{eq:el7}) 
gives:
\begin{align}
\frac{\delta {\cal A}}{\delta r}&=\deriv{\tilde{\ell}_m}{r}=S_t=0,\label{eq:4.7}\\
\frac{\delta {\cal A}}{\delta S}&=\deriv{\tilde{\ell}_m}{S}
-\derv{t}\left(\deriv{\tilde{\ell}_m}{S_t}\right)=-\frac{\varepsilon_S}{\rho}-r_t
\equiv -\left(r_t+T\right)=0. \label{eq:4.8}
\end{align}
Equating the $dS$ terms in (\ref{eq:4.6}) gives:
\begin{equation}
\deriv{h}{S}=-\deriv{\tilde{\ell}_m}{S}=T=-r_t. \label{eq:4.9}
\end{equation}

Equating the various differentials in (\ref{eq:4.6}) gives rise to the 
following equations:
\begin{align}
&du^k:\quad\quad \frac{\delta {\cal A}}{\delta u^k}=\deriv{\tilde{\ell}_m}{u^k}
=\left(\deriv{x^k}{t}-u^k\right)=0, \label{eq:4.10}\\
&dS_t:\quad\quad -\deriv{\tilde{\ell}_m}{S_t}+\pi_{St}=0\quad\hbox{or}\quad\pi_{St}=r, 
\label{eq:4.11}\\
&d\pi_{St}:\quad\quad \deriv{h}{\pi_{St}}=S_t=\deriv{\tilde{\ell}_m}{r}
\quad\hbox{or}\quad S_t=\deriv{h}{\pi_{St}}=0, \label{eq:4.12}\\
&d\pi_{kt}:\quad\quad \deriv{x^k}{t}=\deriv{h}{\pi_{kt}}, \label{eq:4.13}\\
&dx^k_t:\quad\quad \pi_{kt}=\deriv{\tilde{\ell}_m}{x^k_t}=u^k, \label{eq:4.14}\\
&d\pi_{kj}:\quad\quad \deriv{x^k}{m^j}=\deriv{h}{\pi_{kj}}, \label{eq:4.15}\\
&dx_{kj}:\quad\quad \pi_{kj}=\deriv{\tilde{\ell}_m}{x_{kj}}=p A_{kj},
\quad \deriv{h}{x_{kj}}=\left(\pi_{kj}-p A_{kj}\right)=0, \label{eq:4.16}\\
&dx^k:\quad\quad \deriv{h}{x^k}=-\deriv{\tilde{\ell}_m}{x^k}=\deriv{\Phi}{x^k}. \label{eq:4.17}
\end{align}
The balance equations for $d\lambda^k$ and $d\mu^k_t$ and $d\pi_{\mu^k_t}$ give the 
equations: 
\begin{equation}
\frac{d\mu^k}{dt}=\deriv{h}{\lambda^k}=0, \quad \pi_{\mu^k_t}=\lambda^k, 
\quad \frac{d\lambda^k}{dt}=-\deriv{h}{\mu^k}=0. \label{eq:4.17a}
\end{equation}
Note that (\ref{eq:4.16}) implies:
\begin{equation}
0=\deriv{h}{x_{kj}}=\pi_{kj}-p A_{kj}, \label{eq:4.17b}
\end{equation}
The latter equation implies that there is no evolution of $x_{kj}$ with respect to $(t,m^1,m^2,m^3)$
in the multi-symplectic
Hamiltonian formulation described in the next section.

The Euler-Lagrange equation (\ref{eq:el8}) can be written in the form:
\begin{align}
\frac{\delta {\cal A}}{\delta x^k}&=\deriv{\tilde{\ell}_m}{x^k}
-\derv{t}\left(\deriv{\tilde{\ell}_m}{x^k_t}\right)
-\derv{a^j}\left(\deriv{\tilde{\ell}_m}{x_{kj}}\right)\nonumber\\
&\equiv-\deriv{\pi_{kt}}{t}-\deriv{\pi_{kj}}{m^j}+\deriv{\tilde{\ell}_m}{x^k}=0. \label{eq:4.18}
\end{align}
Thus, (\ref{eq:4.18}) gives the Hamiltonian divergence like equation:
\begin{equation}
\deriv{\pi_{kt}}{t}+\deriv{\pi_{kj}}{m^j}
=-\deriv{h}{x^k}=-\frac{\delta H}{\delta x^k}, 
\label{eq:4.19}
\end{equation}
where 
\begin{equation}
H=\int h\ d^3m, \label{eq:4.20}
\end{equation}
is the Hamiltonian functional. Equation (\ref{eq:4.19}) is equivalent to the 
Euler momentum equation (\ref{eq:el7}) or (\ref{eq:el8}).

To sum up, (\ref{eq:4.13}), (\ref{eq:4.15}) and (\ref{eq:4.19}) give the 
de Donder-Weyl Hamiltonian equations:
\begin{equation}
\nabla{\bf\cdot}\Pi_k=-\frac{\delta H}{\delta x^k},\quad 
\deriv{x^k}{t}=\frac{\delta H}{\delta \pi_{kt}}, \quad
\deriv{x^k}{m^j}=\frac{\delta H}{\delta \pi_{kj}}, \label{eq:4.21}
\end{equation}
where
\begin{equation}
\nabla{\bf\cdot}\Pi_k\equiv \deriv{\pi_{kt}}{t}+\deriv{\pi_{kj}}{m^j}. \label{eq:4.22}
\end{equation}
Similarly, (\ref{eq:4.8}),(\ref{eq:4.9}) and (\ref{eq:4.12}) give the Hamiltonian 
equations for $S$ and $r$, and (\ref{eq:4.17a}) 
give equations for $\mu^k$ and $\lambda^k$:
\begin{equation}
S_t=\frac{\delta H}{\delta r}=0,\quad r_t=-\frac{\delta H}{\delta S}=-T,
\quad \frac{d\mu^k}{dt}=\frac{\delta H}{\delta\lambda^k}=0,
\quad \frac{d\lambda^k}{dt}=-\frac{\delta H}{\delta\mu^k}=0. \label{eq:4.23}
\end{equation}

\section{Multi-symplectic formulation}
The Lagrangian gas dynamical system can be written in the multi-symplectic form:
\begin{equation}
{\sf K}^0_{is}\deriv{z^s}{t}+{\sf K}^k_{is}\deriv{z^s}{m^k}=\frac{\delta H}{\delta z^i}, 
\quad 1\leq i\leq N,
\label{eq:5.1}
\end{equation}
where $N$ is the number of variables $z^s$. In (\ref{eq:5.1}) 
\begin{equation}
H=\int h_m d{\bf m}, \label{eq:5.1a}
\end{equation}
is the multi-symplectic Hamiltonian functional and
\begin{equation}
 \quad h_m= \frac{1}{2} \langle{\bf u},{\bf u}\rangle +e(\tau,S)+\Phi({\bf x})+\pi_{ik} x_{ik},
\quad e(\tau,S)=\frac{\varepsilon}{\rho} \label{eq:5.1b}
\end{equation}
defines the multi-symplectic Hamiltonian density in which $e(\tau,S)$ is the internal energy 
density of the gas per unit mass and $\tau=1/\rho\equiv J=\det(x_{ij})$ 
is the specific volume of the gas.  
The dependent variables $z^s$ in (\ref{eq:5.1}) 
are the same variables that appear in the de Donder-Weyl 
 Hamiltonian formulation of Section~4, but include also the independent variables $x_{kj}$. 
The $m^k, 1\leq k\leq n$ are the Lagrangian fluid labels.
$H$  is the de Donder-Weyl Hamiltonian functional 
given by (\ref{eq:5.1a}) and (\ref{eq:5.1b}). In (\ref{eq:5.1}) the matrices ${\sf K}^\alpha_{ij}$ 
are skew symmetric in the two lower indices, and are related to fundamental one-forms 
describing the system, which in our case are related to the Legendre transformation 
used in (\ref{eq:4.2}). 

Below, we develop the equations for the case of $n=3$ independent Lagrangian mass coordinates. 
We indicate in Appendix B, how the formalism also applies for the case $n=2$. The case 
$n=1$ of 1D gas dynamics is described by Webb (2015). The same basic equations and principles 
apply for all values of $n$.  
The dependent variables $z^s$ in (\ref{eq:5.1}) are given by:
\begin{align}
{\bf z}=&\biggl(x^1,x^2,x^3,u^1,u^2,u^3,\left\{\pi_{ij}:\quad 1\leq i\leq 3,\quad 1\leq j\leq 3\right\},
S,r, \nonumber\\
&\left\{ x_{ij}:\quad 1\leq i\leq 3,\quad 1\leq j\leq 3\right\},
\mu^1,\lambda^1,\mu^2,\lambda^2,\ldots
\biggr)^T. \label{eq:5.2}
\end{align}
 Alternatively, we write:
\begin{equation}
{\bf z}=\left({\bf x}^T,\left(\pi_{{\bf x}t}\right)^T,\pi_{ij},S,\pi_{St},
\left\{x_{ij}:\quad 1\leq i\leq 3,\quad 1\leq j\leq 3\right\},
\mu^1,\lambda^1,\mu^2,\lambda^2,\ldots \right)^T. \label{eq:5.3}
\end{equation}
Thus, the variables ${\bf z}$ consist of the coordinates $(x^1,x^2,x^3,S)$ and the multi-momenta 
$(\pi_{it},\pi_{ij},\pi_{St})^T$,  the $x_{ij}$ and the $\mu^k$ and $\lambda^k$. In (\ref{eq:5.3}) the $\pi_{ij}$ are ordered, so that the column index varies first from $1\leq j\leq 3$ and then the $i$ index increments by 1, 
and the cycle is repeated for $i=2$ and $i=3$. Thus, we use the convention:
\begin{align}
&(z^1,z^2,z^3)=(x^1,x^2,x^3),\quad (z^4,z^5,z^6)=(u^1,u^2,u^3), \nonumber\\
&(z^7,z^8,z^9,z^{10},z^{11},z^{12},z^{13},z^{14},z^{15})=(\pi_{11},\pi_{12},\pi_{13},
\pi_{21},\pi_{22},\pi_{23},\pi_{31},\pi_{32},\pi_{33})\nonumber\\
&(z^{16},z^{17})=(S,r), \nonumber\\
&(z^{18},z^{19},z^{20},z^{21},z^{22},z^{23},z^{24},z^{25},z^{26})=(x_{11},x_{12},x_{13},
x_{21},x_{22},x_{23},x_{31},x_{32},x_{33}),\nonumber\\
&(z^{27},\ldots )=(\mu^1,\lambda^1,\mu^2,\lambda^2,\ldots). 
 \label{eq:5.4}
\end{align}

We introduce the one-forms:
\begin{equation}
\omega^\alpha=L^\alpha_s dz^s, \quad 0\leq\alpha\leq 3,\quad 1\leq s\leq N, \label{eq:5.5}
\end{equation}
where
\begin{equation}
L^0_s\deriv{z^s}{t}+L^j_s \deriv{z^s}{m^j}=\pi_{kt}\deriv{x^k}{t}+\pi_{St}\deriv{S}{t}
+\pi_{kj}x_{kj}+\pi_{\mu^k_t}\frac{d\mu^k}{dt}, \label{eq:5.6}
\end{equation}
are the multi-momenta terms in the Legendre transformation (\ref{eq:4.2}). Thus we obtain:
\begin{equation}
\omega^0=\pi_{x^it} dx^i+\pi_{St} dS+\pi_{\mu^k_t} d\mu^k=u^i dx^i+r dS+\lambda^k d\mu^k\equiv L^0_s dz^s, \label{eq:5.7}
\end{equation}
Similarly, we set:
\begin{equation}
\omega^k=\pi_{ik}dx^i=p A_{ik}dx^i\equiv L^k_s dz^s, 
\quad 1\leq i\leq 3,\quad 1\leq k\leq 3, \label{eq:5.8}
\end{equation}
for the one-forms associated with the multi-momenta $\pi_{ik}$. 

 
In the multi-symplectic formulation, the fundamental two-forms $\kappa^\alpha=d\omega^\alpha$ 
are closed forms, i.e. $d\kappa^\alpha=0$ ($0\leq\alpha\leq 3$). Thus 
$\omega^\alpha=L^\alpha_s dz^s$ are such that 
\begin{equation}
\kappa^\alpha=d\omega^\alpha=d\left(L^\alpha_j dz^j\right)
=\frac{1}{2} {\sf K}^\alpha_{ij}dz^i\wedge dz^j. \label{eq:5.9}
\end{equation}
From (\ref{eq:5.9}) the matrices ${\sf K}^\alpha_{ij}$ have the form:
\begin{equation}
{\sf K}^\alpha_{ij}=\deriv{L^\alpha_j}{z^i}-\deriv{L^\alpha_i}{z^j}. \label{eq:5.10}
\end{equation}
The matrices ${\sf K}^\alpha_{ij}$ are skew-symmetric with respect to the two lower indices $i$ and $j$ 
(e.g. Hydon (2005), Cotter et al. (2007), Webb et al. (2014c)). 

The components of the matrices ${\sf K}^0_{ij}$ can be determined by taking the exterior derivative 
of the one-form $\omega^0$, i.e., 
\begin{equation}
d\omega^0=du^i\wedge dx^i+dr\wedge dS +d\lambda^k\wedge d\mu^k\equiv 
\frac{1}{2}{\sf K}^0_{ij} dz^i\wedge dz^j. \label{eq:5.11}
\end{equation}
Thus, the non-zero ${\sf K}^0_{ij}$ are:
\begin{equation}
{\sf K}^0_{u^i,x^i}=1,\quad {\sf K}^0_{x^i,u^i}=-1,\quad {\sf K}^0_{r,S}=1,
\quad {\sf K}^0_{S,r}=-1, \quad {\sf K}^0_{\lambda^k,\mu^k}=1. 
\label{eq:5.12}
\end{equation}
Similarly, 
$d\omega^k=d\pi_{ik}\wedge dx^i$, gives the non-zero ${\sf K}^k_{ij}$ as:
\begin{equation}
{\sf K}^k_{\pi_{ik},x^i}=1\quad \hbox{and}\quad {\sf K}^k_{x^i,\pi_{ik}}=-1. \label{eq:5.13}
\end{equation}
Using the notation (\ref{eq:5.4}) in (\ref{eq:5.12})-(\ref{eq:5.13})  we 
obtain the non-zero coefficients:
\begin{align}
&{\sf K}^0_{4,1}={\sf K}^0_{5,2}={\sf K}^0_{6,3}={\sf K}^0_{17,16}=1,\nonumber\\ 
&{\sf K}^0_{28,27}={\sf K}^0_{30,29}=\ldots =1,  \nonumber\\
&{\sf K}^1_{7,1}={\sf K}^2_{8,1}={\sf K}^3_{9,1}=1,\nonumber\\
&{\sf K}^1_{10,2}={\sf K}^2_{11,2}={\sf K}^3_{12,2}=1,\nonumber\\
&{\sf K}^1_{13,3}={\sf K}^2_{14,3}={\sf K}^3_{15,3}=1,\label{eq:5.14}
\end{align}
where ${\sf K}^\alpha_{ba}=-{\sf K}^\alpha_{ab}$. 

Using (\ref{eq:5.14}) for the matrices ${\sf K}^\alpha_{ab}$ 
($\alpha=0,1,2,3$), and the Hamiltonian $H$  in (\ref{eq:4.20})
with Hamiltonian density $h$ of (\ref{eq:4.2}), 
the multi-symplectic equations (\ref{eq:5.1}) 
gives the de Donder-Weyl equations (\ref{eq:4.7})-(\ref{eq:4.23}). 
Thus, for example, for $i=1,2,3$, (\ref{eq:5.1}) gives the gas dynamic momentum, or 
Euler equations:
\begin{equation}
\frac{du^i}{dt}+\derv{m^k}\left( pA_{ik}\right)+\deriv{\Phi}{x^i}=0, \quad i=1,2,3. \label{eq:5.15}
\end{equation}
For $i=4,5,6$, (\ref{eq:5.1}) gives the Lagrangian map equations:
\begin{equation}
\deriv{\bf x}{t}={\bf u}=\frac{\delta H}{\delta {\bf u}}.  \label{eq:5.16}
\end{equation}
For $7\leq i\leq 15$, we get the Lagrangian map equations:
\begin{equation}
\deriv{x^p}{m^q}=\frac{\delta H}{\delta \pi_{pq}}=x_{pq}, \quad 1\leq p,q\leq 3. \label{eq:5.17}
\end{equation}
For $i=16$ and $i=17$ we get the canonically conjugate equations:
\begin{equation}
\frac{dr}{dt}=-\frac{\delta H}{\delta S}=-T,\quad \frac{dS}{dt}=\frac{\delta H}{\delta r}=0. 
\label{eq:5.18}
\end{equation}
For $18\leq i\leq 26$, we obtain:
\begin{equation}
0=\frac{\delta H}{\delta x_{pq}}=\pi_{pq}-pA_{pq},\quad 1\leq p,q\leq 3, \label{eq:5.18a}
\end{equation}
where $A_{pq}=\hbox{cofac}(x_{pq})$ is the cofactor of $x_{pq}$.  For $i>26$, we obtain 
the equations:
\begin{equation}
\frac{d\mu^k}{dt}=\frac{\delta H}{\delta \lambda^k}=0, \quad 
\frac{d\lambda^k}{dt}=-\frac{\delta H}{\delta \mu^k}=0, \label{eq:5.18b}
\end{equation}
which are the equations for the $\mu^k$ and $\lambda^k$  Clebsch variables.
Thus, the multi-symplectic equations (\ref{eq:5.1}) are equivalent to the de 
Donder-Weyl equations 
(\ref{eq:4.7})-(\ref{eq:4.23}). 

\subsection{Pullback conservation laws}
From  Hydon (2005) (see also Webb et al. (2014c)), the multi-symplectic 
system (\ref{eq:5.1}) admits pullback conservation laws associated with the 
Legendre transformation
for the system. The pullback conservation laws have the form:
\begin{equation}
D_\alpha\left(L^\alpha_j z^j_{,\beta}-L\delta^\alpha_\beta\right)=0, 
\quad 0\leq\alpha,\beta\leq n,  
\label{eq:5.19}
\end{equation}
where the independent variables are $q^\alpha=(t,m^1,m^2,m^3)$ and $D_{\alpha}
\equiv \partial/\partial m^{\alpha}$ (note $D_t$ is the Lagrangian time derivative 
moving with the flow). The pullback conservation 
laws can also be derived by using Noether's theorem for the system (see e.g. Webb et al. (2014c) 
and Appendix C). In (\ref{eq:5.19}), the Lagrangian density 
\begin{equation}
L=\frac{1}{2} u^2-e(\rho, S)-\Phi({\bf x}), \label{eq:5.20}
\end{equation}
which is the Lagrangian density (\ref{eq:3.7}) without constraints. 

For $\beta=0$, the pullback conservation law (\ref{eq:5.19}) becomes:
\begin{equation}
\deriv{I^0}{t}+\deriv{I^j}{m^j}=0, \label{eq:5.21}
\end{equation}
where
\begin{equation}
I^0=L^0_k z^k_{,0}-L, \quad I^j=L^j_k z^k_{,0}, \label{eq:5.22}
\end{equation}
 To evaluate the conserved density $I^0$ and conserved current $I^j$, 
 note from the expressions for $\omega^0$ and $\omega^k$ in (\ref{eq:5.7}) and (\ref{eq:5.8})
that:
\begin{align}
I^0=&L^0_k \deriv{z^k}{t}-L=u^k\deriv{x^k}{t}+r \frac{dS}{dt} +\lambda^k \frac{d\mu^k}{dt}
-\left(\frac{1}{2} u^2-e(\rho,S)-\Phi({\bf x})\right)\nonumber\\
=&\frac{1}{2} u^2+e(\rho,S)+\Phi({\bf x}), \label{eq:5.23}
\end{align}
 Thus, $I_0$ is 
the kinetic plus potential energy of the fluid. Similarly, $I^j$ from (\ref{eq:5.22}) 
and (\ref{eq:5.7}) reduces to:
\begin{equation}
I^j=L^j_k \deriv{z^k}{t}=pA_{kj} u^k. \label{eq:5.24}
\end{equation}
 Using (\ref{eq:5.23}) and (\ref{eq:5.24}), 
the pullback conservation law (\ref{eq:5.21}) reduces to:
\begin{equation}
\frac{D}{Dt}\left(\frac{1}{2} u^2+e(\rho,S)+\Phi({\bf x})\right)
+\derv{m^j}\left(p A_{kj} u^k\right)=0, \label{eq:5.25}
\end{equation}
is the Lagrangian total energy equation, where $D/Dt=\partial_t+{\bf u}{\bf\cdot}\nabla$ 
is the time derivative following the flow.  By noting $\partial A_{kj}/\partial m^j=0$ 
 and using $A_{kj}=Jy_{jk}\equiv y_{jk}/\rho$ 
where $y_{jk}=\partial m^j/\partial x^k$, (\ref{eq:5.25}) reduces to the equation:
\begin{equation}
\derv{t}\left(\frac{1}{2}u^2+e(\rho,S)+\Phi({\bf x})\right) 
+{\bf u}{\bf\cdot}\nabla\left(\frac{1}{2}u^2+e(\rho,S)+\Phi({\bf x})\right)
+\frac{1}{\rho}\nabla{\bf\cdot}(p{\bf u})=0. \label{eq:5.26}
\end{equation}
Then, using mass continuity equation (\ref{eq:2.1}), 
(\ref{eq:5.26}) reduces to the Eulerian energy equation:
\begin{equation}
\derv{t}\left(\frac{1}{2}\rho u^2+\varepsilon+\rho\Phi\right)
+\nabla{\bf\cdot}\left[\rho{\bf u}\left(\frac{1}{2}u^2+w+\Phi({\bf x})\right)\right]=0, \label{eq:5.27}
\end{equation}
where $w=(\varepsilon+p)/\rho$ is the gas enthalpy. 

Next, consider the pullback conservation laws (\ref{eq:5.19}) for the case $\beta=i$. 
In this case (\ref{eq:5.19}) reduce to equations of the form:
\begin{equation}
\derv{t}T^{0i}+\derv{m^k}\left(T^{ki}\right)=0, \label{eq:5.28}
\end{equation}
where
\begin{equation}
T^{0i}=L^0_j z^j_{,i},\quad T^{ki}=L^k_j z^j_{,i}-L\delta^k_i, \label{eq:5.29}
\end{equation}
are the density and flux respectively. 
Using the differential forms (\ref{eq:5.7}) and (\ref{eq:5.8}) 
for $\omega^0$ and $\omega^k$, we obtain:
\begin{equation}
T^{0i}=u^j x_{ji}+r\deriv{S}{m^i}+\lambda^k \deriv{\mu^k}{m^i},\quad 
T^{ki}=\left(w+\Phi-\frac{1}{2} u^2\right)\delta^{ki}, \label{eq:5.30}
\end{equation} 
for the density and flux.  The conservation laws (\ref{eq:5.28}) reduce to:
\begin{equation}
\derv{t}\left(u^j x_{ji}+r\deriv{S}{m^i}+\lambda^k \deriv{\mu^k}{m^i}\right) 
+\derv{m^i}\left(w+\Phi-\frac{1}{2} u^2\right)=0. \label{eq:5.31}
\end{equation}
However, 
\begin{equation}
\derv{t}\left(\lambda^k \deriv{\mu^k}{m^i}\right)=\frac{d\lambda^k}{dt}\deriv{\mu^k}{m^i}+
\lambda^k \derv{m^i}\left(\frac{d\mu^k}{dt}\right)=0, \label{eq:5.31a}
\end{equation}
because $d\lambda^k/dt=0$ and $d\mu^k/dt=0$ by (\ref{eq:5.18b}). Thus, (\ref{eq:5.31}) 
may also be written as:
\begin{equation}
\derv{t}\left(u^j x_{ji}+r\deriv{S}{m^i}\right)
+\derv{m^i}\left(w+\Phi-\frac{1}{2} u^2\right)=0. \label{eq:5.31b}
\end{equation}
which shows that (\ref{eq:5.31})does not depend on the gauge potentials $\mu^k$ and $\lambda^k$.
It is interesting to note, that
\begin{equation}
T^{0i}=\deriv{x^j}{m^i}\left(u^j+r\deriv{S}{x^j}+\lambda^k\deriv{\mu^k}{x^j}\right)\equiv 
\deriv{\phi}{m^i}-\beta_s\deriv{A_s}{m^i}, \label{eq:5.31c}
\end{equation}
where we have used the Eulerian Clebsch representation for ${\bf u}$:
\begin{equation}
{\bf u}=\nabla\phi -r\nabla S-\lambda^k\nabla\mu^k-\beta_s \nabla_i A_s. \label{eq:5.31d}
\end{equation}
Here the $-\beta_s\nabla A^s$ term arises in the conversion of the Lagrangian variational 
principle to an Eulerian variational principle (e.g. Fukugawa and Fujitani (2010)). 
Since $d\beta_s/dt=0$ and $d A^s/dt=0$ 
 (\ref{eq:5.31}) reduces to:
\begin{equation}
\derv{m^i}\left(\frac{d\phi}{dt}+w+\Phi-\frac{1}{2} u^2\right)=0. \label{eq:5.31e}
\end{equation}
Equation (\ref{eq:5.31e}) can be integrated with respect to the $m^i$ to obtain 
Bernoulli's equation:
\begin{equation}
\frac{d\phi}{dt}+w+\Phi-\frac{1}{2} u^2=f(t), \label{eq:5.31f}
\end{equation}
where $f(t)$ is the `integration constant'. Thus, the pullback conservation law
(\ref{eq:5.31}) is equivalent to Bernoulli's equation.

By using (\ref{eq:2.7}), (\ref{eq:5.31}) reduces to:
\begin{equation}
x_{ji}\left(\frac{du^j}{dt}+\deriv{\Phi}{x^j}+\frac{1}{\rho}\deriv{p}{x^j}\right)=0, \label{eq:5.32}
\end{equation}
which implies the Euler momentum equation (\ref{eq:2.2}) provided $\det(x_{ij})\neq 0$. 
The conservation laws
(\ref{eq:5.31}) are due to Noether's theorem, and 
 translation invariance of the action with respect to the 
 mass coordinates $m^i$. This is different than translation invariance 
of the action with respect to the  
$x^i$ which implies momentum conservation (e.g. for cases where 
$\Phi({\bf x})=0$).  

The Eulerian conservation law corresponding to (\ref{eq:5.21}) 
is:
\begin{equation}
\deriv{F^0}{t}+\deriv{F^j}{x^j}=0, \label{eq:5.33}
\end{equation}
where
\begin{equation}
F^0=\frac{I^0}{J},\quad F^j=\frac{u^jI^0+x_{jk} I^k}{J}, \label{eq:5.34}
\end{equation}
are the Eulerian density and flux respectively (Padhye 1998). 
Applying the result (\ref{eq:5.33})-(\ref{eq:5.34}), 
the conservation law (\ref{eq:5.31}) reduces to its Eulerian form:
\begin{equation}
\derv{t}\left[\rho x_{ji}\left(u^j+r\deriv{S}{x^j}\right)\right]
+\derv{x^j}\left[x_{ki}\rho u^j\left(u^k+r\deriv{S}{x^k}\right)
+\rho\left(w+\Phi-\frac{1}{2}u^2\right)x_{ji}\right]=0. \label{eq:5.35}
\end{equation}

In Appendix D, we give an independent verification of (\ref{eq:5.35}) by using  a Clebsch  
variational principle, in which mass conservation, entropy advection and 
the Lin constraint are imposed by using Lagrange multipliers. 

Webb (2015) studied  Lagrangian, ideal, compressible gas dynamics 
in one Cartesian space dimension, for the case  $\Phi(x)=0$. 
 (\ref{eq:5.25}) and (\ref{eq:5.27}) then reduce to: 
\begin{align}
&\derv{t}\left(\frac{1}{2}u^2+\frac{\varepsilon}{\rho}\right) +\derv{m}(pu)=0, \label{eq:5.44}\\
&\derv{t}\left(\frac{1}{2}\rho u^2+\varepsilon\right) 
+\derv{x}\left[\rho u\left(\frac{1}{2}u^2+w\right)\right]=0, \label{eq:5.45}
\end{align}
corresponding to the $\beta=0$ (i.e. the energy 
conservation equations). The mass translation invariant laws (\ref{eq:5.31}) 
and (\ref{eq:5.35}) reduce to:
\begin{align}
&\derv{t}\left(\frac{u}{\rho}+r\deriv{S}{m}\right)+\derv{m}\left(w-\frac{1}{2}u^2\right)=0, 
\label{eq:5.46}\\
&\derv{t}\left(u+r\deriv{S}{x}\right)+\derv{x}\left(w+\frac{1}{2} u^2+ur \deriv{S}{x}\right)=0. 
\label{eq:5.47}
\end{align}

Equation (\ref{eq:5.47}) implies the momentum equation for 1D gas dynamics 
with $\Phi(x)=0$. For an isobaric  gas, 
i.e. $p=p(\rho)$ and $\varepsilon=\varepsilon(\rho)$, $S$  is constant throughout the flow, and 
$\partial S/\partial x=0$, and (\ref{eq:5.47}) is then a local conservation law. 
However, for  $p=p(\rho,S)$ and $\varepsilon=\varepsilon(\rho,S)$, 
 (\ref{eq:5.47}) is a nonlocal conservation 
law, depending on the nonlocal variable $r$. Using (\ref{eq:5.41}) 
 $r(x,t)$ can be expressed in the form:
\begin{equation}
r(x,t)=-\int_0^t \bar{T} (m,t')\ dt'+r_0(m), \label{eq:5.48}
\end{equation}
where $T(x,t)=\bar{T}(m,t)$ and $r_0(m)=r(m,0)$. From (\ref{eq:5.48}), 
$r(x,t)$ depends on the time integrated temperature associated with the Lagrange 
label $m$. Thus, $r$  is a nonlocal variable. 

\subsection{Symplecticity conservation laws}

In standard Hamiltonian mechanics in which the time $t$ is 
the evolution variable, the phase space
element $\kappa=dp_i\wedge dq^i=d(p_idq^i)$ is conserved following the Hamiltonian flow, i.e. 
$d\kappa/dt=0$. The generalization of this conservation law for multi-symplectic systems is 
the phase space conservation equation:
\begin{equation}
\kappa^\alpha_{,\alpha}=0\quad \hbox{where}\quad \kappa^\alpha=d\omega^\alpha, 
\quad \omega^\alpha=L^\alpha_j dz^j. \label{eq:5.49}
\end{equation}
From (\ref{eq:5.49}) and (\ref{eq:5.9})-(\ref{eq:5.10}) the fundamental two-form $\kappa^\alpha$ has 
the form:
\begin{equation}
\kappa^\alpha=d\omega^\alpha=\frac{1}{2}{\sf K}^\alpha_{ij}dz^i\wedge dz^j\quad\hbox{where}\quad
{\sf K}^\alpha_{ij}=\deriv{L^\alpha_j}{z^i}-\deriv{L^\alpha_i}{z^j}. \label{eq:5.50}
\end{equation}
The matrix ${\sf K}^\alpha_{ij}$ is skew symmetric in the 2 lower indices. 

The symplecticity conservation law $\kappa^{\alpha}_{,\alpha}=0$ when pulled back to the base manifold
in which $q^\alpha$ ($\alpha=0,1,2,3$), are the independent variables is:
\begin{equation}
D_\alpha\left(\frac{1}{2}{\sf K}^\alpha_{ij}dz^i\wedge dz^j\right)
=\frac{1}{2}D_\alpha\left({\sf K}^{\alpha}_{ij}\deriv{z^i}{q^\beta}\deriv{z^j}{q^\gamma} 
dq^\beta\wedge dq^\gamma\right)=0.
\label{eq:5.51}
\end{equation}
Since the forms $dq^\beta\wedge dq^\gamma$ (for $\beta<\gamma$) 
are independent, (\ref{eq:5.51}) is satisfied if 
\begin{equation}
D_\alpha\left({\sf K}^\alpha_{ij}\deriv{z^i}{q^\beta}\deriv{z^j}{q^\gamma}\right)
=0\quad \hbox{where} \quad \beta<\gamma. \label{eq:5.52}
\end{equation}
We take ${\bf q}=(t,m^1,m^2,m^3)^T$ where the 
$\{m^i\}$ are  the mass coordinates. 

The symplecticity laws (\ref{eq:5.52}) can be obtained by cross differentiation of the 
pullback laws (\ref{eq:5.19}). Using (\ref{eq:5.19}) we obtain:
\begin{equation}
D_\gamma G_\beta-D_\beta G_\gamma=D_\alpha\left({\sf K}^\alpha_{ij} z^i_{,\gamma}z^j_{,\beta}\right),
\label{eq:5.53}
\end{equation}
where
\begin{equation}
G_\beta=D_\alpha\left(L^\alpha_j z^i_{,\beta}-L\delta^\alpha_{\beta}\right). \label{eq:5.54}
\end{equation}
 Here $D_\alpha\equiv \partial/\partial m^\alpha$. 
 The equations 
$G_\beta=0$ ($\beta=0,1,2,3$) are the pullback laws (\ref{eq:5.19}). Thus, 
the symplecticity conservation laws are compatibility conditions for the 
 laws $G_\beta=0$. 

The symplecticity laws 
 can be obtained from the pullback of (\ref{eq:5.49}) rather than 
using the symplecticity  law form  (\ref{eq:5.52}).  
From (\ref{eq:5.7}) and (\ref{eq:5.8}) we obtain:
\begin{align}
\kappa^0=&d\omega^0=du^i\wedge dx^i+dr\wedge dS+d\lambda^k\wedge d\mu^k, \label{eq:5.55}\\
\kappa^k=&d\omega^k=d\pi_{ik}\wedge dx^i. \label{eq:5.56}
\end{align}
Using the pullback operation  
  (\ref{eq:5.55}) gives:
\begin{align}
\psi^*\kappa^0=&
\left(\deriv{u^i}{t} dt+\deriv{u^i}{m^j} dm^j\right) 
\wedge \left(\deriv{x^i}{t} dt+\deriv{x^i}{m^s} dm^s\right)\nonumber\\
&+\left(\deriv{r}{t} dt+\deriv{r}{m^j} dm^j\right)
\wedge\left(\deriv{S}{t}dt+\deriv{S}{m^s} dm^s\right)\nonumber\\
&+\left(\deriv{\lambda^k}{t} dt+\deriv{\lambda^k}{m^j}dm^j\right)\wedge
\left(\deriv{\mu^k}{t} dt+\deriv{\mu^k}{m^s}dm^s\right)\nonumber\\
=&\biggl[\deriv{u^i}{t} \deriv{x^i}{m^s}-\deriv{u^i}{m^s} \deriv{x^i}{t} 
+\deriv{r}{t}\deriv{S}{m^s}-\deriv{S}{t}\deriv{r}{m^s}\biggr] dt\wedge dm^s\nonumber\\
&+\sum_{j<s}\biggl(\deriv{u^i}{m^j}\deriv{x^i}{m^s}-\deriv{u^i}{m^s}\deriv{x^i}{m^j}
+\deriv{r}{m^j}\deriv{S}{m^s}-\deriv{r}{m^s}\deriv{S}{m^j}\nonumber\\
&+\deriv{\lambda^k}{m^j}\deriv{\mu^k}{m^s}-\deriv{\lambda^k}{m^s}\deriv{\mu^k}{m^j}\biggr)
 dm^j\wedge dm^s. 
\label{eq:5.57} 
\end{align}
where $\psi^*$ denotes the pullback map to the base manifold. 
Similarly, we find:
\begin{align}
\psi^*\kappa^k=&\left(\deriv{\pi_{ik}}{t}\deriv{x^i}{m^s}-\deriv{\pi_{ik}}{m^s}\deriv{x^i}{t} \right)
dt\wedge dm^s\nonumber\\
&+\sum_{j<s}\left(\deriv{\pi_{ik}}{m^j}\deriv{x^i}{m^s}-\deriv{\pi_{ik}}{m^s}\deriv{x^i}{m^j}
\right) dm^j\wedge dm^s. \label{eq:5.58}
\end{align}

Using (\ref{eq:5.57}) and (\ref{eq:5.58}) in  (\ref{eq:5.49}) 
 we obtain the symplecticity conservation 
laws:
\begin{equation}
\derv{t}\left[\frac{\partial(u^i,x^i)}{\partial(t,m^s)} +\frac{\partial(r,S)}{\partial(t,m^s)}\right]
+\derv{m^k}\left[\frac{\partial(\pi_{ik},x^i)}{\partial(t,m^s)}\right]=0, \label{eq:5.59}
\end{equation}
corresponding to the $dt\wedge dm^s$ balance. In (\ref{eq:5.59}) 
\begin{equation}
\frac{\partial(f,g)}{\partial(x,y)}=f_xg_y-f_yg_x, \label{eq:5.60}
\end{equation}
denotes the Jacobian of the functions $f$ and $g$ with respect to $x$ and $y$. Similarly,
equating the $dm^j\wedge dm^s$ terms ($j<s$) in the equations $\kappa^\alpha_{,\alpha}=0$, 
we obtain the symplecticity conservation laws:
\begin{equation}
\derv{t}\left[\frac{\partial(u^i,x^i)}{\partial(m^j,m^s)} 
+\frac{\partial(r,S)}{\partial(m^j,m^s)}+\frac{\partial(\lambda^k,\mu^k)}{\partial(m^j,m^s)}
\right]
+\derv{m^k}\left[\frac{\partial(\pi_{ik},x^i)}{\partial(m^j,m^s)}\right]=0. \label{eq:5.61}
\end{equation}
There are thus, 6 symplecticity conservation laws, 3 in (\ref{eq:5.59}) and 3 in (\ref{eq:5.61}). 

Bridges et al. (2005) elucidate the connection between vorticity and symplecticity in 
fluid dynamics using a range of fluid models: the incompressible fluid, the barotropic fluid, 
and the shallow water equations used in geophysical fluid dynamics, including the effects of 
the Coriolis force and investigated conserved invariants due to 
fluid relabelling symmetries and Noether's second theorem.
Hydon and Mansfield
(2011) give a clear exposition of Noether's second theorem with applications. 

\begin{proposition}\label{5.1}
The symplecticity conservation law (\ref{eq:5.59}) reduces to the law:
\begin{equation}
\derv{m^s}\biggl\{\frac{d}{dt}\left(\frac{1}{2} u^2+e+\Phi({\bf x})\right)
+\derv{m^j}\left(p A_{kj} u^k\right)\biggr\}=0, \label{eq:5.62a}
\end{equation}
where $e=\varepsilon/\rho$ is the internal energy of the gas per unit mass. 
Equation (\ref{eq:5.62a}) is the derivative with respect to $m^s$ of 
the co-moving total energy equation (\ref{eq:5.25}). 
\end{proposition}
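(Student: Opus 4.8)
The plan is to substitute the field equations directly into the symplecticity law (\ref{eq:5.59}) and to collapse the resulting second-order expression into the $m^s$-derivative of the co-moving energy balance $G_0\equiv \frac{d}{dt}\left(\frac{1}{2}u^2+e+\Phi\right)+\partial_{m^j}(pA_{kj}u^k)$, whose vanishing is precisely (\ref{eq:5.25}). First I would expand the three Jacobians in (\ref{eq:5.59}) and insert the Lagrangian map relation $x^i_t=u^i$ (\ref{eq:5.16}), the entropy condition $S_t=0$, the conjugate relation $r_t=-T$ (\ref{eq:5.18}), and the constraint $\pi_{ik}=pA_{ik}$ (\ref{eq:5.18a}). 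Using the commuting mixed partials $\partial_t x^i_{m^s}=u^i_{m^s}$ and $u^i_{m^k}=x_{ik,t}$, the block $\partial(r,S)/\partial(t,m^s)$ collapses to $-T_t S_{m^s}$, and $\partial_t[\partial(u^i,x^i)/\partial(t,m^s)]$ reduces to $u^i_{tt}x^i_{m^s}-u^i_{m^s t}u^i$.

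The key reductions then come from the momentum equation (\ref{eq:5.15}), $u^i_t+\partial_{m^k}(pA_{ik})+\Phi_{x^i}=0$. Differentiating it once in $t$ and once in $m^s$ lets me eliminate the terms $u^i_{tt}x^i_{m^s}+\pi_{ik,tm^k}x^i_{m^s}$ and $\pi_{ik,m^sm^k}u^i$, while the second-order velocity terms $-u^i_{m^s t}u^i$ and $+u^i_{tm^s}u^i$ cancel. What remains are gravitational, cofactor and entropy contributions. The gravitational pieces $-\partial_t\Phi_{x^i}\,x^i_{m^s}+\partial_{m^s}\Phi_{x^i}\,u^i$ cancel identically: since $\Phi=\Phi(\mathbf{x})$ one has $\partial_t\Phi_{x^i}=\Phi_{x^ix^j}u^j$ and $\partial_{m^s}\Phi_{x^i}=\Phi_{x^ix^j}x^j_{m^s}$, and the Hessian $\Phi_{x^ix^j}$ is symmetric.

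The heart of the computation is the pressure/cofactor block $\partial_t(pA_{ik})\,x_{ik,m^s}-\partial_{m^s}(pA_{ik})\,x_{ik,t}$. Writing $A_{ik,\mu}=(\partial A_{ik}/\partial x_{jl})\,x_{jl,\mu}$ and noting that $\partial A_{ik}/\partial x_{jl}=\partial^2 J/\partial x_{ik}\partial x_{jl}$ is symmetric under $(ik)\leftrightarrow(jl)$, whereas $x_{jl,t}x_{ik,m^s}-x_{jl,m^s}x_{ik,t}$ is antisymmetric, the $p\,A_{,\mu}$ terms drop out; the remainder collapses via $A_{ik}x_{ik,\mu}=\partial J/\partial m^\mu$ to $p_t J_{m^s}-p_{m^s}J_t$. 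Combining this with $-T_t S_{m^s}$ and using $J=\tau$, $p=-e_\tau$, $T=e_S$, $S_t=0$ together with the equality of mixed partials $e_{\tau S}=e_{S\tau}$ shows the thermodynamic combination assembles into exactly $\partial_{m^s}G_0$ (equivalently it vanishes on solutions, matching $\partial_{m^s}$ of the identically-satisfied energy law). I expect the main obstacle to be the bookkeeping of the second-order mixed derivatives, namely ensuring the $t$- and $m^s$-derivatives of the momentum equation remove precisely the unwanted terms, and executing the symmetric--antisymmetric contraction for the cofactors correctly.

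A cleaner, more structural alternative uses the cross-differentiation identity (\ref{eq:5.53}). Taking $\gamma=0$, $\beta=s$ gives $D_0G_s-D_sG_0=D_\alpha({\sf K}^\alpha_{ij}z^i_{,0}z^j_{,s})$, whose right-hand side is the left-hand side of (\ref{eq:5.59}); here $G_0$ is the energy pullback law and $G_s$ is the mass-translation (Bernoulli) pullback law (\ref{eq:5.31}). Since $G_s=0$ on solutions, $D_0G_s=0$, so (\ref{eq:5.59}) reduces at once to $-\partial_{m^s}G_0=0$, which is (\ref{eq:5.62a}). I would present the direct substitution as the main argument and record this compatibility route as the conceptual explanation of why the reduction must succeed.
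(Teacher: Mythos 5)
Your proposal is correct, but both of your routes differ from the paper's actual proof, which is a strictly first-order computation: the paper evaluates the symplecticity density and flux of (\ref{eq:5.59}) \emph{before} differentiating, substituting the momentum equation (\ref{eq:2.2}) together with the thermodynamic identity (\ref{eq:2.7}) and $r_t=-T$, $S_t=0$ to get $D=-\partial\left(w+\Phi+\frac{1}{2}u^2\right)/\partial m^s$, and using the cofactor identity $\pi_{ik}x_{is}=pA_{ik}x_{is}=pJ\delta_{ks}$ to get $F^k=\partial_t\left(p\tau\delta^k_s\right)-\partial_{m^s}\left(pA_{ik}u^i\right)$; then $\partial D/\partial t+\partial F^k/\partial m^k$ collapses to $-\partial_{m^s}\{\cdot\}$ of (\ref{eq:5.62a}) in one line via $w-p\tau=e$. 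Your main route (differentiate first, then substitute the $t$- and $m^s$-differentiated momentum equation) is computationally heavier---it needs the Hessian symmetry of $\Phi$, the symmetric/antisymmetric contraction on $\partial A_{ik}/\partial x_{jl}$, and the thermodynamic mixed-partial cancellation, all of which you handle correctly---but, as executed, it terminates in $0=0$: once the momentum equation is fully substituted, the residue $-T_tS_{m^s}+p_tJ_{m^s}-p_{m^s}J_t$ vanishes identically, so the claimed identification with $\partial_{m^s}G_0$ is recovered only through your parenthetical remark rather than exhibited; this is a presentational weakness, not an error. Your second route repairs exactly this: the cross-differentiation relation (\ref{eq:5.53}) is an identity (not merely an on-shell statement), and since the mass-translation pullback law $G_s=0$ of (\ref{eq:5.31}) holds identically on solutions, taking $(\gamma,\beta)=(0,s)$ gives that the left side of (\ref{eq:5.59}) equals $-D_sG_0$, which is (\ref{eq:5.62a}) with the paper's sign. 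That argument is cleaner and more structural than the paper's computation---it turns the paper's own remark that the symplecticity laws are compatibility conditions for the pullback laws into a proof---at the cost of invoking (\ref{eq:5.53}) and the pullback law (\ref{eq:5.31}), neither of which the paper's proof uses. I would lead with that compatibility route and keep the direct computation as a consistency check, stated so that the momentum equation is substituted only once (as the paper does) rather than in differentiated form.
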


\begin{proof}
First note that (\ref{eq:5.59}) is equivalent to the conservation law:
\begin{equation}
\deriv{D}{t}+\deriv{F^k}{m^k}=0, \label{eq:5.63a}
\end{equation}
where
\begin{equation}
D= \frac{\partial(u^i,x^i)}{\partial(t,m^s)}+
\frac{\partial(r,S)}{\partial(t,m^s)}, \quad
F^k=\frac{\partial(\pi_{ik},x^i)}{\partial(t,m^s)}. \label{eq:5.64a}
\end{equation}
Using (\ref{eq:2.2}) and (\ref{eq:2.7}) and noting 
$\partial S/\partial t\equiv dS/dt=0$, we obtain:
\begin{align}
D=&\left(T\deriv{S}{x^i}-\deriv{w}{x^i}-\deriv{\Phi}{x^i}\right)x_{is}
-u^i\deriv{u^i}{m^s} 
+\frac{dr}{dt}\deriv{S}{m^s}\nonumber\\
=&\left(\frac{dr}{dt}+T\right)\deriv{S}{m^s}-\derv{m^s}\left(w+\Phi+\frac{1}{2}u^2\right)
\equiv -\derv{m^s}\left(w+\Phi+\frac{1}{2}u^2\right). \label{eq:5.66a}
\end{align}
Similarly, we obtain:
\begin{equation}
F^k=\derv{t}\left(\pi_{ik} x_{is}\right)
-\derv{m^s}\left(\pi_{ik}u^i\right)
=\derv{t}\left(p\tau\delta^k_s\right)-\derv{m^s}\left(p A_{ik} u^i\right), \label{eq:5.67a}
\end{equation}
where $\tau=1/\rho=J$. Using (\ref{eq:5.66a}) and (\ref{eq:5.67a}) in (\ref{eq:5.59})
gives:
\begin{equation}
\deriv{D}{t}+\deriv{F^k}{m^k}=-\derv{m^s}
\biggl\{\frac{d}{dt}\left(\frac{1}{2} u^2+e+\Phi({\bf x})\right)
+\derv{m^k}\left(p A_{ik} u^i\right)\biggr\}=0. \label{eq:5.68a}
\end{equation}
This proves (\ref{eq:5.62a}).
\end{proof}

 Below, we investigate the symplecticity laws (\ref{eq:5.61})
for the case where $j=\alpha$ and $s=\beta$ are fixed values of the indices $j$  and 
$s$ in (\ref{eq:5.61}).
\begin{proposition}\label{5.2}
The symplecticity conservation laws (\ref{eq:5.61}) can be written as:
\begin{equation}
\frac{dI^0_{\alpha\beta}}{dt}=0,\quad (1\leq\alpha < \beta\leq 3). 
\label{eq:5.62}
\end{equation}
(\ref{eq:5.62}) are a consequence of the symplecticity conservation laws:
\begin{equation}
\deriv{I^0_{\alpha\beta}}{t}+\deriv{I^k_{\alpha\beta}}{m^k}=0, \label{eq:5.62ab}
\end{equation}
in which $\partial I^k_{\alpha\beta}/\partial m^k=0$, and
\begin{align}
I_{\alpha\beta}^0=&{\bf e}_\alpha\times {\bf e}_\beta {\bf\cdot}
\left(\boldsymbol{\omega}
+\nabla r\times\nabla S+\nabla\lambda^k\times\nabla\mu^k\right), 
\quad (1\leq \alpha <\beta\leq 3), \label{eq:5.63ab}\\
I_{\alpha\beta}^k=&\derv{m^\alpha}\left(p J\delta^k_\beta\right)
-\derv{m^\beta}\left(p J\delta^k_\alpha\right),
\quad J=\det(x_{ij})=\frac{1}{\rho},\label{eq:5.63ac}\\
\boldsymbol{\omega}=&\nabla\times{\bf u},\quad {\bf e}_\mu=\deriv{\bf x}{m^\mu}, \quad 
(1\leq\mu\leq 3). \label{eq:5.63ad}
\end{align}
The term involving $\nabla\lambda^k\times\nabla \mu^k$ in (\ref{eq:5.63ab}) 
does not contribute to the conservation law (\ref{eq:5.62}). Thus, we may take:  
\begin{equation}
I^0_{\alpha\beta}={\bf e}_\alpha\times {\bf e}_\beta {\bf\cdot}\left(\boldsymbol{\omega}
+\nabla r\times\nabla S\right). \label{eq:5.65b}
\end{equation}
\end{proposition}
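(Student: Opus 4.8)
The plan is to check directly that, with the index pair $(j,s)=(\alpha,\beta)$ held fixed, the density and flux in the symplecticity law (\ref{eq:5.61}) equal the geometric quantities (\ref{eq:5.63ab}) and (\ref{eq:5.63ac}), and then to extract (\ref{eq:5.62}) from the fact that the flux is divergence-free in ${\bf m}$.

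First I would convert each Jacobian in the density into a scalar triple product. Writing $({\bf e}_\alpha\times{\bf e}_\beta)^i=\epsilon_{ipq}x^p_{,\alpha}x^q_{,\beta}$ and using the contraction $\epsilon_{ipq}\epsilon_{ilm}=\delta_{pl}\delta_{qm}-\delta_{pm}\delta_{ql}$ together with the chain rule $x^p_{,\alpha}\,\partial/\partial x^p=\partial/\partial m^\alpha$ at fixed $t$, I expect ${\bf e}_\alpha\times{\bf e}_\beta\cdot\boldsymbol{\omega}$ to collapse to $\partial(u^i,x^i)/\partial(m^\alpha,m^\beta)$ when $\boldsymbol{\omega}=\nabla\times{\bf u}$, and likewise ${\bf e}_\alpha\times{\bf e}_\beta\cdot(\nabla r\times\nabla S)$ and ${\bf e}_\alpha\times{\bf e}_\beta\cdot(\nabla\lambda^k\times\nabla\mu^k)$ to reduce to $\partial(r,S)/\partial(m^\alpha,m^\beta)$ and $\partial(\lambda^k,\mu^k)/\partial(m^\alpha,m^\beta)$. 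Summing the three reproduces the bracketed density of (\ref{eq:5.61}), establishing (\ref{eq:5.63ab}).

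The flux needs more care. With $\pi_{ik}=pA_{ik}$ and $x^i_{,\beta}=x_{i\beta}$, I would integrate by parts in $m^\alpha$, writing $\frac{\partial(pA_{ik})}{\partial m^\alpha}\,x_{i\beta}=\derv{m^\alpha}\!\left(pA_{ik}x_{i\beta}\right)-pA_{ik}x_{i\beta,\alpha}$, and similarly in $m^\beta$. The cofactor identity $A_{ik}x_{i\beta}=J\delta^k_\beta$ (a form of the relations (\ref{eq:3.9})) turns the total-derivative pieces into $\derv{m^\alpha}(pJ\delta^k_\beta)$ and $\derv{m^\beta}(pJ\delta^k_\alpha)$, giving (\ref{eq:5.63ac}). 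The leftover second-derivative remainders $pA_{ik}x_{i\beta,\alpha}$ and $pA_{ik}x_{i\alpha,\beta}$ then cancel, not because either vanishes, but because $x_{i\beta,\alpha}=x_{i\alpha,\beta}$ by equality of mixed partials. This cancellation is the step I expect to be the main obstacle, since it relies both on the correct index placement in the cofactor contraction and on recognizing that the nontensorial remainders drop only after the antisymmetrization in $\alpha,\beta$.

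Finally I would note that the flux is divergence-free: inserting (\ref{eq:5.63ac}) gives $\partial I^k_{\alpha\beta}/\partial m^k=\partial^2(pJ)/\partial m^\beta\partial m^\alpha-\partial^2(pJ)/\partial m^\alpha\partial m^\beta=0$, again by symmetry of mixed partials. Combined with the conservation law (\ref{eq:5.62ab})—which is just (\ref{eq:5.61}) pulled back to the base manifold—and with $\partial_t|_{\bf m}=d/dt$, this forces $dI^0_{\alpha\beta}/dt=0$, proving (\ref{eq:5.62}). To justify dropping the Clebsch term, I would observe that its contribution to the density, $\partial(\lambda^k,\mu^k)/\partial(m^\alpha,m^\beta)=\lambda^k_{,\alpha}\mu^k_{,\beta}-\lambda^k_{,\beta}\mu^k_{,\alpha}$, is separately annihilated by $d/dt$: since $d\lambda^k/dt=d\mu^k/dt=0$ from (\ref{eq:5.18b}) and $d/dt$ commutes with $\partial/\partial m^\alpha$, this piece is itself time-independent and may be discarded, leaving the reduced density (\ref{eq:5.65b}).
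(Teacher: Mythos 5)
Your proposal is correct and follows essentially the same route as the paper's own proof: the density is identified with the triple products via the $\epsilon$-contraction (the paper does the identical computation in the reverse direction, via the hat map $\Omega_{ij}=-\varepsilon_{ijk}\omega^k$), the flux is reduced using the cofactor identity $A_{ik}x_{i\beta}=J\delta^k_\beta$ with the mixed-partial remainders cancelling under the $\alpha,\beta$ antisymmetrization, the ${\bf m}$-divergence of the flux vanishes by equality of mixed partials, and the Clebsch term is dropped because $d\lambda^k/dt=d\mu^k/dt=0$ makes $D_3$ separately time-independent. No gaps; the argument matches the paper step for step.
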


\begin{proof}
From (\ref{eq:5.63ac}):
\begin{equation}
\deriv{I^k_{\alpha\beta}}{m^k}=\derv{m^\beta}
\left[\derv{m^\alpha}\left(\frac{p}{\rho}\right)\right]
-\derv{m^\alpha}\left[\derv{m^\beta}\left(\frac{p}{\rho}\right)\right]
=0. \label{eq:5.68}
\end{equation}

To prove (\ref{eq:5.62}), note that the conserved density from (\ref{eq:5.61}) has the form:
\begin{equation}
D=D_1+D_2+D_3\quad \hbox{where}\quad D_1=\frac{\partial(u^i,x^i)}{\partial(m^\alpha,m^\beta)}, \quad
D_2=\frac{\partial(r,S)}{\partial(m^\alpha,m^\beta)},
\quad D_3=\frac{\partial(\lambda^k,\mu^k)}{\partial(m^\alpha,m^\beta)}. \label{eq:5.70}
\end{equation}
Using the Lagrangian map, $x^i=x^i({\bf m},t)$ and assuming $J\neq 0$, we obtain:
\begin{equation}
D_1=\Omega_{ik} x_{k\alpha} x_{i\beta}\quad\hbox{where}\quad \Omega_{ik}=\deriv{u^i}{x^k}-\deriv{u^k}{x^i}, 
\label{eq:5.71}
\end{equation}
is a skew symmetric matrix (tensor) which is related to the fluid vorticity 
$\boldsymbol{\omega}=\nabla\times{\bf u}$.  Using the hat map (Holm (2008), Vol. 2):
we obtain:. 
\begin{equation}
\Omega_{ij}=-\varepsilon_{ijk}\omega^k,\quad \omega^k=-\frac{1}{2}\varepsilon_{kij} \Omega_{ij}. 
\label{eq:5.74}
\end{equation}
Using (\ref{eq:5.74}) in (\ref{eq:5.71}), we obtain:
\begin{equation}
D_1={\bf e}_{\alpha}\times{\bf e}_{\beta} {\bf\cdot}\boldsymbol{\omega}\equiv 
\frac{\partial(u^i,x^i)}{\partial(m^\alpha,m^\beta)}. \label{eq:5.76}
\end{equation}

A similar calculation for the second term $D_2$ in (\ref{eq:5.70}) gives:
\begin{equation}
D_2=\frac{\partial(r,S)}{\partial(m^\alpha,m^\beta)}=\Phi_{ik} x_{i\alpha}x_{k\beta}\quad\hbox{where}\quad 
\Phi_{ik}=\frac{\partial(r,S)}{\partial( x^i,x^k)}. \label{eq:5.77}
\end{equation}
The matrix $\Phi_{ik}$ is skew symmetric with $\Phi_{ki}=-\Phi_{ik}$. 
We find:
\begin{equation}
D_2=\frac{\partial(r,S)}{\partial(m^\alpha, m^\beta)}={\bf e}_\alpha\times {\bf e}_\beta
{\bf\cdot}(\nabla r\times\nabla S). \label{eq:5.81}
\end{equation}
 Similarly,
\begin{equation}
D_3=\frac{\partial(\lambda^k,\mu^k)}{\partial(m^\alpha, m^\beta)}
={\bf e}_\alpha\times {\bf e}_\beta{\bf\cdot}(\nabla \lambda^k\times\nabla \mu^k). 
\label{eq:5.81a}
\end{equation}
It is straightforward to verify that $d/dt(D_3)=0$ using (\ref{eq:5.81a}), because 
$d\lambda^k/dt=0$ and $d\mu^k/dt=0$.
Using (\ref{eq:5.76}) for $D_1$ and (\ref{eq:5.81}) for $D_2$ 
gives the result (\ref{eq:5.63ab}) 
for  $I^0_{\alpha\beta}$. Since $dD_3/dt=0$, (\ref{eq:5.63ab}) reduces to (\ref{eq:5.65b}).  

The conserved flux $I^k_{\alpha\beta}$ in (\ref{eq:5.61})
and (\ref{eq:5.62}) is given by:
\begin{equation}
I^k_{\alpha\beta}=\frac{\partial(\pi_{ik},x^i)}{\partial(m^\alpha,m^\beta)}
\equiv \derv{m^\alpha}\left(pA_{ik}x_{i\beta}\right)
-\derv{m^\beta}\left(pA_{ik}x_{i\alpha}\right). \label{eq:5.82}
\end{equation}
By noting that $A_{ik}x_{i \beta}=J\delta^k_\beta$, (\ref{eq:5.82}) reduces to 
the expression (\ref{eq:5.63ac}) for $I^k_{\alpha\beta}$. Note that 
$\partial I^k_{\alpha\beta}/\partial m^k=0$ in (\ref{eq:5.68}) 
follows from (\ref{eq:5.63ac}), 
which implies that $I^0_{\alpha\beta}$ is a conserved density that is advected with 
the flow. This completes the proof. 
\end{proof}

\begin{proposition}\label{5.3}
The conservation laws (\ref{eq:5.62}) imply
\begin{equation}
\frac{d}{dt}\left(\frac{\Omega^\gamma}{\rho}\right)=0. \label{eq:5.83}
\end{equation}
If $\Psi({\bf m})$ is a scalar advected with the flow, (\ref{eq:5.83}) implies
\begin{equation}
 \frac{d}{dt}\left(\frac{\boldsymbol{\Omega}{\bf\cdot}\nabla \Psi}
{\rho}\right)=0\quad \hbox{and}\quad 
\frac{d}{dt}\left(\frac{\boldsymbol{\omega}{\bf\cdot}\nabla S}{\rho}\right)=0, 
\label{eq:5.83a}
\end{equation}
where 
\begin{equation}
\boldsymbol{\Omega}=\boldsymbol{\omega}+\nabla r\times\nabla S
=\Omega^\gamma {\bf e}_\gamma. \label{eq:5.84}
\end{equation}
The conservation law (\ref{eq:5.83a}) is equivalent to Ertel's theorem, i.e. $dq/dt=0$ where 
$q=\boldsymbol{\Omega}{\bf\cdot}\nabla \Psi/\rho$ is the potential vorticity. In the particular 
case $\Psi=S$, then $q=\boldsymbol{\Omega}{\bf\cdot}\nabla\Psi/\rho\equiv 
\boldsymbol{\omega}{\bf\cdot}\nabla S/\rho$. For $\Psi\neq S$, 
\begin{equation}
 q=q_c+
\frac{\nabla r\times\nabla S{\bf\cdot}\nabla\Psi}{\rho}\quad\hbox{where}
\quad q_c=\frac{\boldsymbol{\omega}{\bf\cdot}\nabla\Psi}{\rho} \label{eq:5.84a}
\end{equation} 
is the classical potential vorticity. 
Thus, $q$  differs from 
$q_c$  
if $\nabla r\times\nabla S{\bf\cdot}\nabla\Psi\neq 0$.
\end{proposition}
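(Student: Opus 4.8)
The plan is to first rewrite the conserved density $I^0_{\alpha\beta}$ of Proposition~\ref{5.2} entirely in terms of the contravariant components $\Omega^\gamma$ of $\boldsymbol{\Omega}=\boldsymbol{\omega}+\nabla r\times\nabla S$, and then to invert that relation. Starting from the reduced form (\ref{eq:5.65b}), i.e.\ $I^0_{\alpha\beta}={\bf e}_\alpha\times{\bf e}_\beta{\bf\cdot}\boldsymbol{\Omega}$, I would substitute the decomposition $\boldsymbol{\Omega}=\Omega^\gamma{\bf e}_\gamma$ of (\ref{eq:5.84}) together with the cross-product identity ${\bf e}_\alpha\times{\bf e}_\beta=J\epsilon_{\alpha\beta\delta}{\bf e}^\delta$ from (\ref{eq:3.11}). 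Using the duality ${\bf e}^\delta{\bf\cdot}{\bf e}_\gamma=\delta^\delta_\gamma$ and $J=1/\rho$, this should collapse to
\[
I^0_{\alpha\beta}=\epsilon_{\alpha\beta\gamma}\,\frac{\Omega^\gamma}{\rho},
\]
and contracting with $\tfrac12\epsilon_{\alpha\beta\sigma}$ (using $\tfrac12\epsilon_{\alpha\beta\gamma}\epsilon_{\alpha\beta\sigma}=\delta_{\gamma\sigma}$) should invert this to $\Omega^\gamma/\rho=\tfrac12\epsilon_{\alpha\beta\gamma}I^0_{\alpha\beta}$.

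Since the right-hand side is then a constant-coefficient combination of the $I^0_{\alpha\beta}$, and each of these satisfies $dI^0_{\alpha\beta}/dt=0$ by (\ref{eq:5.62}), applying $d/dt$ term by term immediately yields $d(\Omega^\gamma/\rho)/dt=0$, which is (\ref{eq:5.83}). To reach (\ref{eq:5.83a}), for an advected scalar $\Psi=\Psi({\bf m})$ I would write $\nabla\Psi=(\partial\Psi/\partial m^\gamma){\bf e}^\gamma$ (since $\nabla m^\gamma={\bf e}^\gamma$), so that $\boldsymbol{\Omega}{\bf\cdot}\nabla\Psi/\rho=(\Omega^\gamma/\rho)(\partial\Psi/\partial m^\gamma)$. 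Differentiating along the flow by the Leibniz rule, the first factor is killed by (\ref{eq:5.83}) while the second is killed by the commutation $\tfrac{d}{dt}(\partial\Psi/\partial m^\gamma)=\partial/\partial m^\gamma(d\Psi/dt)=0$, the Lagrangian time derivative commuting with $\partial/\partial m^\gamma$ because ${\bf m}$ and $t$ are independent variables. Specialising to $\Psi=S$ (advected by (\ref{eq:el6})) and using the scalar-triple-product identity $(\nabla r\times\nabla S){\bf\cdot}\nabla S=0$ to collapse $\boldsymbol{\Omega}{\bf\cdot}\nabla S$ to $\boldsymbol{\omega}{\bf\cdot}\nabla S$ then gives the second law in (\ref{eq:5.83a}).

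Finally, the identification (\ref{eq:5.84a}) follows by splitting $\boldsymbol{\Omega}{\bf\cdot}\nabla\Psi=\boldsymbol{\omega}{\bf\cdot}\nabla\Psi+(\nabla r\times\nabla S){\bf\cdot}\nabla\Psi$ and naming $q_c=\boldsymbol{\omega}{\bf\cdot}\nabla\Psi/\rho$. I expect the only real obstacle to be the index bookkeeping in the first step---correctly pairing the contravariant components $\Omega^\gamma$ in the tangent basis ${\bf e}_\gamma$ against the antisymmetric density $I^0_{\alpha\beta}$ and carrying the Jacobian factor $J=1/\rho$ cleanly through the duality contraction. The second delicate point is justifying that $\partial\Psi/\partial m^\gamma$ is itself advected: this rests purely on the commutativity of the two independent-variable derivatives and on $d\Psi/dt=0$, not on any dynamical equation, and should be stated explicitly to avoid circularity.
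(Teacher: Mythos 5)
Your proposal is correct and follows essentially the same route as the paper's own proof: both convert $I^0_{\alpha\beta}$ to $\epsilon_{\alpha\beta\gamma}\Omega^\gamma/\rho$ via (\ref{eq:3.11}) and the duality ${\bf e}^\gamma{\bf\cdot}{\bf e}_\delta=\delta^\gamma_\delta$, invoke (\ref{eq:5.62}), and then apply the Leibniz rule together with the commutation of $d/dt$ and $\partial/\partial m^\gamma$ to $(\Omega^\gamma/\rho)\,\partial\Psi/\partial m^\gamma$, with $d\Psi/dt=0$. The only cosmetic difference is that you carry out the inversion explicitly by contracting with $\tfrac12\epsilon_{\alpha\beta\sigma}$, whereas the paper simply reads off $d(\Omega^\gamma/\rho)/dt=0$ from the vanishing of $\epsilon_{\alpha\beta\gamma}\,d(\Omega^\gamma/\rho)/dt$.
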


\begin{proof}
From (\ref{eq:5.62}) and (\ref{eq:3.11}):
\begin{equation}
\frac{d}{dt}\left({\bf e}_\alpha\times{\bf e}_\beta{\bf\cdot}\boldsymbol{\Omega}\right)=\frac{d}{dt}
\left(\epsilon_{\alpha\beta\gamma} {\bf e}^\gamma {\bf\cdot} \frac{\boldsymbol{\Omega}}{\rho}\right)
=\epsilon_{\alpha\beta\gamma}\frac{d}{dt}\left(\frac{\Omega^\gamma}{\rho}\right)=0. 
\label{eq:5.85}
\end{equation}
Equation (\ref{eq:5.85}) implies $d(\Omega^\gamma/\rho)/dt=0$ which establishes 
(\ref{eq:5.83}). Next notice that:
\begin{equation}
 \frac{\boldsymbol{\Omega}{\bf\cdot}\nabla \Psi}{\rho}
=\frac{\Omega^k}{\rho}\deriv{\Psi}{m^k}. \label{eq:5.86}
\end{equation}
Taking the Lagrangian time derivative of (\ref{eq:5.86}) gives:
\begin{equation}
\frac{d}{dt}\left(\frac{\boldsymbol{\Omega}{\bf\cdot}\nabla \Psi}{\rho}\right)
=\frac{d}{dt}\left(\frac{\Omega^k}{\rho}\deriv{\Psi}{m^k}\right)
=\frac{d}{dt}\left(\frac{\Omega^k}{\rho}\right)\deriv{\Psi}{m^k}
+\frac{\Omega^k}{\rho} \frac{d}{dt}\left(\deriv{\Psi}{m^k}\right)=0. \label{eq:5.87}
\end{equation}
In (\ref{eq:5.87}), $d/dt(\Omega^k/\rho)=0$ by (\ref{eq:5.83}) 
and $d/dt(\partial \Psi/\partial m^k)=(\partial/\partial m^k)(d\Psi/dt)=0$. 
This proves the generalized form of Ertel's theorem involving the nonlocal Clebsch potential $r$. 
\end{proof}

\section{Lie dragging and Noether's second theorem}
In this section we provide alternative approaches to interpret the multi-symplecticity 
conservation laws associated with fluid vorticity developed in section 5.  The concept 
of Lie dragging of geometrical objects, e.g. vectors fields, differential forms, tensors 
etc., by a vector field ${\bf V}$ is described by Schutz (1980). This describes 
the rate of change of a geometrical object ${\bf G}$ in the direction of the vector field 
${\bf V}$, which is written as ${\cal L}_{\bf V}({\bf G})\equiv d ({\bf G})/d\epsilon$
where ${\cal L}_{\bf V}$ denotes the Lie derivative with respect to the vector field 
${\bf V}$. This is the directional derivative $d/d \epsilon$ along a curve, 
with parameter $\epsilon$ and  with tangent vector ${\bf V}$, 
 corresponding to a Lie symmetry of the system. 
In order to compare like quantities, it is necessary to parallel transport
${\bf G}$ at the point along the curve, with tangent vector ${\bf V}$ 
back to the initial point ($\epsilon=0$) of the curve with tangent vector ${\bf V}$. 
In the analysis 
below, ${\bf V}\equiv{\bf u}$ is the three dimensional fluid velocity ${\bf u}$. 
Note that both base vectors and tensor components both change in the Lie derivative. 
\subsection{Lie dragging approach}

The results of propositions \ref{5.2} and \ref{5.3} are difficult to interpret, since they 
are expressed in terms of the holonomic base vectors 
${\bf e}_\alpha=\partial {\bf x}/\partial m^\alpha$ and 
${\bf e}^\gamma=\partial m^\gamma/\partial {\bf x}$. Another way in which to interpret these 
results is to use the Lie dragging approach to advected invariants in fluid mechanics 
and magnetohydrodynamics (MHD) used by Tur and Yanovsky (1993) and Webb et al. (2014a). 
From Webb et al. 
(2014a) 
(see also Appendix D), 
the fluid velocity ${\bf u}$ can be expressed in 
the Clebsch potential form:
\begin{equation}
{\bf u}=\nabla\phi-r\nabla S-\tilde{\lambda}\nabla\mu, \label{eq:fr1}
\end{equation}
where the usual Clebsch variables are $\beta=r\rho$ and $\lambda=\rho\tilde{\lambda}$. The variables 
$r$ and $\tilde{\lambda}$ and $\mu$ satisfy the equations:
\begin{equation}
\frac{d\tilde{\lambda}}{dt}=\frac{d\mu}{dt}=0,\quad \frac{dr}{dt}=-T, \label{eq:fr2}
\end{equation}
where $d/dt=\partial t+{\bf u}{\bf\cdot}\nabla$ is the Lagrangian time derivative.
We introduce the velocity field ${\bf v}$ and $\boldsymbol{\Omega}=\nabla\times {\bf v}$:
\begin{align}
{\bf v}=&{\bf u}+r\nabla S-\nabla\phi\equiv -\tilde{\lambda}\nabla\mu, \label{eq:fr3}\\
\boldsymbol{\Omega}=&\nabla\times{\bf v}=\boldsymbol{\omega}+\nabla r\times\nabla S\equiv
-\nabla\tilde{\lambda}\times\nabla\mu, \label{eq:fr4}
\end{align}
where $\boldsymbol{\omega}=\nabla\times{\bf u}$ is the total vorticity of the fluid. The vorticity 
vector $\boldsymbol{\Omega}$ represents the component of the fluid vorticity that is independent 
of the entropy gradients. 
Note that both $\tilde{\lambda}$ and $\mu$ are scalars that are advected 
(Lie dragged) by the background  flow. The one form:
\begin{equation}
\alpha={\bf v}{\bf\cdot}d{\bf x}=-\tilde{\lambda}\nabla\mu{\bf\cdot}d{\bf x}\equiv -\tilde{\lambda}d\mu,
\label{eq:fr5}
\end{equation}
is Lie dragged by the background flow ${\bf u}$.   

\begin{proposition}\label{5.4}
The one-forms $\alpha={\bf v}{\bf\cdot}d{\bf x}$ and the one form 
$\gamma=d\mu=\nabla\mu{\bf\cdot}d{\bf x}$ are advected, scalar invariants moving  
with the flow (e.g.  Webb et al. (2014a)): 
\begin{equation}
\left(\derv{t}+ {\cal L}_{\bf u}\right)\alpha=\left(\deriv{\bf v}{t}
-{\bf u}\times (\nabla\times {\bf v}) +\nabla({\bf u}{\bf\cdot}{\bf v})\right){\bf\cdot}d{\bf x}=0, 
\label{eq:fr6}
\end{equation}
where ${\cal L}_{\bf u}={\bf u}{\bf\cdot}\nabla=u^i\partial/\partial x^i$ is the Lie derivative 
with respect ${\bf u}$. 
The two-form:
\begin{equation}
\beta=d\alpha=(\nabla\times{\bf v}){\bf\cdot}d{\bf S}=\boldsymbol{\Omega}{\bf\cdot}d{\bf S}, 
\label{eq:fr7}
\end{equation}
is an advected invariant 2-form, satisfying the equation:
\begin{equation}
\left(\derv{t}+{\cal L}_{\bf u}\right) \beta=
\left(\deriv{\boldsymbol{\Omega}}{t}
-\nabla\times\left({\bf u}\times\boldsymbol{\Omega}\right)
+{\bf u}\left(\nabla{\bf\cdot}\boldsymbol{\Omega}\right)\right){\bf\cdot} d{\bf S}=0, \label{eq:fr8}
\end{equation}
which is analogous to  Faraday's equation for the magnetic field  
${\bf B}$ but with ${\bf B}$ replaced by $\boldsymbol{\Omega}$.  
Note that $\nabla{\bf\cdot}\boldsymbol{\Omega}=\nabla{\bf\cdot}\nabla\times{\bf v}=0$, 
which is analogous to $\nabla{\bf\cdot}{\bf B}=0$ in MHD. 

The 
conservation law for $\boldsymbol{\Omega}{\bf\cdot}d{\bf S}$ in (\ref{eq:fr8}) is equivalent to the symplecticity conservation 
law (\ref{eq:5.62}) in proposition (\ref{5.2}).
The vector field:
\begin{equation}
{\bf b}=\frac{\boldsymbol{\Omega}}{\rho}{\bf\cdot}\nabla, \label{eq:fr9}
\end{equation}
is Lie dragged with the flow, i.e. 
\begin{equation}
\deriv{\bf b}{t}+[{\bf u},{\bf b}]=0\quad\hbox{where}\quad [{\bf u},{\bf b}]=\left(u^j\deriv{b^i}{x^j}-b^j\deriv{u^i}{x^j}\right)\nabla_i,  \label{eq:fr10}
\end{equation}
is the left Lie bracket of the vector field of ${\bf b}$ with respect to ${\bf u}$. 

In proposition (\ref{5.3}) the equations: 
\begin{align}
\frac{\Omega^\gamma}{\rho}\equiv\frac{\boldsymbol{\Omega}{\bf\cdot}{\bf e}^\gamma}{\rho}
=&\left(\frac{\hat{\Omega}^i}{\rho}\derv{x^i}\right)\lrcorner\left(\deriv{m^\gamma}{x^k}dx^k\right)
=\frac{\hat{\Omega}^i}{\rho}\left(\deriv{m^\gamma}{x^i}\right), \label{eq:fr11}\\
\frac{d}{dt}\left(\frac{\Omega^\gamma}{\rho}\right)
=&\frac{d}{dt}\left(\frac{\hat{\Omega}^i}{\rho}\deriv{m^\gamma}{x^i}\right)=0, \label{eq:fr12}
\end{align}
are equivalent to Lie dragging ${\bf b}=(\hat{\Omega}^i/\rho)\partial/\partial x^i$ in (\ref{eq:fr10}). 

\end{proposition}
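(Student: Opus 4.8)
The plan is to treat every advected-invariant claim in the proposition by a single mechanism, namely the \emph{pullback principle}: a geometric object $T$ is Lie dragged by the flow, $(\partial_t+\mathcal{L}_{\bf u})T=0$, precisely when its pullback to the fixed Lagrangian label space under the inverse map ${\bf m}={\bf m}({\bf x},t)$ is time-independent --- equivalently, when its components in the comoving frame $\{{\bf e}_\alpha,{\bf e}^\gamma\}$ of (\ref{eq:3.11}) are advected constants. This dictionary is what converts the Eulerian Lie-dragging equations (\ref{eq:fr6}), (\ref{eq:fr8}), (\ref{eq:fr10}) into the label-coordinate symplecticity laws of Propositions \ref{5.2} and \ref{5.3}, and it underlies each step. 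I would also use repeatedly that the exterior derivative commutes with the advective operator, $(\partial_t+\mathcal{L}_{\bf u})\,d=d\,(\partial_t+\mathcal{L}_{\bf u})$, and Cartan's magic formula $\mathcal{L}_{\bf u}\omega={\bf u}\lrcorner d\omega+d({\bf u}\lrcorner\omega)$.

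First I would dispose of $\gamma=d\mu$: since $\mu$ obeys $d\mu/dt=\partial_t\mu+{\bf u}\cdot\nabla\mu=0$ by (\ref{eq:fr2}), commuting $d$ through gives $(\partial_t+\mathcal{L}_{\bf u})d\mu=d(d\mu/dt)=0$. For $\alpha=-\tilde{\lambda}\,d\mu$ I would apply the Leibniz rule $\mathcal{L}_{\bf u}(\tilde{\lambda}\,d\mu)=({\bf u}\cdot\nabla\tilde{\lambda})\,d\mu+\tilde{\lambda}\,\mathcal{L}_{\bf u}d\mu$; combining with $\partial_t$ and using $d\tilde{\lambda}/dt=0$ together with the previous step yields $(\partial_t+\mathcal{L}_{\bf u})\alpha=-(d\tilde{\lambda}/dt)\,d\mu=0$. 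The explicit vector form (\ref{eq:fr6}) then follows from Cartan's formula with ${\bf u}\lrcorner\alpha={\bf u}\cdot{\bf v}$ and $d\alpha=(\nabla\times{\bf v})\cdot d{\bf S}$, so that ${\bf u}\lrcorner d\alpha=-[{\bf u}\times(\nabla\times{\bf v})]\cdot d{\bf x}$.

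Next I would establish (\ref{eq:fr7})--(\ref{eq:fr8}). Since $d$ commutes with the advective operator, $(\partial_t+\mathcal{L}_{\bf u})\beta=(\partial_t+\mathcal{L}_{\bf u})d\alpha=d[(\partial_t+\mathcal{L}_{\bf u})\alpha]=0$, and applying Cartan's formula to $\beta=\boldsymbol{\Omega}\cdot d{\bf S}$, using ${\bf u}\lrcorner\beta=(\boldsymbol{\Omega}\times{\bf u})\cdot d{\bf x}$ and $d\beta=(\nabla\cdot\boldsymbol{\Omega})\,d^3x$, reproduces the Faraday-type induction equation (\ref{eq:fr8}), the term ${\bf u}(\nabla\cdot\boldsymbol{\Omega})$ being carried along even though $\nabla\cdot\boldsymbol{\Omega}=0$. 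For the equivalence with Proposition \ref{5.2} I would invoke the pullback principle: the label-coframe components of $\beta$ are $\beta({\bf e}_\alpha,{\bf e}_\beta)=\boldsymbol{\Omega}\cdot({\bf e}_\alpha\times{\bf e}_\beta)=I^0_{\alpha\beta}$ by (\ref{eq:5.84}) and (\ref{eq:3.11}); hence $(\partial_t+\mathcal{L}_{\bf u})\beta=0$ is literally $dI^0_{\alpha\beta}/dt=0$, which is (\ref{eq:5.62}).

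Finally, for the dual vector-field statement (\ref{eq:fr9})--(\ref{eq:fr10}), I would use the standard MHD-type duality for $\boldsymbol{\Omega}/\rho$: given the induction equation (\ref{eq:fr8}) for $\boldsymbol{\Omega}$ and the Lagrangian mass continuity $d\rho/dt=-\rho\,\nabla\cdot{\bf u}$, the field ${\bf b}=(\boldsymbol{\Omega}/\rho)\cdot\nabla$ satisfies $\partial_t{\bf b}+[{\bf u},{\bf b}]=0$, which I would verify by expanding $[{\bf u},{\bf b}]^i=u^j\partial_j b^i-b^j\partial_j u^i$ and substituting the evolution of $\boldsymbol{\Omega}$ and $\rho$. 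The equivalence with (\ref{eq:fr11})--(\ref{eq:fr12}) is again the pullback principle, now in its contravariant form: Lie dragging of ${\bf b}$ is the constancy of its components ${\bf b}\cdot{\bf e}^\gamma=(\hat{\Omega}^i/\rho)(\partial m^\gamma/\partial x^i)=\Omega^\gamma/\rho$ along the flow, i.e. $d(\Omega^\gamma/\rho)/dt=0$ of Proposition \ref{5.3}. The main obstacle I anticipate is the bookkeeping in this pullback principle --- verifying that the Eulerian operator $\partial_t+\mathcal{L}_{\bf u}$ corresponds exactly to $d/dt$ acting on the comoving components, with the correct placement of the density factors $\rho$ and the correct orientation and sign conventions in the hat map (\ref{eq:5.74}) and the identities (\ref{eq:3.11}). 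Once that dictionary is fixed, each individual equality is a short Cartan/Leibniz computation, and the real content of the proposition is that all four objects share the same advected-component structure.
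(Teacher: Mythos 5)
Your proposal is correct, and on the points where the paper actually supplies an argument it is essentially your argument; the difference is structural. The paper does not prove (\ref{eq:fr6})--(\ref{eq:fr10}) at all --- it cites Webb et al. (2014a) for them --- and its own proof consists only of the two equivalence claims. For the first, it assembles the tensor $T=({\bf e}_\alpha\times{\bf e}_\beta{\bf\cdot}\boldsymbol{\Omega})\,dm^\alpha\otimes dm^\beta$ and shows via (\ref{eq:3.11}) and the hat map that $T=\boldsymbol{\Omega}{\bf\cdot}d{\bf S}$, so that $dT/dt=0$ is simultaneously (\ref{eq:fr8}) and the sum of (\ref{eq:5.62}) weighted by $dm^\alpha\otimes dm^\beta$; this is the dual formulation of your evaluation $\beta({\bf e}_\alpha,{\bf e}_\beta)=I^0_{\alpha\beta}$, i.e.\ components summed against basis forms rather than the form evaluated on the advected basis vectors. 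For the second, it performs the chain-rule computation $d/dt\bigl(b^i\,\partial m^\gamma/\partial x^i\bigr)=\bigl(\partial_t{\bf b}+[{\bf u},{\bf b}]\bigr)^i\,\partial m^\gamma/\partial x^i$ using $dm^\gamma/dt=0$, which is precisely your contravariant pullback principle. What your route buys is self-containedness: the Cartan-calculus derivations of (\ref{eq:fr6})--(\ref{eq:fr8}) (commuting $d$ through $\partial_t+{\cal L}_{\bf u}$, Leibniz on $\alpha=-\tilde{\lambda}\,d\mu$ with (\ref{eq:fr2}), then Cartan's magic formula to extract the vector forms) and the MHD-type verification of (\ref{eq:fr10}) from (\ref{eq:fr8}) plus mass continuity prove exactly the statements the paper outsources to the earlier reference, at the cost of the frame/sign bookkeeping you flag; the paper's citation avoids that work but leaves the proposition's core Lie-dragging claims unproved within the paper itself.
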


\begin{proof}
 The proof of (\ref{eq:fr6})-(\ref{eq:fr10}) are given in Webb et al. (2014a). 
To show that (\ref{eq:fr8}) is equivalent to (\ref{eq:5.62}), note that the sum:
\begin{align}
T=&({\bf e}_\alpha\times{\bf e}_\beta {\bf\cdot}\boldsymbol{\Omega}) dm^\alpha\otimes dm^\beta
\nonumber\\
=&\epsilon_{ijk}\left(\deriv{x^j}{m^\alpha} dm^\alpha\right)\otimes
\left(\deriv{x^k}{m^\beta} dm^\beta\right)\hat{\Omega^i}\nonumber\\
=&\left(\epsilon_{ijk}dx^j\otimes dx^k\right) \hat{\Omega}^i
=\sum_{j<k}\Omega_{jk} dx^j\wedge dx^k, \label{eq:fr13}
\end{align}
where $\Omega_{jk}=\epsilon_{ijk}\hat{\Omega}^i$ is the dual of $\hat{\Omega}^i$. Thus,
\begin{equation}
T=\boldsymbol{\Omega}{\bf\cdot} d{\bf S}
=\hat{\Omega}^x dy\wedge dz+\hat{\Omega}^y dz\wedge dx+\hat{\Omega}^z dx\wedge dy. \label{eq:fr14}
\end{equation}
Equation (\ref{eq:fr8}) is equivalent to $dT/dt=0$. Also equation (\ref{eq:5.62}) 
summed with weight factors 
$dm^\alpha\otimes dm^\beta$ is equivalent to $dT/dt=0$, which verifies that (\ref{eq:5.62})
(when summed with weight factors $dm^\alpha\otimes dm^\beta$) is equivalent to Lie dragging 
the vorticity 2-form $\boldsymbol{\Omega}{\bf\cdot}d{\bf S}$ described by (\ref{eq:fr8}).

To prove (\ref{eq:fr12})
write $b^i=\hat{\Omega}^i/\rho$ and note $d/dt=\partial/\partial t+{\bf u}{\bf\cdot}\nabla$. 
We obtain:
\begin{align}
\frac{d}{dt}\left(\frac{\hat{\Omega}^i}{\rho}\deriv{m^\gamma}{x^i}\right)= &\frac{d}{dt}
\left(b^i\deriv{m^\gamma}{x^i}\right)=\frac{db^i}{dt}\deriv{m^\gamma}{x^i}
+b^i \frac{d}{dt}\left(\deriv{m^\gamma}{x^i}\right)\nonumber\\
=&\left(\frac{db^i}{dt}-{\bf b}{\bf\cdot}\nabla{u^i}\right) \deriv{m^\gamma}{x^i}
+b^i \derv{x^i}\left(\frac{dm^\gamma}{dt}\right)
\equiv \left(\deriv{\bf b}{t}+[{\bf u},{\bf b}]\right)^i \deriv{m^\gamma}{x^i}=0. \label{eq:fr16}
\end{align}
In (\ref{eq:fr16}) we used the fact that $m^\gamma$ is a Lagrange label, satisfying $dm^\gamma/dt=0$.
\end{proof}

\subsection{Noether's second theorem}
In this section we derive 
(\ref{eq:fr8}) using 
Noether's second theorem,
in which $a=\rho d^3x$ is  Lie dragged 
 with the flow  
and by using  a gauge transformation or divergence symmetry of the action. We use the 
approach of Padhye (1996a,b), Cotter et al. (2007) 
and Webb et al. (2014b). An alternative approach is that of Hydon 
and Mansfield (2011) which uses Lagrange multipliers to incorporate constraints  
into the formulation (see e.g Webb and Mace (2015) for an application  
to potential vorticity related conservation laws in MHD). It is also possible to   
use the multi-symplectic form of Noether's second theorem in the analysis. 

Webb et al. (2014b) 
 showed that the action remains invariant under a fluid relabelling, divergence symmetry provided 
the invariance condition:
\begin{equation}
\nabla{\bf\cdot}\left(\rho \hat{V}^{\bf x}\right) \left(w+\Phi({\bf x})-\frac{1}{2}|{\bf u}|^2\right) 
+\rho T\hat{V}^{\bf x}{\bf\cdot}\nabla S +\rho {\bf u}{\bf\cdot}\left(\derv{t}+{\cal L}_{\bf u}\right)
\hat{V}^{\bf x}=-\nabla_\alpha\Lambda^\alpha, \label{eq:n1}
\end{equation}
is satisfied (see proposition 6.4 of Webb et al. (2014b) for MHD, but 
with ${\bf B}=0$). 
Here
\begin{equation}
\nabla_\alpha\Lambda^\alpha=\deriv{\Lambda^0}{t}+\deriv{\Lambda^i}{x^i}, \label{eq:n2}
\end{equation}
is  associated with a divergence transformation 
 in which:
\begin{equation}
L'=L+\epsilon D_\alpha\Lambda^\alpha, \label{eq:n3}
\end{equation}
is the transformation of the Lagrangian $L$. In (\ref{eq:n1}), ${\cal L}_{\bf u}$ is the Lie 
derivative with respect to ${\bf u}$, i. e.
\begin{equation}
\left(\derv{t}+{\cal L}_{\bf u}\right)\hat{V}^{\bf x}\equiv \deriv{\hat{V}^{\bf x}}{t}+[{\bf u},\hat{V}^{\bf x}]=0, 
\label{eq:n4}
\end{equation}
corresponds to Lie dragging of $\hat{V}^{\bf x}$ by  ${\bf u}$. 
We use the Lagrangian map in which the Eulerian position of the fluid element 
${\bf x}={\bf x}({\bf m},t)$ depends on the Lagrange labels ${\bf m}$ and the time $t$. The 
infinitesimal Lie transformations in (\ref{eq:n1}) have the form:
\begin{equation}
{\bf x}'={\bf x}+\epsilon V^{\bf x},\quad t'=t+\epsilon V^t,\quad {\bf m}'={\bf m}+\epsilon V^{\bf m}, 
\label{eq:n5}
\end{equation}
The canonical form of the Lie transformations (\ref{eq:n5}) are:
\begin{equation}
{\bf m}'={\bf m}, \quad t'=t, \quad {\bf x}'={\bf x}+\epsilon \hat{V}^{\bf x} \quad \hbox{where}\quad 
\hat{V}^{\bf x}=V^{\bf x}-V^{\alpha}D_\alpha {\bf x}, \label{eq:n6}
\end{equation}
and $V^0=V^t$ and $V^i\equiv V^{m^i}$. For fluid relabelling symmetries,  
the physical variables do not change, and in that case $V^t=V^{\bf x}=0$. 

The invariance condition (\ref{eq:n1}) 
can be satisfied if:
\begin{align}
&\nabla{\bf\cdot}\left(\rho \hat{V}^{\bf x}\right)=0, \label{eq:n7}\\
&\left(\derv{t}+{\cal L}_{\bf u}\right)\hat{V}^{\bf x}=0, \label{eq:n8}\\
&\rho T\hat{V}^{\bf x}{\bf\cdot}\nabla S=-\nabla_\alpha\Lambda^\alpha. \label{eq:n9}
\end{align}
Conditions (\ref{eq:n7}) and (\ref{eq:n9}) are satisfied if 
\begin{align}
\rho\hat{V}^{\bf x}=&\nabla\times \boldsymbol{\psi}:={\bf C}\quad\hbox{or}
\quad \hat{V}^{\bf x}=\frac{\nabla\times\boldsymbol{\psi}}{\rho}\equiv \frac{\bf C}{\rho}, 
\label{eq:n10}\\
\Lambda^0=&r {\bf C}{\bf\cdot}\nabla S,\quad \Lambda^i=\left(r {\bf C}{\bf\cdot}\nabla S\right) u^i,
\quad 1\leq i\leq 3. \label{eq:n11}
\end{align}
Condition (\ref{eq:n8}) requires that $\hat{V}^{\bf x}=\nabla\times \boldsymbol{\psi}/\rho$ 
is Lie dragged 
by the flow. We also require that $a=\rho d^3x$ is  Lie dragged by ${\bf u}$.
The quantity:
\begin{equation}
\hat{V}^{\bf x}\lrcorner\left(\rho d^3x\right)=\left(\hat{V}^{x^i}\derv{x^i}\right)\lrcorner\left(\rho d^3x\right)
=\rho \hat{V}^{\bf x}{\bf\cdot}d{\bf S}=\nabla\times\boldsymbol{\psi}{\bf\cdot}d{\bf S}
=d\left(\boldsymbol{\psi}{\bf\cdot}d{\bf x}\right), \label{eq:n12}
\end{equation}
defines a Lie dragged invariant 2-form $\delta={\bf C}{\bf\cdot}d{\bf S}
=\nabla\times\boldsymbol{\psi}{\bf\cdot}d{\bf S}$. 
Also note that $\delta=d\alpha$ where $\alpha=\boldsymbol{\psi}{\bf\cdot}d{\bf x}$.
Using the algebra of exterior differential forms and Cartan's magic formula, (e.g. Webb et al. (2014a)),
we find:
\begin{align}
\frac{d}{dt}\left({\bf C\cdot}d{\bf S}\right)=&\left(\deriv{\bf C}{t}
-\nabla\times({\bf u}\times {\bf C})+{\bf u}\left(\nabla{\bf\cdot C}\right)\right)
{\bf\cdot}d{\bf S}=0, \label{eq:n13}\\
\frac{d}{dt}\left(\boldsymbol{\psi}{\bf\cdot}d{\bf x}\right)
=&\left(\deriv{\boldsymbol{\psi}}{t}
-{\bf u}\times\left(\nabla\times\boldsymbol{\psi}\right)
+ \nabla\left({\bf u\cdot}\boldsymbol{\psi}\right)\right){\bf\cdot}d{\bf x}=0. 
\label{eq:n14}
\end{align}

To verify that the balance equation (\ref{eq:n9}) for invariance of the action is satisfied 
by the gauge potential solutions (\ref{eq:n11}) for the $\Lambda^\alpha$ ($0\leq \alpha\leq 3$), 
we note that $-\nabla_\alpha\Lambda^\alpha$ can be reduced to:
\begin{align}
-\nabla_\alpha\Lambda^\alpha=&-\left({\bf C\cdot}\nabla S\right) \frac{dr}{dt}
-r\left(\derv{t}\left({\bf C\cdot}\nabla S\right)
+\nabla{\bf\cdot}\left({\bf C}{\bf\cdot}\nabla S\ {\bf u}\right)\right)\nonumber\\
&\equiv \left({\bf C\cdot}\nabla S\right) T
-r\rho\frac{d}{dt}\left(\frac{{\bf C\cdot}\nabla S}{\rho}\right)\nonumber\\ 
&=\left({\bf C\cdot}\nabla S\right) T=T\left(\rho \hat{V}^{\bf x}{\bf\cdot}\nabla S\right), 
\label{eq:n15}
\end{align}
which verifies that the condition (\ref{eq:n9}) for a divergence symmetry of the action is 
satisfied by the solution ansatz (\ref{eq:n11}). Note that ${\bf C}{\bf\cdot}\nabla S/\rho$
is the inner product of the Lie dragged vector field $\hat{V}^{\bf x}{\bf\cdot}\nabla$ and the 
Lie dragged 1-form $\nabla S{\bf\cdot}d{\bf x}$  and  hence is an advected scalar invariant, 
i.e. $d/dt({\bf C}{\bf\cdot}\nabla S/\rho)=0$.  

Webb et al. (2014b) used an Eulerian, 
Euler-Poincar\'e variational approach (see also Cotter et al. (2007)). They obtained:
\begin{equation}
\delta J=\int\int \hat{V}^{\bf x}{\bf\cdot} {\bf E}(\ell)\ d^3x dt+\int\int \left(\deriv{D}{t}+\nabla{\bf\cdot}{\bf F}\right)\ d^3xdt, \label{eq:n16}
\end{equation}
for the variation of the action $J=\int\int \ell\ d^3xdt$, where ${\bf E}(\ell)=0$ are the 
Euler Lagrange equations for the system (i.e.the Eulerian momentum equations). They 
used the Lagrangian map in which ${\bf x}={\bf x}({\bf m},t)$ specifies the Eulerian fluid element 
position in terms of the Lagrangian mass coordinates ${\bf m}$ and the time $t$. 
The variation  $\delta J$
of the action under Lie and divergence transformations of the action, gave:
\begin{align}
D=&\rho \hat{V}^{\bf x}{\bf\cdot}{\bf u}+\Lambda^0, \label{eq:n17}\\
{\bf F}=&\rho \hat{V}^{\bf x}{\bf\cdot}
\biggl[{\bf u}\otimes {\bf u}+\left(w+\Phi
-\frac{1}{2} |{\bf u}|^2\right) {\sf I}\biggr]+\boldsymbol{\Lambda}. \label{eq:n18}
\end{align}
 
Dropping the divergence term $\nabla{\bf\cdot F}$ in (\ref{eq:n16}) since it gives rise to 
a surface integral over the spatial boundary (assumed to be at infinity)  which is assumed 
to vanish, and using (\ref{eq:n10}) for $\hat{V}^{\bf x}$ we obtain:
\begin{equation}
\delta J=\int\int\frac{\nabla\times\boldsymbol{\psi}}{\rho}{\bf\cdot}{\bf E}(\ell) d^3x dt
+\int\int\deriv{D}{t}d^3x dt. \label{eq:n19}
\end{equation}
From (\ref{eq:n11}) and (\ref{eq:n17}) we obtain:
\begin{equation}
D=\rho\hat{V}^{\bf x}{\bf\cdot}\left({\bf u}+r\nabla S\right)\equiv 
\left(\nabla\times\boldsymbol{\psi}\right){\bf\cdot}{\bf v}, \label{eq:n20}
\end{equation}
where
\begin{equation}
{\bf v}={\bf u}+r\nabla S\quad\hbox{and}\quad \boldsymbol{\Omega}=\nabla\times {\bf v}
=\boldsymbol{\omega}+\nabla r\times\nabla S, \label{eq:n21}
\end{equation}
and $\boldsymbol{\omega}=\nabla\times{\bf u}$ is the fluid vorticity. 

Using (\ref{eq:n20}) and (\ref{eq:n21}), the second term $\delta J_2$ in (\ref{eq:n19}) reduces to:
\begin{equation}
\delta J_2=\int\int \derv{t}\left[\nabla{\bf\cdot}\left(\boldsymbol{\psi}\times{\bf v}\right)
+\boldsymbol{\Omega}{\bf\cdot}\boldsymbol{\psi}\right]\ d^3x dt
\equiv \int\int \left(\boldsymbol{\Omega}_t {\bf\cdot} \boldsymbol{\psi}
+\boldsymbol{\Omega}{\bf\cdot}\boldsymbol{\psi}_t\right)\ d^3x dt, \label{eq:n22}
\end{equation}
where we dropped the surface divergence term in the last step. 
Using the partial differential equation (\ref{eq:n14}) for $\boldsymbol{\psi}$ 
to eliminate $\boldsymbol{\psi}_t$ in (\ref{eq:n22})  gives:
\begin{align}
\delta J_2=&\int\int \boldsymbol{\psi}{\bf\cdot}\boldsymbol{\Omega}_t+\boldsymbol{\Omega}
{\bf\cdot}\left[{\bf u}\times(\nabla\times\boldsymbol{\psi})
-\nabla({\bf u}{\bf\cdot}\boldsymbol{\psi}\right]\ d^3x dt\nonumber\\
=&\int\int \biggl\{\boldsymbol{\psi}{\bf\cdot}
\left[\boldsymbol{\Omega}_t -\nabla\times({\bf u}\times\boldsymbol{\Omega}) 
+{\bf u}(\nabla{\bf\cdot}\boldsymbol{\Omega})\right]\nonumber\\
&+\nabla{\bf\cdot}
\left[\boldsymbol{\psi}\times(\boldsymbol{\Omega}\times{\bf u})
-({\bf u}{\bf\cdot}\boldsymbol{\psi})\boldsymbol{\Omega}\right]
\biggr\}d^3x dt\nonumber\\
\equiv&\int\int \boldsymbol{\psi}{\bf\cdot}
\left[\boldsymbol{\Omega}_t -\nabla\times({\bf u}\times\boldsymbol{\Omega})
+{\bf u}(\nabla{\bf\cdot}\boldsymbol{\Omega})\right]\ d^3xdt, \label{eq:n23}
\end{align}
where we dropped the pure divergence surface term in the last step. 

Using integration by parts, the first integral $\delta J_1$ in (\ref{eq:n19}) 
involving ${\bf E}(\ell)$ reduces to:
\begin{align}
\delta J_1=&\int\int \nabla{\bf\cdot}\left(\frac{\boldsymbol{\psi}\times{\bf E}(\ell)}{\rho}\right)
+\boldsymbol{\psi}{\bf\cdot}\nabla\times\left(\frac{{\bf E}(\ell)}{\rho}\right)\ d^3x dt
\nonumber\\
\equiv &\int\int \boldsymbol{\psi}{\bf\cdot}\nabla\times
\left(\frac{{\bf E}(\ell)}{\rho}\right)\ d^3x dt,
\label{eq:n24}
\end{align}
where the pure divergence term has been dropped. Adding $\delta J_1$ and $\delta J_2$ we get 
the total variation $\delta J=\delta J_1+\delta J_2$ as:
\begin{equation}
\delta J=\int\int \boldsymbol{\psi}{\bf\cdot}\biggl\{\nabla\times
\left(\frac{{\bf E}(\ell)}{\rho}\right)
+\boldsymbol{\Omega}_t -\nabla\times({\bf u}\times\boldsymbol{\Omega})
+{\bf u}(\nabla{\bf\cdot}\boldsymbol{\Omega})\biggr\}\ d^3xdt,  \label{eq:n25}
\end{equation}
and hence $\delta J=0$ if $\boldsymbol{\Omega}$ satisfies the generalized Bianchi identity:
\begin{equation}
\nabla\times\left(\frac{{\bf E}(\ell)}{\rho}\right)
+\boldsymbol{\Omega}_t -\nabla\times({\bf u}\times\boldsymbol{\Omega})
+{\bf u}(\nabla{\bf\cdot}\boldsymbol{\Omega})=0. \label{eq:n26}
\end{equation}
Thus, if the Euler Lagrange equations ${\bf E}(\ell)=0$ are satisfied, (\ref{eq:n26}) gives the 
vorticity flux conservation law (\ref{eq:fr8}).

 Webb et al. (2014b) used Noether's second theorem, with 
\begin{align}
&\hat{V}^{\bf x}=\frac{\boldsymbol{\Omega}}{\rho},\quad V^t=V^{\bf x}=0,\quad  
V^{{\bf x}_0}=-\frac{\nabla_0\lambda\times\nabla_0\mu}{\rho_0}, \nonumber\\
&\Lambda^0=r \boldsymbol{\Omega}{\bf\cdot}\nabla S,\quad \Lambda^i= ru^i 
(\boldsymbol{\Omega}{\bf\cdot}\nabla S), \label{eq:n27}
\end{align}
to obtain the generalized helicity conservation law:
\begin{equation}
\derv{t}\left[\boldsymbol{\Omega}{\bf\cdot}({\bf u}+r\nabla S)\right] 
+\nabla{\bf\cdot}\left\{{\bf u}\left[\boldsymbol{\Omega}{\bf\cdot}({\bf u}+r\nabla S)\right]
+\boldsymbol{\Omega}(w-\frac{1}{2} u^2)\right\}=0. \label{eq:n28}
\end{equation}
 This conservation law can also be 
derived by using the multi-symplectic form of Noether's theorem (e.g. Hydon (2005), 
Hydon and Mansfield (2011), Webb et al. (2014c)). The above result is a consequence of 
a fluid relabelling symmetry ($V^{{\bf x}_0}\neq 0$) coupled with a 
gauge transformation (i.e. $\Lambda^0\neq 0$ and 
$\Lambda^i\neq 0$). Since $r$ is a nonlocal variable, (\ref{eq:n28}) 
is a nonlocal conservation law.

\section{Differential forms approach}
The results of this section on differential forms representation of the multi-symplectic 
system (\ref{eq:5.1}) have been derived  by Webb (2015) for the case of 1D 
Lagrangian gas dynamics. The present formulation applies to the more general case 
where there are $n$ Lagrange mass coordinate labels $m^i$, ($1\leq i\leq n$).   
We omit the $\lambda^k$ and $\mu^k$ terms in the Lagrangian in the analysis below, 
since their evolution is decoupled from the rest of the equations (i.e. we effectively 
set $\mu^k=\lambda^k=0$ in the analysis).
\subsection{Variational principles}
\begin{proposition}
Consider the variational functional:
\begin{equation}
J=\int\ \psi^*(\Theta)\equiv \int_M \tilde{\ell}_m dV, \label{eq:df1}
\end{equation}
where $\psi^*(\Theta)$ is the pullback of the differential form $\Theta$: 
\begin{align}
\Theta=&\omega^\alpha\wedge d\tilde{m}_\alpha-H dV, \label{eq:df2}\\
dV=&dt\wedge dm^1\wedge dm^2\wedge \ldots\wedge dm^n,\quad d\tilde{m}_\mu=\partial_\mu\contr dV, 
\label{eq:df3}
\end{align}
In (\ref{eq:df1})-(\ref{eq:df3}) $\tilde{\ell}_m$ is 
the multi-symplectic Lagrangian (\ref{eq:el10}). 
The stationary point conditions for the action (\ref{eq:df1}): $\delta J/\delta z^i=0$ 
give the multi-symplectic system (\ref{eq:5.1}). In particular:  
\begin{align}
h_m=&\frac{1}{2} u^2+e(\tau,S)
+\Phi({\bf x})+\pi_{ik} x_{ik}, \nonumber\\
\tilde{\ell}_m=&r \frac{dS}{dt}+u^k\deriv{x^k}{t}
-\left(\frac{1}{2} u^2+e(\tau,S)
+\Phi({\bf x})\right), \label{eq:df4}
\end{align}
are the multi-symplectic Hamiltonian density (\ref{eq:5.1b}) 
and the multi-symplectic Lagrangian density (\ref{eq:el10}). In 
(\ref{eq:df4}) $\tau=J=1/\rho$ where $J=\det(x_{ij})$. 
\end{proposition}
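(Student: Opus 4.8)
The plan is to establish the two assertions of the proposition separately: first that the pullback $\psi^*(\Theta)$ equals $\tilde{\ell}_m\,dV$ with $\tilde{\ell}_m$ the Legendre-transform Lagrangian of (\ref{eq:el10}), and second that the Euler--Lagrange equations of the action $J=\int\psi^*(\Theta)$ reproduce the multi-symplectic system (\ref{eq:5.1}) component by component.

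For the first assertion I would pull $\Theta$ back along the section $\psi:(t,{\bf m})\mapsto z^s(t,{\bf m})$, so that $dz^s\mapsto z^s_{,\beta}\,dq^\beta$ with $(q^0,q^1,\dots,q^n)=(t,{\bf m})$. The single identity that does all the work is
\[
dq^\beta\wedge\left(\partial_\alpha\contr dV\right)=\delta^\beta_\alpha\,dV,
\]
which follows immediately from the definition (\ref{eq:df3}) of $d\tilde{m}_\alpha$. Applied to $\omega^\alpha\wedge d\tilde{m}_\alpha=L^\alpha_s\,dz^s\wedge d\tilde{m}_\alpha$ it collapses each wedge to $L^\alpha_s z^s_{,\alpha}\,dV$, so that $\psi^*(\Theta)=\left(L^\alpha_s z^s_{,\alpha}-H\right)dV$. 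This is precisely the generalized Legendre transformation (\ref{eq:4.2}) solved for $\tilde{\ell}_m$. Inserting the explicit one-forms (\ref{eq:5.7})--(\ref{eq:5.8}) gives $L^\alpha_s z^s_{,\alpha}=u^ix^i_{,t}+rS_t+\lambda^k\mu^k_{,t}+\pi_{ik}x^i_{,m^k}$, and subtracting $H=h_m$ of (\ref{eq:5.1b}) cancels the explicit momentum factor so that the $\pi_{ik}$--terms combine into $\pi_{ik}\left(x^i_{,m^k}-x_{ik}\right)$; on the map-equation surface $x_{ik}=x^i_{,m^k}$ this vanishes and, using $w=e+p\tau$ and $\tau=J$, one recovers exactly (\ref{eq:el10}) and hence (\ref{eq:df4}).

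For the second assertion the decisive structural fact is that $\tilde{\ell}_m=L^\alpha_s(z)\,z^s_{,\alpha}-H(z)$ is affine in the first derivatives, with both $L^\alpha_s$ and $H$ depending on $z$ alone. The Euler--Lagrange operator then yields
\[
\frac{\delta J}{\delta z^i}=\left(\deriv{L^\alpha_s}{z^i}-\deriv{L^\alpha_i}{z^s}\right)z^s_{,\alpha}-\deriv{H}{z^i}={\sf K}^\alpha_{is}\,z^s_{,\alpha}-\deriv{H}{z^i},
\]
where the antisymmetrized coefficient is the matrix (\ref{eq:5.10}). Setting this to zero and separating $\alpha=0$ from $\alpha=k$ is exactly (\ref{eq:5.1}). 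Feeding in the nonzero entries (\ref{eq:5.14}) then reproduces the individual equations: the $u^i$ variation gives $x^i_{,t}=u^i$, the $x^i$ variation gives the Euler momentum equation (\ref{eq:5.15}), the $(S,r)$ pair gives (\ref{eq:5.18}), the $\pi_{ij}$ variation gives the map relation $x^i_{,m^j}=x_{ij}$, and the $x_{ij}$ variation gives the algebraic constraint (\ref{eq:5.18a}).

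The step I expect to be the main obstacle is the dual status of $x_{ij}$, which is carried as an independent field in the multi-symplectic phase space while $H$ depends on it through $\tau=\det(x_{ij})$. Two points require care. First, because none of the one-forms $\omega^\alpha$ contain $dx_{ij}$, the corresponding rows ${\sf K}^\alpha_{x_{ij},s}$ all vanish, so the left-hand side of the field equation for $z^i=x_{ij}$ is identically zero and the equation degenerates to the purely algebraic relation $0=\partial H/\partial x_{ij}=\pi_{ij}-pA_{ij}$ of (\ref{eq:5.18a}), using $e_\tau=-p$ and $\partial J/\partial x_{ij}=A_{ij}$. Second, one must check that the constraint term $\pi_{ik}(x^i_{,m^k}-x_{ik})$ surviving in the off-shell pullback is exactly what produces the map equation (\ref{eq:5.17}) under the $\pi_{ij}$ variation, so that the on-shell reduction to (\ref{eq:df4}) in the first assertion is consistent with the full variational content in the second. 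Verifying this interplay---rather than any single calculation---is where the bookkeeping must be done carefully.
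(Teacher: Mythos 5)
Your proposal is correct and takes essentially the same route as the paper's proof: the same pullback computation collapsing $\omega^\alpha\wedge d\tilde{m}_\alpha$ to $L^\alpha_s z^s_{,\alpha}\,dV$, the same Legendre-transform identification $\psi^*(\Theta)=\left(L^\alpha_s z^s_{,\alpha}-h_m\right)dV=\tilde{\ell}_m\,dV$, and the same Euler--Lagrange computation for a Lagrangian affine in the first derivatives, giving ${\sf K}^\alpha_{is}z^s_{,\alpha}-\partial h_m/\partial z^i=0$, i.e. the system (\ref{eq:5.1}). Your additional bookkeeping---the on-shell role of the independent fields $x_{ij}$ and the component-wise recovery of (\ref{eq:5.15})--(\ref{eq:5.18a})---is sound but goes beyond what the paper's proof spells out.
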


\begin{proof}

To prove (\ref{eq:df1}), note that:
\begin{align}
&\psi^*\left(\omega^\mu\wedge d\tilde{m}_\mu\right)
=\left(L^\mu_j dz^j\right)\wedge d\tilde{m}_\mu\nonumber\\
&= L^\mu_j \deriv{z^j}{m^s} dm^s\wedge \left[(-1)^\mu dm^0\wedge\ldots \wedge dm^{\mu-1}
\wedge dm^{\mu+1}\ldots\wedge dm^n\right]\nonumber\\
&=L^\mu_j \deriv{z^j}{m^s} (-1)^{2\mu} \delta^s_\mu dV
= L^\mu_j \deriv{z^j}{m^\mu} dV. 
\label{eq:df5}
\end{align}
Thus,
\begin{equation}
\psi^*(\Theta)=\left(L^\mu_j \deriv{z^j}{m^\mu}-h_m(z)\right) dV\equiv \tilde{\ell}_m dV, 
\label{eq:df6}
\end{equation}
where in the last step, the standard Legendre transformation 
between the multi-symplectic Hamiltonian $h_m$ 
and Lagrangian $\tilde{\ell}_m$  has been used (e.g. Hydon (2005)). Note that 
$\partial\tilde{\ell}_m/\partial (z^j_{,\mu})=\pi^\mu_j=L^\mu_j$ is the canonical 
multi-momentum corresponding to $z^j_{,\mu}$. Calculating the variational derivative 
$\delta J/\delta z^i$ of the action 
(\ref{eq:df1}) gives the Euler Lagrange equations $\delta J/\delta z^i=0$: 
\begin{equation}
\frac{\delta J}{\delta z^i}=\deriv{\tilde{\ell}_m}{z^i}
-\derv{m^\mu}\left(\deriv{\tilde{\ell}_m}{z^i_{,\mu}}\right)=
\left({\sf K}^\mu_{ij}\deriv{z^j}{m^\mu}-\deriv{h_m}{z^i}\right)=0, \label{eq:df7}
\end{equation}
which is the multisymplectic system (\ref{eq:5.1}). This completes the proof.
\end{proof}

\begin{proposition}
Consider the variational functional 
\begin{equation}
K[\Omega]=\int_M \Omega, \label{eq:vbun1}
\end{equation}
where $M$ is a region with boundary $\partial M$ in the fiber bundle space in which the $z^i$ are 
regarded as independent of the base manifold coordinates $m^\alpha=(t,m^1,\ldots m^n)$. The form:
\begin{equation}
\Omega=d\Theta=d\omega^\alpha\wedge d\tilde{m}_\alpha-dH\wedge dV
\quad\hbox{where}\quad dV=dt\wedge dm^1\wedge\ldots\wedge dm^n, \label{eq:vbun2}
\end{equation}
is  the Cartan-Poincar\'e form for the system (\ref{eq:5.1}).  
Variations of the functional (\ref{eq:vbun1}) 
are described by the Lie derivative:
\begin{equation}
{\cal L}_{\bf V}=\frac{d}{d\epsilon}=V^i\derv{z^i}, \label{eq:vbun3}
\end{equation}
where the base manifold variables $m^\alpha$ are fixed, and ${\bf V}$ 
is an arbitrary but smooth vector field. The variations of $K[\Omega]$ 
are described by:
\begin{equation}
\delta K[\Omega]=\int_M {\cal L}_{\bf V}\left(\Omega\right). \label{eq:vbun4}
\end{equation}
The variations (\ref{eq:vbun4}) can be reduced to the form:
\begin{equation}
\delta K[\Omega]=\int_{\partial M} V^p\beta_p, \label{eq:vbun5}
\end{equation}
where the forms 
$\{\beta_p:\ 1\leq p\leq N\}$ ($N$ is the number of $z^i$ variables) 
are given by:
\begin{equation}
\beta_p={\sf K}^\alpha_{pj} dz^j\wedge d\tilde{m}_\alpha-\deriv{H}{z^p} dV, \label{eq:vbun6}
\end{equation}
and $\partial M$ is the boundary of the region $M$ in the ${\bf z}$-space. 
The equations $\beta_p=0$ ($1\leq p\leq N)$, provide a basis of Cartan 
forms for the multi-symplectic system (\ref{eq:5.1}). The pullback of the
forms  $\beta_p$ to the base manifold  gives the equations:
\begin{equation}
\tilde{\beta}_p=\left({\sf K}^\alpha_{pj}\deriv{z^j}{m^\alpha}-\deriv{H}{z^p}\right) dV. 
\label{eq:vbun7}
\end{equation}
The sectioned forms equations $\tilde{\beta}_p=0$ give  the multi-symplectic 
partial differential equation system (\ref{eq:5.1}).  
\end{proposition}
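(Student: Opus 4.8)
The plan is to exploit the exactness of the Cartan--Poincar\'e form $\Omega=d\Theta$ together with Cartan's magic formula, so that the bulk variation collapses to a pure boundary term. First I would rewrite the Lie derivative appearing in (\ref{eq:vbun4}) as ${\cal L}_{\bf V}\Omega = d({\bf V}\contr\Omega)+{\bf V}\contr(d\Omega)$. Since $\Omega=d\Theta$ is exact, $d\Omega=dd\Theta=0$, so the second term drops and ${\cal L}_{\bf V}\Omega = d({\bf V}\contr\Omega)$. Applying Stokes' theorem then gives $\delta K[\Omega]=\int_M d({\bf V}\contr\Omega)=\int_{\partial M}{\bf V}\contr\Omega$, which reduces the entire problem to evaluating the single interior product ${\bf V}\contr\Omega$.

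Next I would compute ${\bf V}\contr\Omega$ term by term, using crucially that ${\bf V}=V^i\partial/\partial z^i$ is a \emph{vertical} field, so its contraction annihilates every base form $dt$, $dm^j$, $dV$ and $d\tilde{m}_\alpha$. Writing $d\omega^\alpha=\tfrac12{\sf K}^\alpha_{ij}\,dz^i\wedge dz^j$ from (\ref{eq:5.9}), the antiderivation property of $\contr$ together with the skew-symmetry ${\sf K}^\alpha_{ji}=-{\sf K}^\alpha_{ij}$ gives ${\bf V}\contr d\omega^\alpha={\sf K}^\alpha_{pj}V^p\,dz^j$, while ${\bf V}\contr d\tilde{m}_\alpha=0$; likewise ${\bf V}\contr dH=V^p\,\partial H/\partial z^p$ and ${\bf V}\contr dV=0$. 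Collecting the surviving pieces yields ${\bf V}\contr\Omega=V^p\big({\sf K}^\alpha_{pj}\,dz^j\wedge d\tilde{m}_\alpha-(\partial H/\partial z^p)\,dV\big)=V^p\beta_p$, which is exactly (\ref{eq:vbun6}); substituting into the boundary integral produces (\ref{eq:vbun5}). For the basis-of-Cartan-forms claim, since the components $V^p$ restricted to $\partial M$ are arbitrary and independent, the vanishing of $\delta K[\Omega]$ for all admissible ${\bf V}$ forces each $\beta_p=0$, so the $\{\beta_p\}$ generate the Cartan ideal characterizing the extremals.

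Finally, for the sectioned equations I would pull back along $\psi:\ m^\alpha\mapsto(m^\alpha,z^i(m^\alpha))$, using $\psi^*dz^j=(\partial z^j/\partial m^\beta)\,dm^\beta$ together with the top-form identity $dm^\beta\wedge d\tilde{m}_\alpha=\delta^\beta_\alpha\,dV$ already employed in (\ref{eq:df5}), which gives $\psi^*(dz^j\wedge d\tilde{m}_\alpha)=(\partial z^j/\partial m^\alpha)\,dV$. This immediately produces $\tilde{\beta}_p=\big({\sf K}^\alpha_{pj}\,\partial z^j/\partial m^\alpha-\partial H/\partial z^p\big)\,dV$, i.e. (\ref{eq:vbun7}); and because the de Donder--Weyl Hamiltonian density carries the $x_{ij}$ as independent $z$-variables, $\partial H/\partial z^p$ coincides with the variational derivative $\delta H/\delta z^p$, so $\tilde{\beta}_p=0$ is precisely the multi-symplectic system (\ref{eq:5.1}). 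I expect the main obstacle to be purely bookkeeping rather than conceptual: keeping the antiderivation signs and index relabellings straight in the evaluation of ${\bf V}\contr\Omega$, and confirming that the vertical field genuinely kills all base-manifold forms, are the steps that demand the most care, whereas the exactness argument that turns the variation into a boundary term is immediate.
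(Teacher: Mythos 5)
Your proposal is correct and follows essentially the same route as the paper: the paper's proof consists of citing Webb (2015) for the details and isolating exactly the step you begin with, namely Cartan's magic formula ${\cal L}_{\bf V}\Omega={\bf V}\contr d\Omega+d({\bf V}\contr\Omega)$ together with $d\Omega=dd\Theta=0$, after which Stokes' theorem and the contraction computation give the boundary form $V^p\beta_p$. Your write-up in fact supplies the bookkeeping (verticality of ${\bf V}$ killing the base forms, the skew-symmetry relabelling giving ${\bf V}\contr d\omega^\alpha={\sf K}^\alpha_{pj}V^p dz^j$, and the identity $dm^\beta\wedge d\tilde{m}_\alpha=\delta^\beta_\alpha dV$ for the sectioning) that the paper leaves to the cited reference, and it is all correct.
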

\begin{proof}
This proposition was proved for the case of 1D Lagrangian gas dynamics (the $n=1$ case) 
by Webb (2015). Essentially the same proof applies for $n>1$.
An essential formula in the proof is
Cartan's magic formula:
\begin{equation}
{\cal L}_{\bf V}\Omega={\bf V}\lrcorner d\Omega+d\left({\bf V}\lrcorner \Omega\right)
\equiv d\left({\bf V}\lrcorner \Omega\right), \label{eq:vbun8}
\end{equation}
where we use $\Omega=d\Theta$ and $d\Omega=dd\Theta=0$. The detailed proof is given by Webb (2015). 

\end{proof}
\begin{remark}
Webb (2015) showed that for 1D gas dynamics, the $\beta_i$ form a closed ideal 
of forms representing the partial differential system (\ref{eq:5.1}).
\end{remark}

\subsection{The differential forms $\beta_p$}
Below we give explicit formulae for the differential forms in (\ref{eq:vbun6}), for 
the case of $n$ Lagrange labels $m^k$ where $1\leq k\leq n$. We use the notation $m^0=t$ 
for the time variable. The case $n=1$ has been considered in detail by Webb (2015) 
who considered the case of 1D Lagrangian gas dynamics using a similar analysis to the 
present paper. The number of equations for the forms $\beta_p$ is $N=2 n^2+2n+2$. 
Our formulae hold for both the case of 2D gas dynamics ($n=2$) and also for the case of 3D gas 
dynamics ($n=3$). 
 The differential forms $\{\beta_p:\ 1\leq p\leq N\}$ in (\ref{eq:vbun6}) describes 
the multi-symplectic system (\ref{eq:5.1}), i.e. the pullback of the forms $\beta_p$, 
i.e. $\tilde{\beta}_p=0$ gives the multi-symplectic system:
\begin{equation}
{\sf K}^\alpha_{pj}\deriv{z^j}{m^\alpha}-\deriv{H}{z^p}=0. \label{eq:he1}
\end{equation}
Below we list explicitly formulae for the $\beta_p$.  We find for $1\leq i\leq n$:
\begin{align}
\beta_i=&-\left(du^i\wedge d\tilde{m}_0+d\pi_{ij}\wedge d\tilde{m}_j
+\deriv{\Phi}{x^i}dV\right), \nonumber\\
\tilde{\beta}_i=&-\left(\frac{du^i}{dt}+\deriv{\pi_{ij}}{m^j}+\deriv{\Phi}{x^i}\right) dV. \label{eq:he2}
\end{align}
Thus the equations $\tilde{\beta}_i=0$ ($1\leq i\leq n$) give the momentum equations (\ref{eq:5.15}) 
for the gas dynamic equations. 

For $n+1\leq i\leq 2n$, and $i=j+n$ ($1\leq j\leq n$) (\ref{eq:vbun6}) gives:
\begin{align}
\beta_i=&\beta_{j+n}=\beta^{u^j}=dx^j\wedge d\tilde{m}_0-u^j dV\equiv 
(dx^j-u^j dt)\wedge d\tilde{m}_0, 
\nonumber\\
\tilde{\beta}^{u^j}
=&\left(\deriv{x^j({\bf m},t)}{t}-u^j\right) dV. \label{eq:he3}
\end{align}
In the first equation (\ref{eq:he3}) we used $t=m^0$ and $dm^0\wedge d\tilde{m}_0=dV$. 
Note that the equations $\tilde{\beta}_{j+n}=0$ are equivalent to (\ref{eq:5.16}). 

For $2n+1\leq i\leq n^2+2n$, we write $i=nk+j+n$ where $1\leq k\leq n$ and $1\leq j\leq n$ and obtain:
\begin{align}
\beta_i=&\beta_{nk+j+n}
=\beta_{jk}\quad\hbox{where}\nonumber\\
\beta_{jk}=&dx^j\wedge d\tilde{m}_k-x_{jk} dV\equiv (dx^j-x_{jk}dm^k)\wedge d\tilde{m}_k, \label{eq:he4} 
\end{align}
The pullback equations $\tilde{\beta}_{jk}=0$ give equations (\ref{eq:5.18a}). Note in (\ref{eq:he4}) 
that $k$ is fixed and there is no sum over $k$.

For $i=n^2+2n+1$ and $i=n^2+2n+2$ we obtain:
\begin{align}
\beta^r=&\beta_{(n^2+2n+1)}=-\left(dr\wedge d\tilde{m}_0+TdV\right)\equiv -(dr+Tdt)\wedge d\tilde{m}_0,
\nonumber\\ 
\beta^S=&\beta_{n^2+2n+2}=\frac{dS}{dt} dV\equiv dS\wedge d\tilde{m}_0,\label{eq:he5} 
\end{align}
The pullback equations $\tilde{\beta}_{r}=0$ and $\tilde{\beta}_{S}=0$ are 
given in (\ref{eq:5.18}). 

For $n^2+2n+3\leq i\leq 2n^2+2n+2$, we write $i=nk+j+n^2+n+2$ where 
$1\leq k\leq n$ and $1\leq j\leq n$ and obtain:
\begin{equation}
\beta_i=\beta_{(nk+j+n^2+n+2)}=\mu_{kj}\quad \hbox{where}\quad \mu_{kj}=-\left(\pi_{kj}-p A_{kj}\right) dV. 
\label{eq:he6}
\end{equation}
The pullback equations $\tilde{\mu}_{kj}=0$ give (\ref{eq:5.18a}). 
Equations (\ref{eq:he1})-(\ref{eq:he6}) show the differential 
forms $\{\beta_p\}$ are related to the pullback equations $\tilde{\beta}_p=0$
which are equivalent to the multi-symplectic system (\ref{eq:5.1}).  

\subsubsection{The ideal of differential forms}
The set of forms:
\begin{equation}
\mathscr{I}=\left\{\beta_j,\beta^{u^j},\beta_{kj},\beta^r,\beta^S\right\}
\quad\hbox{where}\quad 1\leq j,k\leq n, \label{eq:id1}
\end{equation}
defined by the equations:
\begin{align}
\beta_j=&-\left(du^j\wedge d\tilde{m}_0+d\pi_{jk}\wedge d\tilde{m}_k
+\deriv{\Phi}{x^j}dV\right), \nonumber\\
\beta^{u^j}=&dx^j\wedge d\tilde{m}_0-u^j dV\equiv
(dx^j-u^j dt)\wedge d\tilde{m}_0, \nonumber\\
\beta_{jk}=&dx^j\wedge d\tilde{m}_k-x_{jk} dV
\equiv \left(dx^j-x_{jk}dm^k\right)\wedge d\tilde{m}_k, \nonumber\\
\beta^S=&dS\wedge d\tilde{m}_0,\quad
\beta^r=-\left(dr\wedge d\tilde{m}_0+TdV\right)\equiv -(dr+Tdt)\wedge d\tilde{m}_0, \label{eq:id2}
\end{align}
 is a closed ideal of forms representing the 
Lagrangian gas dynamic system (\ref{eq:5.1}).  
 This result covers both the cases $n=2$ and $n=3$ where $n$ is the number 
of Lagrangian mass coordinates $m^j$.   The above basis of forms (\ref{eq:id1})
can be used to represent the Lagrangian fluid dynamic equations in 
Cartan's geometric theory of partial differential equations (e.g. Harrison and Estabrook, 1971). 
 \begin{proposition}
The set of exterior differential forms defined in (\ref{eq:id1})-(\ref{eq:id2}) 
is a closed ideal of forms in the sense of Cartan's geometric theory of partial differential
equations (e.g. Harrison and Estabrook (1971)). This requires that the exterior derivatives of the 
set of forms (\ref{eq:id1})-(\ref{eq:id2})  may be expressed as a linear combination of the basis forms 
involving the wedge product. In particular, 
\begin{align}
d\beta_j=&(-1)^{n+1} \Phi_{,js} \left(\beta^{u^s}\wedge dt\right), \label{eq:id3}\\
d\beta^{u^j}=&\left(-1\right)^n \beta_j\wedge dt, \label{eq:id4}\\
d\beta^S=&0, \label{eq:id5}\\
d\beta^r=&-\left(\tilde{w}_{Sp} dp\wedge dV+(-1)^n \tilde{w}_{SS}\beta^S\wedge dt\right), \label{eq:id6}\\
dp\wedge dV=&\frac{(-1)^{n+1}}{D}\left\{\tilde{w}_{pS} \beta^S\wedge dt
+\frac{x_{ij}}{(n-1)p} \beta_i \wedge dm^j\right\},\label{eq:id7}\\
D=&\left(
\tilde{w}_{pp}+\frac{n\tau}{(n-1)p}\right),  \label{eq:id8}\\
d\beta_{jk}=&\frac{-1}{p\tau}
\biggl\{(-1)^n x_{js} x_{ik} (\beta_i\wedge dm^s)\nonumber\\
&+x_{jk}\left[\left(p \tilde{w}_{pp}+\tau\right)dp\wedge dV
+(-1)^n p\tilde{w}_{pS} \beta^S\wedge dt\right]\biggr\}. \label{eq:id9}
\end{align}
In (\ref{eq:id3}) $\Phi_{,is}=\partial^2\Phi/\partial x^i\partial x^s$. Note that the exterior derivatives 
of the set of forms in (\ref{eq:id1})-(\ref{eq:id2}) is closed. This result depends on the expansion for 
$dp\wedge dV$ in (\ref{eq:id7}).
\end{proposition}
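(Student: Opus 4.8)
The plan is to verify closure by computing the exterior derivative of each generator in (\ref{eq:id2}) and exhibiting it as a wedge of the generators with other forms. Throughout I would lean on three structural facts: $d(dV)=0$ and $d(d\tilde m_\mu)=0$ (since $dV$ and each $d\tilde m_\mu$ are wedges of the exact base one-forms $dm^\alpha$), and the interior-product identity $dm^\mu\wedge d\tilde m_\nu=\delta^\mu_\nu\,dV$. The latter supplies the contractions I use repeatedly: $d\tilde m_0\wedge dt=(-1)^n dV$, $d\tilde m_k\wedge dt=0$, and $d\tilde m_k\wedge dm^j=(-1)^n\delta^j_k\,dV$ for $1\le j,k\le n$. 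The three easy cases then follow at once. For (\ref{eq:id5}), $d\beta^S=0$ because $\beta^S=dS\wedge d\tilde m_0$ is a wedge of closed forms. For (\ref{eq:id4}), the first term of $\beta^{u^j}$ is closed so $d\beta^{u^j}=-du^j\wedge dV$, and expanding $(-1)^n\beta_j\wedge dt$ (where $dV\wedge dt=0$ kills the $\Phi$ and $\pi$ pieces while $d\tilde m_0\wedge dt=(-1)^n dV$) reproduces exactly $-du^j\wedge dV$. For (\ref{eq:id3}), only the $\Phi$ term contributes, giving $d\beta_j=-\Phi_{,js}\,dx^s\wedge dV$, while $(-1)^{n+1}\Phi_{,js}(\beta^{u^s}\wedge dt)=-\Phi_{,js}\,dx^s\wedge dV$ by the same reductions.

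Next I would treat $\beta^r$. Using $T=\tilde w_S(p,S)$ from (\ref{eq:el12}), I obtain $d\beta^r=-dT\wedge dV=-(\tilde w_{Sp}\,dp+\tilde w_{SS}\,dS)\wedge dV$; rewriting $dS\wedge dV=(-1)^n\beta^S\wedge dt$ yields (\ref{eq:id6}). This lies in the ideal only once $dp\wedge dV$ is shown to, which is the content of (\ref{eq:id7}) and is the crux of the whole proposition.

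For (\ref{eq:id7}) I would substitute $\pi_{ij}=pA_{ij}$ into $\beta_i$ and contract. Since only the $k=j$ term of $d\pi_{ik}\wedge d\tilde m_k$ survives after wedging with $dm^j$, one finds $\beta_i\wedge dm^j=(-1)^{n+1}d\pi_{ij}\wedge dV=(-1)^{n+1}(A_{ij}\,dp+p\,dA_{ij})\wedge dV$. Contracting with $x_{ij}/((n-1)p)$ and invoking the cofactor identities $x_{ij}A_{ij}=nJ=n\tau$ and $x_{ij}\,dA_{ij}=(n-1)\,dJ$ — the latter from Euler's homogeneity of $J=\det(x_{ij})$ together with $dJ=A_{ij}\,dx_{ij}$ — collapses this to $(-1)^{n+1}[(\tilde w_{pp}+n\tau/((n-1)p))\,dp+\tilde w_{pS}\,dS]\wedge dV$, where I have used $dJ=\tilde w_{pp}\,dp+\tilde w_{pS}\,dS$ coming from $\tilde w_p=\tau=J$. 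Recognising the bracket coefficient as $D$ in (\ref{eq:id8}) and replacing $dS\wedge dV$ by $(-1)^n\beta^S\wedge dt$, I can solve algebraically for $dp\wedge dV$ to reach (\ref{eq:id7}). The main obstacle is precisely this step: making the cofactor homogeneity identity and the enthalpy second-derivative substitution conspire into the single denominator $D$, with every sign tracked correctly.

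Finally, (\ref{eq:id9}) is the second genuine computation. Here $d\beta_{jk}=-dx_{jk}\wedge dV$, so the task is to invert the map from $dx_{jk}$ back to the generators. I would differentiate the cofactor relation $\pi_{is}x_{ik}=pJ\delta_{sk}=p\tau\delta_{sk}$, wedge with $dV$, and substitute $d\pi_{is}\wedge dV=(-1)^{n+1}\beta_i\wedge dm^s$ together with the cofactor-derivative expansion of $dA$; contracting with $x_{js}x_{ik}$ isolates $dx_{jk}\wedge dV$ because $x_{ik}A_{is}=J\delta_{ks}$. Substituting $dJ=\tilde w_{pp}\,dp+\tilde w_{pS}\,dS$ and then eliminating $dp\wedge dV$ through the already-established (\ref{eq:id7}) produces the stated form (\ref{eq:id9}), now manifestly a combination of $\beta_i\wedge dm^s$, $\beta^S\wedge dt$, and (via (\ref{eq:id7})) of the generators again. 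Since each $d\beta$ in (\ref{eq:id3})--(\ref{eq:id9}) has thereby been written as generators wedged with forms, the set $\mathscr{I}$ is a closed ideal, completing the proof.
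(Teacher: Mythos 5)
Your proposal is correct and follows essentially the same route as the paper's proof: (\ref{eq:id3})--(\ref{eq:id6}) by direct exterior differentiation using the contraction identities $dm^\mu\wedge d\tilde{m}_\nu=\delta^\mu_\nu\,dV$, the crucial relation (\ref{eq:id7}) from the Euler-homogeneity/cofactor identities $x_{ij}A_{ij}=n\tau$ and $x_{ij}\,dA_{ij}=(n-1)\,d\tau$ combined with $\tilde{w}_p=\tau$ and $\beta_i\wedge dm^j=(-1)^{n+1}\,d\pi_{ij}\wedge dV$, and (\ref{eq:id9}) by differentiating $\pi_{is}x_{ik}=p\tau\,\delta_{sk}$ and contracting to isolate $dx_{jk}\wedge dV$. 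The only cosmetic slips are that the contraction isolating $dx_{jk}$ is with $x_{js}$ alone (the factor $x_{ik}$ is already present in the differentiated relation), and that (\ref{eq:id9}) as stated still contains $dp\wedge dV$, which (\ref{eq:id7}) then resolves into the ideal; neither affects the argument.
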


\begin{proof}
From (\ref{eq:id2})
\begin{align}
d\beta_j=&-\frac{\partial^2 \Phi}{\partial x^j\partial x^s} dx^s\wedge dV, \label{eq:id11}\\
\beta^{u^s}\wedge dm^0=&dx^s\wedge d\tilde{m}_0\wedge dm^0=(-1)^ndx^s\wedge dV. \label{eq:id12}
\end{align}
Use of (\ref{eq:id12}) in (\ref{eq:id11}) gives the result (\ref{eq:id3}) for $d\beta_j$.
Using (\ref{eq:id2}) we obtain:
\begin{align}
\beta_j\wedge dm^0=&-du^j\wedge d\tilde{m}_0\wedge dm^0=(-1)^{n+1} du^j\wedge dV, \nonumber\\
d\beta^{u^j}=&-du^j\wedge dV=(-1)^n\beta_j\wedge dt, \label{eq:id13}
\end{align}
which establishes (\ref{eq:id4}) (note $m^0=t$). 

By noting $\beta^S=d(S\wedge d\tilde{m}_0)$ in (\ref{eq:id2}) we obtain $d\beta^S=dd(S d\tilde{m}_0)=0$
which establishes (\ref{eq:id5}). 

To derive (\ref{eq:id6}) and (\ref{eq:id7}) first note that 
\begin{equation}
\tilde{w}_p=\tau\equiv \det(x_{ij})=J, \quad \tilde{w}_S=T. \label{eq:id14}
\end{equation}
Also from (\ref{eq:id2}) and (\ref{eq:id14}), we obtain:
\begin{align}
d\beta^r=&-dT\wedge dV=-d\tilde{w}_S\wedge dV
=-\left(\tilde{w}_{Sp} dp+\tilde{w}_{SS}dS\right)\wedge dV\nonumber\\
=&-\left(\tilde{w}_{Sp}dp\wedge dV+(-1)^n \tilde{w}_{SS} \beta^S\wedge dt\right), \label{eq:id15}
\end{align}
which establishes (\ref{eq:id6}) for $d\beta^r$. 

To derive (\ref{eq:id7}), note from (\ref{eq:id14}) that
\begin{equation}
d\tau=\tilde{w}_{pp}dp+\tilde{w}_{pS} dS. \label{eq:id16}
\end{equation}
Also since $\tau=J$ we obtain:
\begin{equation}
d\tau=\deriv{J}{x_{ij}}d x_{ij}=A_{ij} dx_{ij}=d(A_{ij}x_{ij})-x_{ij}dA_{ij}\equiv d (n\tau)-x_{ij}dA_{ij}. 
\label{eq:id17}
\end{equation}
From (\ref{eq:id16}) and (\ref{eq:id17}) and noting $A_{ij}=\pi_{ij}/p$, we obtain:
\begin{align}
d\tau\wedge dV=&\left(\tilde{w}_{pp}dp+\tilde{w}_{pS}dS\right)\wedge dV
=\tilde{w}_{pp}dp\wedge dV+(-1)^n \tilde{w}_{pS}\beta^S\wedge dt\nonumber\\
\equiv& \frac{1}{(n-1)p} \left(x_{ij} d\pi_{ij}\wedge dV-nJ dp\wedge dV\right). \label{eq:id18}
\end{align}
The last line in (\ref{eq:id18}) follows by using the alternative expression (\ref{eq:id17}) for $d\tau$. 
Solving (\ref{eq:id18}) for $dp\wedge dV$ gives:
\begin{equation}
dp\wedge dV=\frac{1}{D}\left(-(-1)^n \tilde{w}_{pS}\beta^S\wedge dt+\frac{x_{ij}d\pi_{ij}\wedge dV}{(n-1)p}\right),
\label{eq:id19}
\end{equation}
which can be reduced to the form (\ref{eq:id7}) where $D$ is defined in (\ref{eq:id8}). 

To derive (\ref{eq:id9}) we use (\ref{eq:id2}) to obtain:
\begin{equation}
d\beta_{jk}=-d x_{jk}\wedge dV. \label{eq:id20}
\end{equation}
Using the relation:
\begin{equation}
\pi_{ij} x_{ik}=pA_{ij} x_{ik}=p\tau\delta_{jk}, \label{eq:id21}
\end{equation}
we obtain:
\begin{equation}
d x_{jk}=\frac{1}{p\tau}\left[-x_{js}x_{ik}d\pi_{is}+x_{jk}\left(\tau dp+p d\tau\right)\right] 
\label{eq:id22}
\end{equation}
and
\begin{align}
d x_{jk}\wedge dV=&\frac{1}{p\tau}\left\{-x_{js}x_{ik} d\pi_{is}\wedge dV
+x_{jk}\left[\tau dp\wedge dV+pd\tau\wedge dV\right]\right\}, \nonumber\\
\equiv&\frac{1}{p\tau}\left\{-x_{js}x_{ik} d\pi_{is}\wedge dV
+x_{jk}\left[(p\tilde{w}_{pp}+\tau)dp\wedge dV+(-1)^n p \tilde{w}_{pS} \beta^S\wedge dt\right]\right\}. 
\label{eq:id23}
\end{align}
Then noting that:
\begin{equation}
d\pi_{is}\wedge dV=(-1)^{n+1} \beta_i\wedge dm^s, \label{eq:id24}
\end{equation}
(\ref{eq:id20}) and (\ref{eq:id23}) give (\ref{eq:id9}). This completes the proof.
\end{proof}

\section{Concluding Remarks}
In this paper, we obtained multi-symplectic equations for compressible, Lagrangian fluid dynamics, 
similar to the work of Bridges et al. (2005), except that our analysis applies for the 
case of non-barotropic fluids, in which the energy density $\varepsilon(\rho,S)$ per unit volume,
and the pressure $p=p(\rho,S)$ depend on both the density $\rho$ and entropy $S$. 
This case is different than the examples studied by Bridges et al. (2005), where 
explicit reference to the entropy is not mentioned.  We include 
an external gravitational potential $\Phi({\bf x})$  
 in our analysis. Bridges et al. (2005) considered several different fluid models, 
including the incompressible, shallow water equations in two Cartesian space dimensions, 
 including the effects of the Coriolis force due to  a rotating reference frame.   
They also studied the 3D incompressible fluid equations. 

The model (Section~2) and Lagrangian action principle (Section~3) for the equations were established.
 The Hamiltonian approach in which time $t$ is the evolution 
variable for the system was developed to give the canonical Hamiltonian Poisson bracket 
for the system. Section~4 gives the de Donder Weyl, multi-momentum approach to the 
fluid equations, which leads to the multi-symplectic form of the equations 
(Section~5). The pullback  
 and symplecticity conservation laws for the gas dynamic equations 
were established (Section~5) using the approach of Hydon (2005).
 The symplecticity 
 laws correspond to the  phase space conservation 
laws for multi-symplectic systems. They are 
 compatibility conditions for the pullback conservation laws. One class of symplecticity  
laws corresponds to setting the derivatives of the co-moving energy conservation law
with respect to the mass coordinates $m^i$ ($1\leq i\leq 3$) equal to zero. The second 
class of symplecticity laws correspond to vorticity conservation laws. 

Both the pullback conservation laws 
and the vorticity-symplecticity conservation laws are nonlocal as they depend on the 
nonlocal Clebsch variable $r$, where $dr/dt=-T$ where $T$ is the temperature of the gas.
These results are significant in the description of vorticity evolution in atmospheric 
dynamics where baroclinicity (nonalignment of density and pressure contour levels) 
is a source of vorticity in tornadoes and other vorticity phenomena such as Rossby waves
(e.g. Pedlosky 1979, Rhines 2003, Vallis 2006, Gao et al. (2012), Yang et al. (2014)). In atmospheric dynamics, a reference frame 
rotating with the Earth (or planetary body or star) is used, in which case,  
the Coriolis force, the centrifugal force and Darwin force 
must be taken into account (e.g. Holm 2008). The extension of the present analysis 
to Lagrangian magnetohydrodynamics will be investigated in a separate paper.  

By using Lie dragging techniques (e.g. Tur and Yanovsky (1993), Webb et al. (2014a)) 
the vorticity symplecticity laws gives the  
 2-form vorticity flux conservation law (Section~6).
The 2-form vorticity  
law applies to the component of the fluid vorticity that is independent of the entropy
gradients.  
The vorticity 2-form law also arises from Noether's second theorem, 
and the mass conservation fluid relabelling symmetry and a divergence 
symmetry of the action.   
Differential form representation of the equations using 
the Cartan-Poincar\'e $n+2$ form were obtained, where $n$ is the number of independent 
Lagrangian mass coordinates (Section~7). This approach also leads to the 
action principle for the system. 
 The Cartan-Poincar\'e form  gives a set of differential 
forms representing  the partial differential equation system (e.g. Harrison and Estabrook (1971)),
which may be used to investigate the Lie symmetries of the equations.

The main point established by the paper, is that vorticity and symplecticity 
are closely related concepts (see also Bridges et al. (2005)).  
\appendix
\section*{Appendix A}
\setcounter{section}{1}
 In this appendix, we discuss the approach of Fukugawa and Fujitani (2010) 
to the modification of the Eulerian action, needed to include the rotational 
fluid velocity component of the Clebsch potential form for ${\bf u}$ that is 
independent of the entropy gradient. They use the action:
\begin{equation}
{\cal A}_{tot}=\int_V\int_{t_{init}}^{t_{fin}} \rho \ell_m dt\ d^3x 
+\int_V\int_{t_{init}}^{t_{fin}} \rho \beta_s \frac{dA_s}{dt} dt\ d^3x, \label{eq:Aa1}
\end{equation}
where $\ell_m$ is the Lagrangian (\ref{eq:3.16ae}) (Fukugawa and Fujitani (2010)
use $\beta_s\to -\beta_s$). In (\ref{eq:Aa1}) $u^i=\partial x^i/\partial t$ has been used 
in the derivation. The variational path for Lagrange label variations 
and the equations $A_s({\bf x},t)=const_s$ and the inverse equations
 ${\bf x}={\bf x}({\bf a},t)$ 
define the Eulerian position of the path.
The $\beta_s=\beta_s({\bf a})$  and $A^s({\bf a})$ 
are functions of the Lagrange labels ${\bf a}$.
We obtain:
\begin{align}
{\cal A}_{tot}=&\int_V\int_{t_{init}}^{t_{fin}} dt\ d^3x\biggl\{\frac{1}{2}\rho u^2-\varepsilon(\rho,S)
-\rho\Phi({\bf x})+\phi\left[\deriv{\rho}{t}
+\nabla{\bf\cdot}(\rho{\bf u})\right]\nonumber\\
&+\beta\left(\deriv{S}{t}+{\bf u}{\bf\cdot}\nabla S\right)
+\tilde{\lambda}^k\left(\deriv{\mu^k}{t}+{\bf u\cdot}\nabla \mu^k\right)
+\beta_s\rho \frac{dA_s}{dt}\biggr\}. \label{eq:Aa2}
\end{align}
where
\begin{equation}
\phi=-\frac{\tilde{\zeta}}{\rho},\quad\beta=r\rho,\quad 
\tilde{\lambda}^k=\rho\lambda^k. \label{eq:Aa3}
\end{equation}
Note that  $\tilde{\zeta}=-\rho\phi$ ensures that the 
Eulerian mass continuity equation is satisfied. Variation of the action (\ref{eq:A2}), 
gives the Eulerian, Clebsch potential 
equations (e.g. Zakharov and Kuznetsov (1997)). In particular,
\begin{equation}
\frac{\delta {\cal A}_{tot}}{\delta {\bf u}}=\rho {\bf u}+\beta\nabla S
+\tilde{\lambda}^k\nabla\mu^k+\rho \beta_s\nabla A^s-\rho\nabla\phi=0, 
\label{eq:Aa4}
\end{equation}
which gives the Clebsch potential form for ${\bf u}$:
\begin{equation}
{\bf u}=\nabla\phi-\frac{\beta}{\rho}\nabla S-\frac{\tilde{\lambda}^k}{\rho}\nabla\mu^k
-\beta_s\nabla A^s. 
\label{eq:Aa5}
\end{equation}
Similarly we obtain the equations:
\begin{align}
\frac{\delta {\cal A}}{\delta\rho}=&-\left\{\frac{d\phi}{dt}-\left(\frac{1}{2} u^2-w-\Phi({\bf x})\right)\right\}=0, 
\label{eq:Aa6}\\
\frac{\delta {\cal A}}{\delta\phi}=&\deriv{\rho}{t}
+\nabla{\bf\cdot}(\rho{\bf u})=0, \label{eq:Aa7}\\
\frac{\delta A}{\delta S}=&-\left\{\deriv{\beta}{t}
+\nabla{\bf\cdot}(\beta {\bf u})+\rho T\right\}=0, 
\label{eq:Aa8}\\
\frac{\delta {\cal A}}{\delta\beta}=&\deriv{S}{t}+{\bf u\cdot}\nabla S=0, \label{eq:Aa9}\\
\frac{\delta {\cal A}}{\delta\tilde{\lambda}^k}=&\deriv{\mu^k}{t}
+{\bf u}{\bf\cdot}\nabla\mu^k=0, 
\label{eq:Aa10}\\
\frac{\delta {\cal A}}{\delta\mu^k}=&-\left(\deriv{\tilde{\lambda}^k}{t}
+\nabla{\bf\cdot}\left({\bf u}\tilde{\lambda}^k\right)\right)=0. \label{eq:Aa11}\\
\frac{\delta {\cal A}}{\delta\beta_s}=&\rho\left(\deriv{A^s}{t}+{\bf u\cdot\nabla}A^s\right)=0, 
\label{eq:Aa12}\\
\frac{\delta {\cal A}}{\delta A^s}=&-\left(\deriv{(\rho\beta_s)}{t}
+\nabla{\bf\cdot}\left(\rho {\bf u}\beta_s\right)\right)
\equiv-\rho\left(\deriv{\beta_s}{t}
+{\bf u\cdot\nabla}\beta_s\right)=0, \label{eq:Aa13}
\end{align}
Equation (\ref{eq:Aa6}) is Bernoulli's equation. In (\ref{eq:Aa13}) the variational equation 
has been simplified by using the mass continuity equation. 

\section*{Appendix B}
\setcounter{section}{2}
\setcounter{equation}{0}
In this appendix we indicate how the same formalism in Section~5, carried out for the 
case of $n=3$ independent Lagrangian mass coordinates also applies for the case $n=2$. 
In the case $n=2$ the cofactor matrix $A_{ij}$ is much simpler than in the $n=3$ case 
(see (\ref{eq:3.12})). For $n=2$ we label the dependent variables $z^k$ as indicated below:
\begin{align}
&(z^1,z^2)=(x^1,x^2),\quad (z^3,z^4)=(u^1,u^2), \nonumber\\
&(z^5,z^6,z^7,z^8)=(\pi_{11},\pi_{12},\pi_{21},\pi_{22}), \quad (z^9,z^{10})=(S,r)\nonumber\\
&(z^{11},z^{12},z^{13},z^{14})=(x_{11},x_{12},x_{21},x_{22})\label{eq:appan1}
\end{align}
The Euler Lagrange equations (\ref{eq:el17})) for the $n=2$ case reduce to:
\begin{align}
\deriv{u^1}{t}=&-x_{22}\deriv{p}{m_1}+x_{21}\deriv{p}{m_2}-\deriv{\Phi}{x_1}, \nonumber\\
\deriv{u^2}{t}=&x_{12}\deriv{p}{m^1}-x_{11}\deriv{p}{m^2}-\deriv{\Phi}{x^2}, \label{eq:appan1a}
\end{align}
where we have used (\ref{eq:3.12}) for $A_{ij}$.

The fundamental one-forms describing the system are:
\begin{align}
&\omega^0=u^1 dx^1+u^2 dx^2+r dS, \nonumber\\
&\omega^1=\pi_{11} dx^1+\pi_{21}dx^2=p x_{22} dx^1-px_{12} dx^2, \nonumber\\
&\omega^2=\pi_{12}dx^1+\pi_{22}dx^2\equiv -p x_{21}dx^1+p x_{11}dx^2, \label{eq:appan2}
\end{align}
Note that $\omega^k=\pi_{ik} dx^i$, $\pi_{ij}=p A_{ij}$ where $A_{ij}$ is the 
cofactor of $x_{ij}$, which for $n=2$ 
is given by (\ref{eq:3.12}).
Noting that
\begin{equation}
\omega^\alpha=L^\alpha_s dz^s, \quad \alpha=0,1,2, \label{eq:appan3}
\end{equation}
we identify 
\begin{align}
&L^0_1=u^1,\quad L^0_2= u^2,\quad L^0_9=r, \nonumber\\
&L^1_1=\pi_{11}=pA_{11}=p x_{22}, \quad L^1_2=\pi_{21}=p A_{21}=-p x_{12},\nonumber\\
&L^2_1=\pi_{12}=p A_{12}=-p x_{21},\quad L^2_2=\pi_{22}=p x_{11}. \label{eq:appan4}
\end{align}
Taking the exterior derivative of $\omega^0$ gives: 
\begin{equation}
d\omega^0=\frac{1}{2} {\sf K}^0_{ij} dz^i\wedge dz^j=du^1\wedge dx^1+du^2\wedge dx^2+dr\wedge dS. \label{eq:appan5}
\end{equation}
We obtain:
\begin{equation}
{\sf K}^0_{u^i,x^i}=1,\quad {\sf K}^0_{x^i,u^i}=-1,
\quad {\sf K}^0_{r,S}=1,\quad {\sf K}^0_{S,r}=-1. \label{eq:appan6}
\end{equation}
Similarly we obtain: 
\begin{equation}
d\omega^k=d\pi_{ik}\wedge dx^i=\frac{1}{2} {\sf K}^k_{\alpha\beta} dz^\alpha\wedge dz^\beta \quad k=1,2, 
\label{eq:appan7}
\end{equation}
which gives:
\begin{equation}
{\sf K}^k_{\pi_{ik},x^i}=1,\quad {\sf K}^k_{x^i,\pi_{ik}}=-1. \label{eq:appan8}
\end{equation}
Alternatively using the notation (\ref{eq:appan1}), (\ref{eq:appan6}) and (\ref{eq:appan8})  give:
\begin{align}
&{\sf K}^0_{3,1}={\sf K}^0_{4,2}={\sf K}^0_{10,9}=1, \nonumber\\
&{\sf K}^1_{5,1}={\sf K}^1_{7,2}=1,\quad {\sf K}^2_{6,1}={\sf K}^2_{8,2}=1, \label{eq:appan9}
\end{align}
and ${\sf K}^\alpha_{ij}=-{\sf K}^\alpha_{ji}$ gives the other non-zero ${\sf K}^\alpha_{ij}$. 

In the above scheme $i=1,2$ give the Euler momentum equations (\ref{eq:5.15}), $i=3,4$ give 
the Lagrangian map equations (\ref{eq:5.16}) for $\partial {\bf x}/\partial t$. 
For $5\leq i\leq 8$ we obtain (\ref{eq:5.17}) for $\partial x^p/\partial m^q$. For
$i=9$ and $i=10$ we obtain (\ref{eq:5.18}) for $dr/dt$ and $dS/dt$. For $11\leq i\leq 14$ 
we obtain equations (\ref{eq:5.18a}) for $\pi_{pq}$. 
For the case of $n$ independent Lagrangian mass
coordinates, there are $2 n^2+2n+2$ equations in total (i.e. if $n=2$ 
$2n^2+2n+2=14$, but if $n=3$, $2n^2+2n+2=26$). 

The comoving energy equation (\ref{eq:5.25}) for $n=2$ reduces to:
\begin{equation}
\frac{d}{dt}\left[\frac{1}{2} u^2+e(\tau,S)+\Phi({\bf x})\right] 
+\derv{m^1}\left[p(x_{22} u^1-x_{12} u^2)\right]
+\derv{m^2}\left[p(-x_{21}+x_{11} u^2)\right]=0, 
\label{eq:appan10}
\end{equation}
which can also be written in the form:
\begin{equation}
\frac{d}{dt}\left[\frac{1}{2} u^2+e(\tau,S)+\Phi({\bf x})\right]
+\frac{1}{\rho} \derv{x^k}\left( pu^k\right)=0. \label{eq:appan11}
\end{equation}
The pullback conservation law (\ref{eq:5.31}) has the same form for all $n>1$. 

The vorticity-symplecticity law (\ref{eq:5.62}) for $n=2$ reduces to the potential vorticity 
conservation law:
\begin{equation}
\frac{d}{dt}\left(\frac{\omega^z+\partial(r,S)/\partial(x,y)}{\rho}\right)=0, \label{eq:appan12}
\end{equation}
where 
\begin{equation}
\omega^z=\left(\deriv{u^y}{x}-\deriv{u^x}{y}\right),\quad \frac{\partial(r,S)}{\partial(x,y)}
=\left(\nabla r\times\nabla S\right){\bf\cdot}{\bf e}_z. \label{eq:appan13}
\end{equation}
The above analysis assumes planar Cartesian geometry. Analogous results clearly apply for other geometries
(e.g. spherical polar or for cylindrical polar coordinates) with an ignorable coordinate, but in these cases 
it is important to include the metric as part of the variational principle (see e.g. Bridges et al. (2010)
and Webb et al. (2014c) discuss multi-symplectic systems in which the metric plays a role). 
 
In Section~7, the formulation of variational 
principles (Section~7.1) and the differential forms $\{\beta_p\}$ representing the 
equation system (Section~7.2) are written in a general form for arbitrary $n$ (the case $n=1$ 
is not considered as it is described  in Webb (2015)). The above completes our discussion 
of the $n=2$ case.

\appendix
\section*{Appendix C}
\setcounter{section}{3}
In this appendix we indicate how the pullback conservation laws arise from Noether's first theorem, 
corresponding to translation invariance of the action $A=\int L d^3m dt$ under translations 
in $m^\beta$ where ${\bf m}=(t,m^1,m^2,m^3)$. 
From Webb et al. (2014c), the multi-symplectic form of Noether's first theorem implies 
that if the action is invariant under a Lie transformation of the form:
\begin{equation}
m^{'\alpha}=m^\alpha+\epsilon V^{\alpha},\quad z^{'s}=z^s+\epsilon V^{z^s}, \label{eq:A1}
\end{equation}
and the divergence transformation:
\begin{equation}
L'=L+\epsilon D_{\alpha} \Lambda^\alpha, \label{eq:A2}
\end{equation}
where $D_\alpha\equiv D_{m^\alpha}$ is the total partial derivative with respect to $m^\alpha$, 
then the equation system admits the conservation law:
\begin{equation}
D_{\alpha}\left(W^\alpha+V^{\alpha} L+\Lambda^\alpha\right)=0. \label{eq:A3}
\end{equation}
In the present application, 
\begin{equation}
W^\alpha=\hat{V}^{z^s} z^s_{,\alpha},\quad \hat{V}^{z^s}=V^{z^s}- V^{\alpha} D_{\alpha} z^s. 
\label{eq:A4}
\end{equation}
For the fluid relabelling symmetries with 
\begin{equation}
V^{z^s}=0,\quad \Lambda^\alpha=0,\quad V^{\alpha}=\delta^\alpha_\beta, \label{eq:A5}
\end{equation}
corresponding to translation invariance with respect to $m^\beta$, the conservation 
law (\ref{eq:A3}) reduces to the pullback conservation law (\ref{eq:5.19}).  
\appendix
\section*{Appendix D}
\setcounter{section}{4}

In this appendix we verify the conservation law (\ref{eq:5.35}) by using a Clebsch variable 
Eulerian variational principle. 
In general, (\ref{eq:5.35}) can be thought of as a nonlocal conservation law which is related 
to a Clebsch potential description of fluid mechanics (e.g. Zakharov and Kuznetsov (1997), 
Morrison (1998)), in which there is an external gravitational field described by the gravitational 
potential $\Phi({\bf x})$. In this approach, the fluid equations arise from the constrained 
variational principle, in which the action is given by:
\begin{align}
A=&\int\biggl\{\left(\frac{1}{2}\rho u^2-\varepsilon(\rho,S)-\rho\Phi({\bf x})\right)
+\phi\left(\deriv{\rho}{t}+\nabla{\bf\cdot}(\rho {\bf u})\right)\nonumber\\
&+\beta\left(\deriv{S}{t}+{\bf u}{\bf\cdot}\nabla S\right)
+\lambda\left(\deriv{\mu}{t}+{\bf u}{\bf\cdot}\nabla\mu\right)\biggr\}\ d^3xdt,
\label{eq:5.36}
\end{align}
By setting $\delta A/\delta u^i=0$ we obtain the Clebsch potential representation for the fluid velocity
in the form:
\begin{equation}
{\bf u}=\nabla\phi-r\nabla S-\tilde{\lambda}\nabla\mu\quad \hbox{where}\quad r=\frac{\beta}{\rho}\quad\hbox{and}\quad \tilde{\lambda}=\frac{\lambda}{\rho}.  \label{eq:5.37}
\end{equation}
The quantity $\mu$ is  associated with the circulation of ${\bf u}$ in Kelvin's theorem. 
 The Lagrange multipliers $\phi$, $\beta$ and $\lambda$ 
ensure that the mass continuity equation, the entropy advection equation $dS/dt=0$ 
and the Lin constraint equation (Kelvin's theorem), $d\mu/dt=0$ are satisfied. 
By varying the action (\ref{eq:5.36}) we obtain:
\begin{equation}
\frac{\delta A}{\delta\phi}=\deriv{\rho}{t}+\nabla{\bf\cdot}(\rho {\bf u})=0, 
\quad \frac{\delta A}{\delta\beta}=\frac{dS}{dt}=0, \quad
\frac{\delta A}{\delta \lambda}= \frac{d\mu}{dt}=0. \label{eq:5.38}
\end{equation}
The condition $\delta A/\delta\rho=0$, gives Bernoulli's equation:
\begin{equation}
\frac{d\phi}{dt}+w+\Phi-\frac{1}{2}u^2=0. \label{eq:5.39}
\end{equation}
The variational equations $\delta A/\delta S=0$ and $\delta A/\delta\mu=0$ imply:
\begin{equation}
\frac{\delta A}{\delta S}=-\left(\deriv{\beta}{t}+\nabla{\bf\cdot}(\rho {\bf u})+\rho T\right)=0,
\quad \frac{\delta A}{\delta\mu}=-\left(\deriv{\lambda}{t}+\nabla{\bf\cdot} (\lambda {\bf u})\right)=0, 
\label{eq:5.40}
\end{equation}
Note that the variables $r=\beta/\rho$ and $\tilde\lambda=\lambda/\rho$ satisfy the equations:
\begin{equation}
\frac{dr}{dt}+T=0,\quad \frac{d\tilde{\lambda}}{dt}=0. \label{eq:5.41}
\end{equation}
Clebsch variables can be used to cast the fluid dynamics equations in a canonical Hamiltonian form 
(e.g. Zakharov and Kuznetsov (1997), Morrison (1998), Webb et al (2014c)). The variables $(\rho,\phi)$,
$(S,\beta)$ and $(\mu,\lambda)$ are canonically conjugate variables in this development. 
Using (\ref{eq:5.37})-(\ref{eq:5.41}), the conservation law (\ref{eq:5.35}) can be reduced to the form:
\begin{equation}
\derv{t}\left[\rho\left(\deriv{\phi}{m^i}-\tilde{\lambda}\deriv{\mu}{m^i}\right)\right]
+\derv{x^j}\left[\rho u^j\left(\deriv{\phi}{m^i}-\tilde{\lambda}\deriv{\mu}{m^i}\right)
-\rho x_{ji}\frac{d{\phi}}{dt}\right]=0. \label{eq:5.42}
\end{equation}
Equation (\ref{eq:5.42}) further reduces to:
\begin{equation}
\rho\left[\frac{d}{dt}\left(\deriv{\phi}{m^i}\right)-\derv{m^i}\left(\frac{d\phi}{dt}\right)\right]
-\tilde{\lambda} \deriv{\mu}{m^i}\left[\deriv{\rho}{t}+\nabla{\bf\cdot}(\rho {\bf u})\right]
-\rho\tilde{\lambda}\derv{m^i}\left(\frac{d\mu}{dt}\right)
-\rho \deriv{\mu}{m^i}\frac{d\tilde\lambda}{dt}=0. \label{eq:5.43}
\end{equation}
By using the Clebsch equations (\ref{eq:5.37})-(\ref{eq:5.41}) 
one can verify that the left hand-side of (\ref{eq:5.42}) is zero, which verifies (\ref{eq:5.35}). 
The first term in square braces vanishes because $d/dt$ and $\partial/\partial m^i$ commute. 

\ack
GMW acknowledges stimulating discussions of multi-symplectic systems 
 and Noether's theorems  with Darryl Holm. We acknowledge very useful referee reports 
which helped improve the manuscript. GMW is supported in part by NASA grant NNX15A165G. 
SCA is supported in part by an NSERC grant.

\section*{References}
\begin{harvard}


\item[]
Anco, S.C. and Dar, A. 2009, Classification of conservation laws of compressible 
isentropic fluid flows in $n>1$ spatial dimensions, {\it Proc. Roy. Soc. A}, 
{\bf 465}, 2461-2488, doi:10.1098/rspa.2009.0072.

\item[]
Anco, S. C. and Dar, A. 2010, Conservation laws of inviscid non-isentropic compressible fluid flow in 
$n>1$ spatial dimensions, {\it Proc. Roy. Soc. A}, {\bf 466}, 2605-2632, doi:10.1098/rspa.2009.0579.

\item[]
Anco, S.C. Dar, A. and Tufail, N. 2015, Conserved integrals for inviscid compressible 
fluid flow in Riemannian manifolds, {\it Proc. Roy. Soc. A} {\bf 471}, 20150223, September issue. 



 
\bibitem[{\it Bluman and Kumei}(1989)]{Bluman89}
Bluman, G.W. and Kumei, S. 1989, Symmetries and Differential Equations,
Springer Verlag, New York.



\item[]
Bridges, T. J., Hydon, P.E. and Reich, S. 2005, Vorticity and
symplecticity in Lagrangian fluid dynamics,
 {\em J. Phys. A}, {\bf 38}, 1403-1418.

\item[]
Bridges, T.J., Hydon, P.E. and Lawson, J.K. 2010, multi-symplectic structures and the variational bi-complex,
{\it Math. Proc. Camb. Phil. Soc.}, {\bf 148}, 159-178.


\item[]
Cantrijn, A., Ibort, A., and de Le\'on M. 1999, On the geometry of multisymplectic manifolds, 
{\it J. Austral. Math. Soc.} (Ser. A), {\bf 66}, 303-330. 

\item[]
Cantrijn, F. and Vankerschaver, J. 2007, The Skinner-Rusk approach to vakonomic and nonholonomic
field theories, in {\it Differential Geometric Methods in Mechanics and Field Theory}, 
Academic Press, 1-14.

\item[]
Carinena, J.F., Crampin, M. and Ibort, L.A. 1991, On the multi-symplectic formalism for first order field theories,
{\it Differential geometry and its applications}, {\bf 1}, 345-374 (Amsterdam: North Holland). 




\item[]
Cheviakov, A. F. 2014, Conservation properties and potential systems of vorticity type-equations,
{\it J. Math. Phys.}, {\bf 55}, 033508 (16pp) (0022-2488/2014/55(3)/033508/16). 

\item[]
Cheviakov, A.F. and Oberlack, M. 2014, Generalized Ertel's theorem and infinite heirarchies 
of conserved quantities for three-dimensional time-dependent Euler and Navier-Stokes 
equations, {\it J. Fluid Mech.}, {\bf 760}, pp. 368-386.


\item[]
 Cotter, C.J., Holm, D.D. and Hydon, P.E. 2007, 
Multi-symplectic formulation of fluid dynamics using the inverse map, 
{\it Proc. Roy. Soc. London, A}, {\bf 463}, 2617-2687.

\item[]
de Donder Th. 1930, Th\'eorie Invariantive du Calcul des Variations, (Gauthier Villars, Paris 1930).

\item[]
Fels, M. and Olver, P.J. 1998, Moving coframes I, {\it Acta Appl. Math.}, {\bf 51}, 161-312.

\item[]
Fels, M. and Olver, P.J. 1999, Moving coframes II, {\it Acta Appl. Math.}, {\bf 55}, 127-208.

\item[]
Forger, M., Paufler, C. and R\"omer, H. 2003, A general construction of Poisson brackets 
on exact multisymplectic manifolds, {\it Reports on Math. Phys.}, {\bf 51}, 187-195.

\item[]
Forger, M. and Romero, S.V. 2005, Covariant Poisson brackets in geometric field theory, 
{\it Commun. Math. Phys.}, {\bf 256}, 375-410, doi 10.1007/s00220-005-1287-8.

\item[]
Forger, M. and Gomes, L., 2013, Multi-symplectic and polysymplectic structures on fiber bundles,
{\it Rev. Math. Phys.}, {\bf 25}, 1350018 (47 pp.).

\item[]
Forger, M. and Salles, M. O. 2015, On covariant Poisson brackets in classical field theory, 
{\it J. Math. Phys.}, {\bf 56}, 102901 (26pp.).

\item[]
Fukugawa, H. and Fujitani, Y. 2010, Clebsch potentials in the variational principle for 
a perfect  fluid, {\it Prog. Theoret. Physics}, {\bf 124}, 517-531.

\bibitem[{\it Gao et al.}(2012)]{Gao12}
Gao, S., Xu, P., Ran, L. and Li, N. 2012, On the generalized, Ertel-Rossby Invariant, {\it Adv. in Atmos. Sci.},
{\bf 29}, No. 4, 690-694.

\item[]
Goncalves, T. M. N. and Mansfield, E.L. 2012, On moving frames and Noether's conservation laws,
 {\it Studies in Appl. Math.}, {\bf 128}, 1-29. 

\item[]
Goncalves, T. M. N. and Mansfield, E.L. 2014, Moving frames and Noether's conservation laws-the general case,
http://arxiv.org/abs/1306.0847v3.

\item[]
Gotay, M. J. 1991a, A multi-symplectic framework for classical field theory and the Calculus of Variations I: 
covariant Hamiltonian formalism, in M. Francaviglia ed. Mechanics, 
Analysis and Geometry: 200 years after Lagrange (North Holland, Amsterdam 1991) 203-235.

\item[]
Gotay, M. J. 1991b, A multisymplectic framework for classical field theory and the Calculus of Variations II: 
space+time decomposition, {\it Differential Geometry and its Applications}, 
{\bf 1}, (1991), 375-390 (Amsterdam: North Holland).

\item[]
Gotay, M. J., Isenberg, J., Marsden, J.E., Montgomery, 2004a, (with J. Sniatycki and P.B. Yasskin collaborators), 
Momentum Maps and Classical Fields, Part I: Covariant Field Theory, arxiv:physics/9801019v2[math-ph], August 2004.

\item[]
Gotay, M. J., Isenberg, J., Marsden, J. E. 2004b, (with R. Montgomery, J. Sniatycki and P.B. Yasskin collaborators) 
Momentum Maps and Classical Fields, Part II: Canonical Analysis of Field Theories, arXiv:math-ph/0411032v1, 9 Nov.
2004.

\item[]
Harrison, B. K. and Estabrook, F. B. 1971, Geometric approach to invariance groups and solution of partial 
differentail systems, {\it J. Math. Phys.}, {\bf 12}, 653-66.

\item[]
Holm, D.D., 2008, Geometric Mechanics, Part II: Rotating, Translating and Rolling, 
Imperial College Press : London, U.K., distributed by World Scientific, Publ. Co., Pte., Ltd.

\item[]
Holm, D.D. and Kupershmidt, B.A. 1983a, Poisson brackets and Clebsch representations for 
magnetohydrodynamics, multi-fluid plasmas and elasticity, {\it Physica D}, {\bf 6D}, 347-363.

\item[]
Holm, D.D. and Kupershmidt, B.A. 1983b, Noncanonical Hamiltonian formulation of ideal 
magnetohydrodynamics, 
 {\it Physica D}, {\bf 7D}, 330-333. 

\item[]
Holm, D. D., Kupershmidt, B.A. and Levermore, C.D. 1983, Canonical maps between Poisson brackets in
Eulerian and Lagrangian descriptions of continuum mechanics, {\it Phys. Lett. A}, {\bf 98A},
number 8,9, 389-395.

\item[]
Holm, D. D., Marsden, J.E. and Ratiu, T.S. 1998, The Euler-Lagrange
equations and semiproducts with application to continuum theories,
{\it Advances in Math.}, {\bf 137}, (1), 1-81.

\item[]
Hydon, P. E. 2005, Multisymplectic conservation laws for differential
and differential-difference equations, {\em Proc. Roy. Soc. A}, {\bf 461},
1627-1637.

\item[]
Hydon, P. E. and Mansfield, E. L. 2011, Extensions of Noether's second theorem: from continuous 
to discrete systems, {\it Proc. Roy. Soc. A}, {\bf 467} (2135), 3206-3221.

\item[]
Jackiw, R. 2002, Lectures on Fluid Dynamics, Springer, Berlin.

\item[]
Kambe, T. 2007, Gauge principle and variational formulation for ideal fluids with reference to 
translation symmetry, {\it Fluids Dyn. Res.}, {\bf 39}, 98-120.

\item[]
Kambe, T. 2008, Variational formulation for ideal fluids fluid flows according to gauge principle,
 {\it Fluids Dyn. Res.}, {\bf 40}, 399-426.

\item[]
Kanatchikov, I. V. 1993, On the canonical structure of the de-Donder-Weyl covariant Hamiltonian formulation of field
theory I. Graded Poisson brackets and equations of motion, preprint arxiv:hep-th/9312162v1, 20th Dec. 1993.

\item[]
Kanatchikov, I. V. 1997, On field theoretic generalizations of a Poisson algebra,
{\it Rep. Math. Phys.}, {\bf 40}, (1997), 225-234, hep-th/9710067.

\item[]
Kanatchikov, I. V. 1998, Canonical structure of classical field theory in the polymomentum phase space,
{\it Rep. Math. Phys.}, {\bf 41}, (1998), 49-90, hep-th/9709229.

\item[]
Kamchatnov, A.M. 1982, Topological soliton in magnetohydrodynamics, 
{\it Sov. JETP}, {\bf 82}, 117-124.

\item[]
Kelbin, O., Cheviakov, A.F. and Oberlack, M. 2013, 
New conservation laws of helically symmetric, plane and rotationally symmetric viscous 
and inviscid flows, {\it J. Fluid Mech.}, {\bf 721}, 340-366.

\item[]
Llibre, J., Ramirez, R., and Sadovslaia, N. 2014, A new approach to vakonomic mechanics, 
, {\it nonlin. Dynamics}, {\bf 78}, no. 3, 2219-2247, arXiv: 1402.5827v1[math-ph]

\item[]
Mansfield, E. L. 2010, {\it A Practical Guide to the Invariant Calculus}, Cambridge University Press, 2010.

\item[]
Marsden, J. E., Montgomery, R., Morrison, P.J., and Thompson, W.B. 1986, 
Covariant Poisson brackets for classical fields, {\it Annals of Physics}, {\bf 169}, 29-47.

\item[]
Marsden, J.E. and Shkoller, S. 1999, Multi-symplectic geometry, covariant Hamiltonians and water waves, 
{\it Math. Proc. Camb. Phil. Soc.}, {\bf 125}, 553-575.

\item[]
Morrison, P. J. 1982, Poisson brackets for fluids and plasmas, {\it Mathematical Methods in Hydrodynamics and Integrability of Dynamical systems}, {\it AIP Proc. conf.}, {\bf 88}, pp 13-46, 
Eds. M. Tabor and Y. M. Treve.

\item[]
Morrison, P.J. 1998, Hamiltonian description of the ideal fluid,
 {\it Rev. Mod. Phys.},
{\bf 70}, (2), 467-521.

\item[]
Morrison P. J. and Greene J. M. 1980 Noncanonical Hamiltonian density formulation of 
hydrodynamics
and ideal magnetohydrodynamics {\it Phys. Rev. Lett.}, {\bf 45}, 790–4.

\item[]
Morrison P. J. and Greene J. M. 1982 Noncanonical Hamiltonian density formulation of hydrodynamics
and ideal magnetohydrodynamics {\it Phys. Rev. Lett.} {\bf 48} 569 (erratum).

\item[]
Newcomb, W.A. 1962, Lagrangian and Hamiltonian methods in magnetohydrodynamics, 
in Proceedings on Plasma Physics and controlled nuclear fusion, 1961, Salzburg Austria 
{\it IAE Nuclear Fusion Supplement} Part 2, 451-463.

\item[]
Nutku, Y. 1984, Hamiltonian formulation of the KdV equation, {\it J. Math. Phys.}, {\bf 25} (6), 2007-8.

\item[]
Padhye, N. and Morrison, P.J. 1996a, Fluid element relabeling symmetry,
{\it Phys. Lett.}, A, {\bf 219}, 287-292.

\item[]
Padhye, N. and Morrison, P.J. 1996b, Relabeling symmetries in hydrodynamics and
magnetohydrodynamics, {\it Plasma Physics Reports}, {\bf 22},(10), 869-877.

\item[]
Padhye, N.S. 1998, Topics in Lagrangian and Hamiltonian fluid dynamics: relabeling symmetry 
and ion acoustic wave stability, {\it Ph. D. Dissertation}, University of Texas at Austin. 

\item[]
Pedlosky, J. 1979, {\it Geophysical fluid dynamics}, (Springer Verlag: New York).

\item[]
Rhines, P. B. 2003, Rossby Waves. In {\it Encyclopedia of Atmospheric Sciences}, Edited by 
J.R. Holton, J. A. Curry, J. A. Pyle, pp1-37, 2003 (Academic Press: Oxford). 

\item[]
Roman-Roy, N. 2009, Multi-symplectic Lagrangian and Hamiltonian formalisms of classical field theories,
{\it SIGMA}, {\bf 5}, 100 (25pp).

\item[]
Russo, G. and Smereka, P. 1999, Impulse formulation of the Euler equations: 
general properties and numerical  methods, {\it j. Fluid Mech.}, {\bf 391}, pp. 189-209.

\item[]
Schutz, B. 1980, Geometrical methods of mathematical physics, Cambridge Univ. Press, Cambridge, 
U.K.
 
\item[]
Semenov, V. S., Korovinski, D.B. and Biernat, H.K. 2002, Euler potentials for the MHD 
Kamchatnov-Hopf soliton solution, {\it Nonl. Proc. Geophys.}, {\bf 9}, 347-354. 

\item[]
Skinner, R. and Rusk, R. 1983a, Generalized Hamiltonian formulation I. Formulation on 
the dierct sum of TQ and its dual, {\it J. Math. Phys.}, {\bf 24}, (11), 2589-2594.

\item[]
Skinner, R. and Rusk, R. 1983b, Generalized Hamiltonian formulation II. Gauge transformations, 
{\it J. Math. Phys.}, {\bf 24}, (11), 2595-2601.


\item[]
Tur, A. V. and Yanovsky, V.V. 1993, Invariants for dissipationless hydrodynamic media, 
{\it J. Fluid. Mech.}, {\bf 248}, Cambridge Univ. Press, 67-106.

\item[]
Urbantke, H. K. 2003, The Hopf fibration-seven times in physics, {\it J. Geom. and Phys.}, 
{\bf 46}, 125-150. 


\item[]
Vallis, G. K. 2006, {\it Atmospheric and Oceanic Fluid Dynamics: Fundamentals and Large Scale Circulation},
(Cambridge University Press: Cambridge UK). 

\item[]
Webb, G. M. 2015, Multi-symplectic, Lagrangian, one dimensional gas dynamics, i
{\it J. Math. Phys.}
{\bf 56}, 053101, http://dx.doi.org/10.1063/1.4919669, 
preprint at http://arxiv/org/abs/1408.4028.

\item[]
Webb, G.M., Zank, G.P., Kaghashvili, E. Kh. and Ratkiewicz, R.E. 2005, 
Magnetohydrodynamic waves in non-uniform 
flows II: stress energy tensors, conservation laws and Lie symmetries, {\it J. Plasma Phys.}, 
{\bf 71}, 811-857, doi:10.1017/s))223778050003740. 


\item[]
Webb, G. M., Dasgupta, B., McKenzie, J.F., Hu, Q., and Zank, G.P. 2014a, Local and nonlocal advected invariants 
and helicities in magnetohydrodynamics and gas dynamics, I, Lie dragging approach, {\it J. Phys. A Math. and Theoret.},
{\bf 47}, 095501 (33pp), doi:10.1088/1751-8113/49/095501, preprint at http://arxiv.org/abs/1307.1105

\item[]
Webb, G. M., Dasgupta, B., McKenzie, J.F., Hu, Q., and Zank, G.P. 2014b, Local and nonlocal advected invariants 
and helicities in magnetohydrodynamics and gas dynamics, II, Noether's theorems and Casimirs, 
{\it J. Phys. A Math. and Theoret.},
{\bf 47}, 095502 (31pp), doi:10.1088/1751-8113/49/095502, preprint at http://arxiv.org/abs/1307.1038

\item[]
Webb, G. M., McKenzie, J.F. and Zank, G.P. 2014c, Multi-symplectic magnetohydrodynamics, 
{\it J. Plasma Phys.}, {\bf 80}, pt. 5, p 707-743,
 doi:10.1017/S0022377814000257, preprint:  http://arxiv/org/abs/1312.4890.

\item[]
Webb, G. M. and Mace, R. L. 2015, 
Potential vorticity in magnetohydrodynamics, {\it J. Plasma Phys.}, {\bf 81}, pp. 18,
905810115, doi:10.1017/S0022377814000658. 
preprint: http://arxiv/org/abs/1403.3133.

\item[]
Webb, G. M., McKenzie, J.F. and Zank, G.P. 2015, Multi-symplectic magnetohydrodynamics: II, 
Addendum and Erratum, {\it J. Plasma Phys.}, {\bf 81}, 90581060,  
preprint: http://arxiv.org/abs/1506.08322v1.

\item[]
Weyl, H. 1935, Geodesic fields in the Calculus of Variation for multiple integrals, 
{\it Annals of Math.}, {\bf 36}, No. 3 (July 1935), pp. 607-629, 
http:/www.jstor.org/stable/1968645

\item[]
Yang, Shaui and Gao, Shou-Ting, 2014, Derivation of baroclinic Ertel-Rossby 
invariant-based thermally coupled
vorticity equation in moist flow, {\it Chin. Phys. B}, {\bf 23}, No. 11, 119201 (6pp).

\item[]
Yoshida, Z. 2009, Clebsch parameterization: basic properties and remarks on its applications, 
{\it J. Math. Phys.}, {\bf 50}, 113101.

\item[]
Zakharov, V.E. and Kuznetsov 1997, Reviews of topical problems: Hamiltonian formalism for nonlinear waves, 
{\it Uspekhi}, {\bf 40}, 1087-116.
 
\end{harvard}

\end{document}